
\documentclass[reqno]{amsart}%
\usepackage{amsfonts}
\usepackage{verbatim}
\usepackage{xcolor}
\usepackage{amsmath}
\usepackage{amssymb}
\usepackage{graphicx}
\usepackage{accents}%
\setcounter{MaxMatrixCols}{30}
\providecommand{\U}[1]{\protect\rule{.1in}{.1in}}
\newtheorem{theorem}{Theorem}[section]

\newtheorem{definition}[theorem]{Definition}
\newtheorem{proposition}[theorem]{Proposition}
\newtheorem{remark}[theorem]{Remark}
\newtheorem{lemma}[theorem]{Lemma}

\numberwithin{equation}{section}
\begin{document}
\title[KdV equation]{A continuous analog of the binary Darboux transformation for the Korteweg-de
Vries equation\\ }
\dedicatory{This paper is dedicated to Vladimir Marchenko on the occasion of his hundredth
birthday. The author has profoundly been influenced by his groundbreaking
research on inverse problems and completely integrable systems.}\author{Alexei Rybkin}
\address{Department of Mathematics and Statistics, University of Alaska Fairbanks, PO
Box 756660, Fairbanks, AK 99775, USA}
\email{arybkin@alaska.edu}
\thanks{The author is supported in part by the NSF grant DMS-2009980. The author would
like to thank the Isaac Newton Institute for Mathematical Sciences for support
and hospitality during the programme Dispersive Hydrodynamics when work on
this paper was undertaken (EPSRC Grant Number EP/R014604/1).}
\date{March, 2023}
\subjclass{34L25, 37K15, 47B35}
\keywords{KdV equation, Darboux transformation, Riemann-Hilbert problem.}

\begin{abstract}
In the KdV context we put forward a continuous version of the binary Darboux
transformation (aka the double commutation method). Our approach is based on
the Riemann-Hilbert problem and yields a new explicit formula for perturbation
of the negative spectrum of a wide class of step-type potentials without
changing the rest of the scattering data. This extends the previously known
formulas for inserting/removing finitely many bound states to arbitrary sets
of negative spectrum of arbitrary nature. In the KdV context our method offers
same benefits as the classical binary Darboux transformation does.

\end{abstract}
\maketitle

\section{Introduction}

As the title suggests, we are concerned with the \emph{binary Darboux
transformation} in the context of the \emph{Korteweg-de Vries equation} (KdV).
The literature on the Darboux transformation goes back to the nineteenth
century and is immensely extensive and diverse. We only review some of what is
directly related to our paper and where the interested reader may find further
references. The very term appears to be introduced in 1985 by
Babich-Matveev-Salle \cite{Babichetal85} in the context of the Toda lattice
and then extended to the Kadomtsev-Petviashvili equation (see the influential
1991 book \cite{MatveevSalle91} by Matveev and Salle). The name owes to the
fact that the \emph{single Darboux transformation} (also know as Crum,
elementary, or standard) is applied twice: to the associated AKNS system and
its conjugate. We also refer the interested reader to Ling et al
\cite{LingetalNON2015} and the extensive literature cited therein and to
Cieslinski \cite{Ci2009} where the binary Darboux transformation is revisited
from a different point of view. Note that the binary Darboux transformation
was originally introduced to generate explicit solutions to integrable systems
by algebraic means (c.f. \cite{GuetalBook05,MatveevSalle91}) and the
\emph{inverse scattering transform} (IST) was not directly used (while our
approach is based on the IST).

It is interesting to note that what in the Darboux transformation community is
referred to as the binary Darboux transformation is, in fact, also known in
spectral theory of Sturm-Liouville operators as the \emph{double}
\emph{commutation method }introduced in 1951 by Gelfand and Levitan in their
seminal paper \cite{GelfandLevitan55} in the context of their ground breaking
study of the inverse spectral problem for Sturm-Liouville operators. The name
seems however to be cast by Deift \cite{DeiftDuke78} in 1978 as the method
rests on applying twice a commutation formula from operator theory. Note that
Gelfand and Levitan did not use commutation arguments but relied on
\emph{transformation operator} techniques (see also the book \cite[Section
6.6]{Levitan87} by Levitan). The full treatment of the double commutation
method is given by Gesztesy et al \cite{GesztesyetalTAMS91}%
-\cite{GesztTeschl96} in the 1990s (see also the extensive literature cited
therein). The double commutation method was introduced to study the effect of
inserting/removing eigenvalues in spectral gaps on spectral properties of the
underlying 1D Schr\"{o}dinger operators while the binary Darboux
transformation, as we have mentioned, has been primarily a tool to produce
explicit solutions. This is likely a reason why we could not find the
literature where the two would be explicitly linked\footnote{E.g. the book
\cite{GuetalBook05} pays much of attention to binary Darboux transformations
but double commutation is not mentioned. The recent \cite{Sahknovich17}
briefly mentiones \cite{GuetalBook05} and \cite{GesztTeschl96} but without
discussing connections.}.

The main feature of the Darboux transformation (both, single and binary) is
that it allows us to add or remove finitely many eigenvalues of the underlying
system without altering the rest of the spectrum, which offers a powerful tool
to study completely integrable systems. We are concerned with altering certain
types of continuous spectrum too. More specifically, in the context of the KdV
equation we introduce a broad class of the initial profiles, referred below to
as \emph{step-type} (see Definition \ref{step-type}), that admits an extension
of the binary Darboux transformation allowing us to perturb (in particular,
add or remove) the negative spectrum in nearly unrestricted way without
affecting the rest of the scattering data. This class includes, as a very
particular case, initial profiles approaching different constant values at
$\pm\infty$ that recently drew renewed interest (see e.g. the recent Egorova
et al \cite{Egorova2022} and Girotti et al \cite{Grava21} and the extensive
literature cited therein). We start out from the Riemann-Hilbert version of
the binary Darboux transformation put forward in our recent \cite{Ryb21} and
show that our construction is not really restricted to isolated negative
eigenvalues and can be readily extended to the negative spectrum of arbitrary
nature that step-type potentials can produce. As is well-known, Darboux
transformations are particularly convenient in studying soliton propagation
over various backgrounds and it is reasonable to expect that our approach can
be used to the same effect for whole intervals of continuous negative
spectrum. This becomes particularly relevant in the light of the recent spike
of interest to soliton gases (see section \ref{main results}).

The paper is organized as follows. In Section \ref{Notation} we outline our
notational agreements. In Section \ref{background info} we go over some basics
necessary to fix our notation and terminology as well as some of our previous
results needed in what follows. Section \ref{main results} is devoted to the
statement and discussions of our main result, Theorem \ref{MainThm}. In
Section \ref{MRHP} we rewrite the classical formulation of the Riemann-Hilbert
problem approach to the KdV equation in the form convenient for our
generalizations. In Section \ref{QARHP} we state our Riemann-Hilbert problem
with a jump matrix which entries are distributions. In Section \ref{cont DT}
we state and prove a Riemann-Hilbert version of the continuous binary Darboux
transformation. Theorem \ref{Darboux for RHP} proven therein is essentially
equivalent to Theorem \ref{MainThm} but we hope it may present an independent
interest, especially to the reader who prefers the Riemann-Hilbert problem
framework. Section \ref{discrete oper} is auxiliary and devoted to
discretization of our main integral operator which our limiting arguments are
based on. Section \ref{log-det} is another auxiliary section where a
convenient formula is derived for the proof of the main theorem. In Section
\ref{proof of main theorem} we finally prove Theorem \ref{MainThm} which
essentially amounts to combining the ingredients prepared in the previous
sections. In Section \ref{sect on reflectionless} we consider explicit
examples. The first one is a new derivation of the well-known formula for pure
soliton solutions and the second one is an explicit construction of
reflectionless step-type potential. Appendix is devoted to some more auxiliary statements.

\section{Notation\label{Notation}}

Through the paper, we make the following notational agreement. The bar denotes
the complex conjugate. Prime $^{\prime}$ means the derivative (perhaps
generalized) in the main variable (typically spatial or in spectral
parameter). The temporal variable $t$ appears only as a parameter and is
frequently suppressed. $W\{f,g\}=fg^{\prime}-f^{\prime}g$ is the Wronskian
with obvious interpretation if one of $f$ is a vector.

Given a non-negative finite Borel measure $\mu$ on the real line,
$L^{2}\left(  \mathrm{d}\mu\right)  $ is the real Hilbert space with the inner
product $\left\langle f,g\right\rangle =\int f\left(  x\right)  g\left(
x\right)  $\textrm{d}$\mu\left(  x\right)  $, where the integral is taken over
the support $\operatorname*{Supp}\mu$. In particular, d$\mu=$d$x$, the
Lebesgue measure on the whole real line, then we conveniently abbreviate
$L^{2}\left(  \mathrm{d}x\right)  =L^{2}$. We apply the same agreement to
other Lebesgue spaces $L^{p}$. Thus, we conveniently write%
\[
\int\cdot\ \mathrm{d}\mu=\int_{\operatorname*{Supp}\mu}\cdot\ \mathrm{d}%
\mu,\int_{-\infty}^{\infty}=\int.\ \
\]
If a function $f\left(  x\right)  $ is defined on a set larger than
$\operatorname*{Supp}\mu$ we write $f\in L^{p}\left(  \mathrm{d}\mu\right)  $
if its restriction to $\operatorname*{Supp}\mu$ is in $L^{p}\left(
\mathrm{d}\mu\right)  $. We adopt the following notation common in scattering
theory:%
\[
L^{1}\left(  \left(  1+\left\vert x\right\vert \right)  ^{N}\mathrm{d}%
x\right)  =L_{N}^{1},\ \ \ N\geq1.
\]
$\chi_{S}$ is the characteristic function of a set $S$. In particular,
$\chi_{\pm}=\chi_{\mathbb{R}_{\pm}}$ for the Heaviside functions. We write%
\[
f\in L^{p}\left(  \pm\infty\right)  \text{ if }f\chi_{\left(  a,\pm
\infty\right)  }\in L^{p}\text{ for any finite }a\text{.}%
\]
The same convention is used for $L_{N}^{p}\left(  \pm\infty\right)  $. If $S$
is not a subset of $\operatorname*{Supp}\mu$ we write $\mu\left(  S\right)
=\mu\left(  S\cap\operatorname*{Supp}\mu\right)  $.

We say that a sequence $f_{n}\left(  x\right)  $ converges in the weak* sense
to $f\left(  x\right)  $ on $S\subseteq\mathbb{R}$ if%
\[
\lim_{n\rightarrow\infty}\int_{S}\left(  f\left(  x\right)  -f_{n}\left(
x\right)  \right)  \varphi\left(  x\right)  \mathrm{d}x=0
\]
for any $\varphi\in C_{c}^{\infty}\left(  S\right)  $ (i.e. an infinitely
differentiable compactly supported function on $S$). We will refer to
$\mathrm{d}\mu\left(  x\right)  /\mathrm{d}x$ as generalized
density/derivative or distributional derivative. It is defined by%
\[
\int\varphi\left(  x\right)  \frac{\mathrm{d}\mu\left(  x\right)  }%
{\mathrm{d}x}\mathrm{d}x=\int\varphi\left(  x\right)  \mathrm{d}\mu\left(
x\right)
\]
for any $\varphi\in C_{c}^{\infty}$. In particular, $\delta_{a}\left(
x\right)  =\mathrm{d}\chi_{\left(  a,+\infty\right)  }\left(  x\right)
/\mathrm{d}x$ is the Dirac delta-function supported at $a$. We call a measure
$\mu$ symmetric if $\mathrm{d}\mu\left(  x\right)  =\mathrm{d}\mu\left(
\left\vert x\right\vert \right)  $.

Square matrices are denoted by boldface capital letters, except for the
identity matrix $I$. We use boldface lowercase letters for vectorial (row or
column) quantities. Except for $\mathbb{R}$, $\mathbb{C}$, blackboard bold
letters denote operators. Less common notations and conventions will be
introduced later.

\section{Brief Review of the IST Method\label{background info}}

We are concerned with the Cauchy problem for the KdV equation%
\begin{equation}
\partial_{t}u-6u^{\prime}+u^{\prime\prime\prime}=0,\ \ \ x,t\in\mathbb{R},
\label{KdV}%
\end{equation}%
\begin{equation}
u\left(  x,0\right)  =q\left(  x\right)  , \label{KdVID}%
\end{equation}
the first nonlinear evolution PDE solved in 1967 by Gardner, Greene, Kruskal,
and Miura \cite{GGKM67} by the\emph{\ }inverse scattering transform (IST). For
the reader's convenience and to fix our notation we review the necessary
material following \cite{AC91,MarchBook2011,NPZ}. The IST method consists, as
the standard Fourier transform method, of\ three steps:

\textbf{Step 1. }(direct transform)%
\[
q\left(  x\right)  \longrightarrow S_{q},
\]
where $S_{q}$ is a new set of variables which turns (\ref{KdV}) into a simple
first order linear ODE for $S_{q}(t)$ with the initial condition
$S_{q}(0)=S_{q}$.

\textbf{Step 2. }(time evolution)%
\[
S_{q}\longrightarrow S_{q}\left(  t\right)  .
\]

\textbf{Step 3. }(inverse transform)%
\[
S_{q}\left(  t\right)  \longrightarrow q(x,t).
\]

\subsection{Classical short-range IST}

Step 1. Suppose that the initial condition $q$ for (\ref{KdVID}) is real and
rapidly decaying. This means that the solution $q\left(  x,t\right)  $ to
(\ref{KdV})-(\ref{KdVID}) is subject to the decay condition%
\begin{equation}
\int_{-\infty}^{\infty}\left(  1+\left\vert x\right\vert \right)  \left\vert
q\left(  x,t\right)  \right\vert \mathrm{d}x<\infty,\ \ \ t\geq0,\ \ \text{
(short-range).}\label{decay cond}%
\end{equation}
Associate with $q\left(  x\right)  $ the full line \emph{Schr\"{o}dinger
operator} $\mathbb{L}_{q}=-\partial_{x}^{2}+q\left(  x\right)  $. (For
simplicity we retain the same notation for the differential operation.) Its
spectrum $\operatorname*{Spec}\mathbb{L}_{q}$ consists of a two fold
absolutely continuous part filling $[0,\infty)$ and finitely many negative
simple eigenvalues (bound states) $\left(  -\kappa_{n}^{2}\right)  _{n=1}^{N}%
$. An important feature of the short-range $q$ is that it supports two
(right/left)\emph{ Jost solutions} $\psi_{\pm}(x,k)$. I.e. solutions of the
\emph{Schr\"{o}dinger equation}%
\begin{equation}
\mathbb{L}_{q}u=-u^{\prime\prime}+q\left(  x\right)  u=k^{2}%
u,\label{schrodinger eq}%
\end{equation}
having plane wave asymptotics at infinity: $\psi_{\pm}(x,k)\sim$%
\textrm{e}$^{\pm\mathrm{i}kx},x\rightarrow\pm\infty$. Note that $\psi_{\pm
}(x,k)$ are determined by $q\left(  x\right)  $ on $\left(  \pm\infty
,x\right)  $ respectively. These solutions are analytic for $\operatorname{Im}%
k>0$, continuous down to the real line where $\psi_{\pm}(x,-k)=\overline
{\psi_{\pm}(x,k)}$. The pair $\{\psi_{+},\overline{\psi_{+}}\}$ forms a
fundamental set for (\ref{schrodinger eq}) and hence%
\begin{equation}
T(k)\psi_{-}(x,k)=\overline{\psi_{+}(x,k)}+R(k)\psi_{+}(x,k),\;k\in
\mathbb{R},\label{basic scatt identity}%
\end{equation}
with some $T$ and$\ R$ called the transmission and (right)\emph{\ }reflection
coefficients\emph{\ }respectively. While totally elementary,
(\ref{basic scatt identity}), called the \emph{right basic scattering
identity,} serves as a foundation for inverse scattering theory. It
immediately follows from (\ref{basic scatt identity}) that%
\begin{equation}
T\left(  k\right)  =\frac{2\mathrm{i}k}{W\left\{  \psi_{-}\left(  x,k\right)
,\psi_{+}\left(  x,k\right)  \right\}  },\label{T}%
\end{equation}
which means that $T\left(  k\right)  $ can be analytically continued into the
upper half plane with simple poles at $\mathrm{i}\kappa_{n}$ where $\psi
_{-}\left(  x,k\right)  $, $\psi_{+}\left(  x,k\right)  $ are linearly
dependent. Moreover, $\psi_{-}(x,$\textrm{$i$}$\kappa_{n})$, $\psi_{+}%
(x,$\textrm{$i$}$\kappa_{n})$ are real and decay exponentially at both
$\pm\infty$ and hence $-\kappa_{n}^{2}$ is a (negative) bound state
(eigenvalue) of $\mathbb{L}_{q}$. The number%
\begin{equation}
c_{n}=\left(  \int\psi_{+}(x,\mathrm{i}\kappa_{n})^{2}\mathrm{d}x\right)
^{-1/2}\label{cn}%
\end{equation}
is called the \emph{norming constant} of the bound state $-\kappa_{n}^{2}$.
The reflection coefficient $R\left(  k\right)  $ is continuous but need not
analytically extend outside the real line. It obeys $R\left(  -k\right)
=\overline{R\left(  k\right)  }$, $\left\vert R\left(  k\right)  \right\vert
\leq1$, $\left\vert R\left(  k\right)  \right\vert =1$ only at $k=0$.

The main feature of the short-range case is that $R\left(  k\right)  $
(continuous component) and $\left\{  (\kappa_{n},c_{n}),1\leq n\leq N\right\}
$ (discrete component) determine the potential $q$ uniquely. It will be
convenient for our purposes to write the discrete component as the (discrete)
measure%
\begin{equation}
\mathrm{d}\rho\left(  k\right)  =%
{\displaystyle\sum\limits_{n=1}^{N}}
c_{n}^{2}\delta_{\kappa_{n}}\left(  k\right)  \mathrm{d}k,
\label{rou for short-range}%
\end{equation}
which alone carries over all necessary information about the discrete
spectrum. We can now introduce the (right) \emph{scattering data}%
\[
S_{q}:=\{R,\mathrm{d}\rho\}.
\]
We emphasize that $S_{q}$ determines $q$ uniquely (i.e. $S_{q}$ is indeed
data) in general only for short-range $q$'s. (See e.g. our recent
\cite{RybPosBS} for explicit counterexamples.)

Step 2. The main reason why the IST works is that the (necessarily unique)
solution $q\left(  x,t\right)  $ to the problem (\ref{KdV})-(\ref{decay cond})
gives rise to the "time evolved" Schr\"{o}dinger operator $\mathbb{L}%
_{q\left(  \cdot,t\right)  }$ for which%
\[
R\left(  k,t\right)  =R\left(  k\right)  \mathrm{e}^{8\mathrm{i}k^{3}t}%
,\kappa_{n}\left(  t\right)  =\kappa_{n},c_{n}\left(  t\right)  =\mathrm{e}%
^{4\kappa_{n}^{3}t}c_{n}.
\]
Since $q\left(  x,t\right)  $ is short-range for $t>0$%
\[
S_{q}\left(  t\right)  =\left\{  R\left(  k\right)  \mathrm{e}^{8\mathrm{i}%
k^{3}t},\mathrm{e}^{8k^{3}t}\mathrm{d}\rho\left(  k\right)  :k\geq0\right\}
\]
is ("time evolved") scattering data for $q\left(  x,t\right)  $.

Step 3. Solve the inverse scattering problem for $S_{q}\left(  t\right)  $ by
any applicable method.

Note that this scheme runs for the left scattering data equally well. Another
remarkable feature of the classical short-range IST is that potentials subject
to (\ref{decay cond}) can be characterized in term of the scattering data
(known as \emph{Marchenko's characterization} \cite{MarchBook2011}).

\subsection{One sided IST}

The IST was extended in \cite{Hruslov76} and \cite{Venak86} to step-like
initial data $q$, i.e. $q$'s approaching two different values as
$x\rightarrow\pm\infty$ rapidly enough (aka an initial hydraulic jump or bore
wave) or and then to $q$'s approaching periodic function on one end and a
constant on another in \cite{KK94}. The main difference from the short-range
case is that the measure $\rho$ in the scattering data gains an absolutely
continuous component and the right and left ISTs are different. In fact,
\cite{Hruslov76} uses the left IST and \cite{Venak86} does the right IST to
prove that the absolutely continuous component of $\rho$ gives rise to an
infinite sequence of asymptotic solitons twice as high as the initial
hydraulic jump. The latter is the main feature of such initial profiles.
Another interesting feature of step-like initial data is appearance of
rarefaction waves studied recently in \cite{Egorova2022} and \cite{Grava21}.

In \cite{GryRybBLMS20} we extended IST\ to initial data that approach zero at
$+\infty$ fast enough but are essentially arbitrary at $-\infty$. As opposed
to the step-like data studied in \cite{Hruslov76,Venak86} such potentials
support in general only the right inverse scattering. The approach is based on
the notion of a Weyl solution. Recall that a real-valued locally integrable
potential $q$ is said to be Weyl limit point at $\pm\infty$ if the equation
$\mathbb{L}_{q}u=k^{2}u$ has a unique (up to a multiplicative constant)
solution (called Weyl) $\Psi_{\pm}(\cdot,k^{2})$ that is square integrable at
$\pm\infty$ for each $k^{2}\in\mathbb{C}^{+}$. Note that as apposed to Jost
solutions, Weyl solutions exist under much more general conditions on $q$'s
and no decay of any kind is required. As is well-known, if $q\left(  x\right)
$ is limit point at both $\pm\infty$ then $\mathbb{L}_{q}$ is selfadjoint on
$L^{2}$.

It is convenient to give a special name to potentials that we shall deal with
through the rest of the paper.

\begin{definition}
[Step-type potentials]\label{step-type pots} We call a (real) potential
$q\left(  x\right)  $ \emph{step-type }if

\begin{enumerate}
\item $q\left(  x\right)  \in L_{1}^{1}\left(  +\infty\right)  $ (short-range
at $+\infty$);

\item $\operatorname*{Spec}\mathbb{L}_{q}>-\infty$ (essential boundedness below).
\end{enumerate}
\end{definition}

The (right-sided) scattering theorem for such potentials is studied in our
\cite[Section 7]{GruRybSIMA15} where the interested reader can finds the
details. Choose the Weyl solution $\varphi\left(  x,k\right)  $ (note our
variable $k$ not $k^{2}$) at $-\infty$ to satisfy%
\begin{equation}
\varphi\left(  x,k\right)  =\overline{\psi(x,k)}+R(k)\psi(x,k),\text{ a.e.
}\operatorname{Im}k=0, \label{eq8.1}%
\end{equation}
with some coefficient $R\left(  k\right)  $ (c.f. (\ref{basic scatt identity}%
)), which can be called the (right) reflection coefficient. Note that in the
short-range case
\begin{equation}
\varphi\left(  x,k\right)  =T\left(  k\right)  \psi_{-}(x,k).
\label{fi for short-range}%
\end{equation}

\begin{proposition}
[On reflection coefficient]\label{props of R} The (right) reflection
coefficient $R$ of a step-type potential $q\left(  x\right)  $ is well
defined, symmetric $R\left(  -k\right)  =\overline{R\left(  k\right)  }$, and
$\left\vert R\left(  k\right)  \right\vert \leq1$ a.e. Moreover, if
$\Delta\subseteq\operatorname*{Spec}\left(  \mathbb{L}_{q}\right)  $ is the
minimal support of the two fold a.c. spectrum of $\mathbb{L}_{q}$ then
$\left\vert R\left(  k\right)  \right\vert <1$ for a.e. real $k$ such that
$k^{2}\in\Delta$ and $\left\vert R(k)\right\vert =1$ otherwise.
\end{proposition}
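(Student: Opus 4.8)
The plan is to establish each claimed property of the right reflection coefficient $R$ by exploiting the defining relation \eqref{eq8.1}, which expresses the Weyl solution $\varphi$ at $-\infty$ as a linear combination of the Jost solution $\psi$ (at $+\infty$) and its conjugate. First I would confirm that $R$ is \emph{well defined}: since $q$ is short-range at $+\infty$, the right Jost solution $\psi(x,k)$ and $\overline{\psi(x,k)}$ exist and form a fundamental system of \eqref{schrodinger eq} for real $k\neq 0$; because $\{\psi,\overline{\psi}\}$ are linearly independent (their Wronskian is $2\mathrm{i}k\neq 0$), the Weyl solution $\varphi$ admits a unique expansion in this basis, and the coefficient $R(k)$ is read off by solving the resulting $2\times 2$ linear system via Wronskians, e.g. $R(k)=W\{\overline{\psi},\varphi\}/W\{\psi,\varphi\}$ up to normalization.

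For the \emph{symmetry} $R(-k)=\overline{R(k)}$, the key observation is that the Schr\"odinger equation \eqref{schrodinger eq} depends on $k$ only through $k^2$ and has real potential, so the reality relations $\psi(x,-k)=\overline{\psi(x,k)}$ and $\varphi(x,-k)=\overline{\varphi(x,k)}$ hold (the latter since $\varphi$ is the $L^2$-Weyl solution, determined up to normalization by $q$ on $(-\infty,x)$ and carrying the same conjugation symmetry). Substituting $k\mapsto -k$ into \eqref{eq8.1} and comparing with the complex conjugate of \eqref{eq8.1} then forces $R(-k)=\overline{R(k)}$ by uniqueness of the expansion coefficients in the fundamental system.

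The bound $\lvert R(k)\rvert\le 1$ and the rigidity of the equality case are the heart of the matter, and I expect this to be the main obstacle. The natural tool is a Wronskian/flux identity: compute $W\{\varphi,\overline{\varphi}\}$ two ways. On one hand, evaluating near $-\infty$ using the square-integrability of the Weyl solution gives a controlled (indeed sign-definite or vanishing) quantity; on the other hand, inserting the expansion \eqref{eq8.1} on the right-hand side and using the constant Wronskians $W\{\psi,\overline{\psi}\}=2\mathrm{i}k$, $W\{\psi,\psi\}=W\{\overline{\psi},\overline{\psi}\}=0$ yields a combination proportional to $k\bigl(1-\lvert R(k)\rvert^2\bigr)$. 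Matching the two computations produces an identity of the schematic form $1-\lvert R(k)\rvert^2 = (\text{nonnegative flux of the Weyl solution})$, from which $\lvert R\rvert\le 1$ is immediate. The crucial subtlety is characterizing when equality holds: $\lvert R(k)\rvert=1$ corresponds exactly to the vanishing of that flux, i.e. to the Weyl solution $\varphi(\cdot,k)$ being (a multiple of) a real solution — precisely the situation where $k^2$ lies outside the \emph{minimal support} $\Delta$ of the two-fold absolutely continuous spectrum, so that the spectral measure has no a.c. mass there and no genuine scattering occurs. Conversely, for a.e. $k$ with $k^2\in\Delta$ the a.c. spectrum is two-fold, the flux is strictly positive, and $\lvert R(k)\rvert<1$. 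The delicate points will be: (i) justifying the boundary-value limits defining $R$ only \emph{a.e.} on the real line, since for merely step-type $q$ the Jost data exist in an a.e./limiting sense rather than classically; and (ii) rigorously tying the strict positivity of the Weyl flux to the two-foldness of the a.c. spectrum on $\Delta$, which is where I would lean on the spectral-theoretic machinery of the Weyl solution and the one-sided scattering results from \cite[Section 7]{GruRybSIMA15}.
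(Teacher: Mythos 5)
The paper contains no proof of Proposition \ref{props of R} to compare against: it is quoted verbatim from \cite[Proposition 7.10]{GruRybSIMA15}, and the author explicitly defers to that reference. That said, your outline is the standard route and, as far as one can tell, the one taken there: read $R$ off from the expansion of $\varphi$ in the fundamental system $\{\psi,\overline{\psi}\}$ via Wronskians; obtain the symmetry from $\psi(x,-k)=\overline{\psi(x,k)}$ together with $\varphi(x,-k+\mathrm{i}0)=\overline{\varphi(x,k+\mathrm{i}0)}$ (correctly justified by noting that $-k+\mathrm{i}0$ corresponds to the spectral parameter $k^{2}-\mathrm{i}0$); and obtain $\left\vert R\right\vert\leq1$ from the $x$-independence of $W\{\varphi,\overline{\varphi}\}=2\mathrm{i}k\bigl(1-\left\vert R(k)\right\vert^{2}\bigr)$ matched against the sign of the flux $\operatorname{Im}\bigl(\varphi^{\prime}\overline{\varphi}\bigr)$, which is governed by $\operatorname{Im}m_{-}(k^{2}+\mathrm{i}0)$ for the Weyl $m$-function at $-\infty$.

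The two points you flag as delicate are exactly where the substance of the cited reference lies, and a complete proof cannot leave them as remarks: (i) one must establish that the nontangential boundary values of $\varphi$ exist a.e.\ on the real line and that the coefficient of $\overline{\psi}$ in the expansion is nonzero a.e.\ (otherwise the normalization (\ref{eq8.1}) is vacuous on a set of positive measure and $R$ is not defined there); and (ii) the identification of $\{k:\left\vert R(k)\right\vert<1\}$ with the minimal support of the \emph{two-fold} a.c.\ spectrum needs the subordinacy-type fact that $\{\lambda:\operatorname{Im}m_{-}(\lambda+\mathrm{i}0)>0\}$ is a minimal support of the a.c.\ spectrum contributed from $-\infty$, combined with the observation that short-range decay at $+\infty$ forces $\operatorname{Im}m_{+}(\lambda+\mathrm{i}0)>0$ for a.e.\ $\lambda>0$, so that two-foldness on $(0,\infty)$ is decided by $m_{-}$ alone. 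With those inputs supplied from \cite[Section 7]{GruRybSIMA15}, your argument closes; as a self-contained proof it is a correct skeleton with the hardest bones still to be filled in.
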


This statement describes the positive spectrum of $\mathbb{L}_{q}$ only. The
negative spectrum is described in

\begin{proposition}
[On norming measure]\label{prop on norming measure}If $q\left(  x\right)  $ is
step-type then on the imaginary line $\varphi$ and $\psi$ are related
by\footnote{Here and below, as always, $\operatorname{Im}f\left(
x+\mathrm{i}0\right)  \mathrm{d}x=w^{\ast}-\lim\operatorname{Im}f\left(
x+\mathrm{i}y\right)  \mathrm{d}x,\ y\rightarrow+0.$}%
\begin{equation}
\operatorname{Im}\varphi\left(  x,\mathrm{i}k+0\right)  \mathrm{d}k=\pi
\psi\left(  x,\mathrm{i}k\right)  \mathrm{d}\rho\left(  k\right)
,\ \ \ k\geq0, \label{eq for rou}%
\end{equation}
for some non-negative finite measure $\rho$. Moreover,%
\[
\operatorname*{Supp}\rho=\left\{  k\geq0:-k^{2}\in\operatorname*{Spec}\left(
\mathbb{L}_{q}\right)  \right\}  .
\]

\end{proposition}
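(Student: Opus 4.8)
The plan is to read off the left-hand side of \eqref{eq for rou} as the boundary value of a Herglotz function and then to peel off the factor $\psi$ by a Wronskian identity. Throughout, $\psi=\psi_{+}$ is the Jost solution at $+\infty$ (available since $q\in L_{1}^{1}(+\infty)$) and $\varphi$ is the Weyl solution at $-\infty$ (available since $\operatorname*{Spec}\mathbb{L}_{q}>-\infty$ makes $\mathbb{L}_{q}$ limit point, hence selfadjoint), normalized by \eqref{eq8.1} so that it reduces to \eqref{fi for short-range} in the short-range case. Both solve $\mathbb{L}_{q}u=z^{2}u$ and are analytic for $z$ in the open first quadrant $\{\operatorname{Re}z>0,\operatorname{Im}z>0\}$, which corresponds to $\lambda=z^{2}\in\mathbb{C}^{+}$; the boundary value $\varphi(x,\mathrm{i}k+0)$ in \eqref{eq for rou} is the limit $z\to\mathrm{i}k$ from that quadrant, equivalently $\lambda\to-k^{2}+\mathrm{i}0$. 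I would first record two elementary facts on the imaginary axis. Since the equation has real coefficients and $-k^{2}\in\mathbb{R}$, the solution $\psi(x,\mathrm{i}k)$ is real. Moreover, from \eqref{eq8.1} one gets $W\{\varphi,\psi\}=W\{\overline{\psi},\psi\}=2\mathrm{i}z$ on the real axis; as both sides are analytic in the open first quadrant and continuous up to its boundary, boundary uniqueness yields $W\{\varphi,\psi\}=2\mathrm{i}z$ identically (the poles of $\varphi$ at the eigenvalues $\mathrm{i}\kappa_{n}$, visible already in \eqref{fi for short-range}, cancel in the Wronskian). In particular $W\{\varphi,\psi\}(\mathrm{i}k)=-2k\in\mathbb{R}$.

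The existence and nonnegativity of $\rho$ I would extract from the diagonal Green's function
\[
G(x,x;\lambda)=-\frac{\varphi(x,z)\,\psi(x,z)}{2\mathrm{i}z},\qquad\lambda=z^{2}\in\mathbb{C}^{+},
\]
which is Herglotz in $\lambda$ because $\operatorname{Im}\langle f,(\mathbb{L}_{q}-\lambda)^{-1}f\rangle\geq0$ for $\operatorname{Im}\lambda>0$. By the Herglotz--Nevanlinna boundary theory its boundary values are a nonnegative, locally finite measure; since $\psi(x,z)$ is analytic and, for $x$ large, nonvanishing near the imaginary axis, this already guarantees that $\operatorname{Im}\varphi(x,\mathrm{i}k+0)\,\mathrm{d}k$ exists as a (signed, a priori) measure. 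To split off $\psi$, observe that forming the $x$-Wronskian against the real, $k$-smooth solution $\psi(\cdot,\mathrm{i}k)$ commutes with taking distributional boundary values and imaginary parts, so
\[
W\{\operatorname{Im}\varphi(\cdot,\mathrm{i}k+0),\psi(\cdot,\mathrm{i}k)\}=\operatorname{Im}W\{\varphi,\psi\}(\mathrm{i}k+0)=\operatorname{Im}(-2k)=0.
\]
Two solutions of $\mathbb{L}_{q}u=-k^{2}u$ with vanishing Wronskian are proportional, so for each $k$ the measure $\operatorname{Im}\varphi(x,\mathrm{i}k+0)$ is a scalar multiple of $\psi(x,\mathrm{i}k)$, the scalar being independent of $x$; writing it as $\pi\,\mathrm{d}\rho(k)/\mathrm{d}k$ gives exactly \eqref{eq for rou}. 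Feeding this back into $G$ yields $\operatorname{Im}G(x,x;-k^{2}+\mathrm{i}0)\,\mathrm{d}k=\pi\,\psi(x,\mathrm{i}k)^{2}\,\mathrm{d}\rho(k)/(2k)$, and nonnegativity of the spectral measure forces $\rho\geq0$.

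There remain finiteness and the support statement. For finiteness I would use that $\operatorname*{Spec}\mathbb{L}_{q}>-\infty$ confines the negative spectrum, hence $\operatorname*{Supp}\rho$, to a compact subset of $[0,\infty)$; evaluating the last display at a large fixed $x_{0}$, where $\psi(x_{0},\mathrm{i}k)\sim\mathrm{e}^{-kx_{0}}$ is bounded away from $0$ on that compact set, bounds $\int\mathrm{d}\rho$ by a local spectral mass of the Herglotz function $G(x_{0},x_{0};\cdot)$, which is finite. For the support, off $\operatorname*{Spec}\mathbb{L}_{q}$ the resolvent, and hence $G(x,\cdot)$, continues analytically with real boundary values, so $\operatorname{Im}\varphi(x,\mathrm{i}k+0)=0$ and $\rho$ carries no mass there; conversely, every negative spectral point is charged because the support of the diagonal spectral measure equals the spectrum and $\rho$ is detected at any $x$ with $\psi(x,\mathrm{i}k)\neq0$. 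As a consistency check fixing the constant $\pi$, in the short-range case \eqref{T}--\eqref{cn} give $\varphi=T\psi_{-}$ with simple poles at $\mathrm{i}\kappa_{n}$ of residue $\mathrm{i}c_{n}^{2}\psi(x,\mathrm{i}\kappa_{n})$, and the Sokhotski--Plemelj formula reproduces $\mathrm{d}\rho=\sum_{n}c_{n}^{2}\delta_{\kappa_{n}}$, i.e. \eqref{rou for short-range}.

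I expect the main obstacle to be the rigorous boundary-value analysis on the continuous part of the negative spectrum: justifying that $\operatorname{Im}\varphi(x,\mathrm{i}k+0)$ is a genuine finite measure and that the imaginary-part and boundary-value operations commute with the Wronskian in the weak$^{\ast}$ sense, rather than merely at isolated eigenvalues where everything is handled by residues. This is precisely where the step-type setting goes beyond the classical picture, and where the Herglotz structure of $G$, together with the short-range control at $+\infty$ and boundedness below, does the decisive work.
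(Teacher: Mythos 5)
Your argument is correct in outline, but it is not the route the paper takes: the paper does not actually prove this proposition in-text. It cites Proposition 7.12 of \cite{GruRybSIMA15} and only makes the statement ``transparent'' through two separate ingredients that appear later --- the residue/contour computation of Section \ref{MRHP}, which converts the pole conditions (\ref{pole cond vector form}) into $\operatorname{Im}\varphi\left(  \mathrm{i}s-0\right)  \mathrm{d}s=-\pi\psi\left(  \mathrm{i}s\right)  \mathrm{d}\rho\left(  s\right)$ but only for discrete spectrum, and the Herglotz property of the diagonal Green's function (\ref{G1}), which is invoked in Section \ref{proof of main theorem} to prove property (4) of Theorem \ref{MainThm}; the general step-type case is otherwise reached by weak* limits of short-range cut-offs. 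You instead assemble these same two tools into a single direct Weyl--Titchmarsh argument: the identity $W\{\varphi,\psi\}=2\mathrm{i}k$ (which the paper itself derives inside the proof of Theorem \ref{Darboux for RHP}) forces the boundary measure $\operatorname{Im}\varphi\left(  x,\mathrm{i}k+0\right)  \mathrm{d}k$ to be proportional to $\psi\left(  x,\mathrm{i}k\right)$ with an $x$-independent factor, and the Herglotz structure of $G$ supplies nonnegativity, finiteness, and the support identification. This buys a self-contained proof covering continuous negative spectrum at once, where the paper's in-text derivation handles only residues; what it costs is exactly the technical debt you flag yourself --- justifying that taking imaginary parts and boundary values in the weak* sense commutes with the $x$-Wronskian, and that the proportionality argument is valid for measure-valued solutions. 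Two further points deserve a sentence each in a full write-up: the support statement requires varying $x$ (or choosing $x$ large) to avoid the isolated zeros of $\psi\left(  \cdot,\mathrm{i}k\right)$, since the support of $\operatorname{Im}G\left(  x,x;\cdot\right)$ at a single $x$ can miss spectral points where $\psi$ vanishes; and your sign convention at $\mathrm{i}k+0$ (approach from the first quadrant) is the one in the proposition, consistent via the symmetry $\overline{\mathbf{v}(\overline{k})}=\mathbf{v}(-k)$ with the $\mathrm{i}s-0$ convention and minus sign used in Section \ref{MRHP}. Your Sokhotski--Plemelj check recovering (\ref{rou for short-range}) correctly pins down the constant $\pi$.
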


Proposition \ref{props of R} is proven in \cite[Proposition 7.10]%
{GruRybSIMA15}. Proposition \ref{prop on norming measure} follows from
Proposition 7.12 of \cite[Section 7]{GruRybSIMA15} (but it shall also become
transparent from considerations below). Note that $\psi\left(  x,k\right)  $
is analytic for $\operatorname{Im}k>0$ and $\varphi\left(  x,t\right)  $ is
analytic for $\operatorname{Im}k>0$ away from $\mathrm{i}\operatorname*{Supp}%
\rho$. If $q$ is short-range then d$\rho$ in (\ref{eq for rou}) coincides with
(\ref{rou for short-range}) (see below). For this reason we call $\rho$ the
\emph{norming measure}. Note that the important in the short-range scattering
representation (\ref{fi for short-range}) defining the transmission
coefficient $T\left(  k\right)  $ is lost for a generic step-type potential.
It can of course be introduced, as frequently done in the literature on
bore-type $q$'s discussed above, by replacing the left Jost solution in
(\ref{fi for short-range}) with another Weyl solution which asymptotic in
$x\rightarrow-\infty$ is know for a.e. real $k$. But the transmission
coefficient introduced this way will depend on such choice. We refer the
interested reader to \cite{GesztezyNowell96} for a general framework on
scattering theory with different spatial asymptotics at $\pm\infty$. For our
purposes we do not need the transmission coefficient \ but it may become
necessary to study asymptotic behavior of KdV solutions as $x\rightarrow
-\infty$.

If $q(x)$ is a pure step function, i.e. $q(x)=-h^{2},\;x<0,\;q(x)=0,\;x\geq0$
then $\operatorname*{Spec}\left(  \mathbb{L}_{q}\right)  =(-h^{2},\infty)$ and
purely absolute continuous, $(-h^{2},0)$ and $\left(  0,\infty\right)  $ being
its simple and two fold components respectively. Moreover%
\begin{equation}
R(k)=-\left(  \frac{h}{\sqrt{k^{2}}+\sqrt{k^{2}+h^{2}}}\right)  ^{2}%
,\ \ \mathrm{d}\rho\left(  k\right)  =\frac{2k}{\pi h^{2}}\sqrt{h^{2}-k^{2}%
}\mathrm{d}k. \label{pure step}%
\end{equation}

Let us now look at what happens to Step 1- Step 3 (the "time evolved"
picture). Using the classical short-range IST as a pattern to follow,
introduce
\[
S_{q}=\left\{  R\left(  k\right)  ,\mathrm{d}\rho\left(  k\right)
:k\geq0\right\}
\]
for Step 1. The problem though is that no analog of Marchenko's
characterization is known to date and, while $S_{q}$ can be formed by solving
right sided scattering problem we cannot claim that there is only one $q$ that
corresponds to $S_{q}$. But we move on to Step 2 and form
\[
S_{q}\left(  t\right)  =\left\{  R\left(  k\right)  \mathrm{e}^{8\mathrm{i}%
k^{3}t},\mathrm{e}^{8k^{3}t}\mathrm{d}\rho\left(  k\right)  :k\geq0\right\}
.
\]
This step is formal as there are no general well-posedness results for the KdV
equation with general step-type initial data. Thus we don't know what
$S_{q}\left(  t\right)  $ actually represents and cannot go over to step 3.
These problems can however be detoured by understanding a KdV solution as a
suitable limit. The following definition is convenient for this. It is also
quite natural from the physical point of view.

\begin{definition}
[Step-type KdV\ solutions]\label{step-type} We call $q\left(  x,t\right)  $ a
(right) \emph{step-type KdV solution} with scattering data $S_{q}=\left\{
R\left(  k\right)  ,\mathrm{d}\rho\left(  k\right)  :k\geq0\right\}  $ if for
$t\geq0$

\begin{enumerate}
\item $q\left(  x,t\right)  $ is step-type (in the sense of Definition
\ref{step-type pots});

\item there is a sequence $S_{n}=\left\{  R_{n},\mathrm{d}\rho_{n}\right\}  $
such that each set $S_{n}$ is the scattering data for some short-range (real)
potential $q_{n}\left(  x\right)  $ and in the weak* sense%
\begin{equation}%
\begin{array}
[c]{ccc}%
S_{n} & \rightarrow & S_{q}\\
q_{n}\left(  x,t\right)  & \rightarrow & q\left(  x,t\right)
\end{array}
\text{, (\text{double convergence}),} \label{double conv}%
\end{equation}
where $q_{n}\left(  x,t\right)  $ is the KdV solution with initial data
$q_{n}\left(  x\right)  $.
\end{enumerate}
\end{definition}

The comments below should clarify the nature of this concept and why it could
be convenient.

1. We emphasize again that no well-posedness for step-type KdV solutions is
available in general. The main issue is that it is not clear what Banach space
such solutions can be included to even speak about well-posedness. For
bore-type initial conditions the best known well-posedness result in given in
the recent \cite{Laurens22} (see also \cite{Ze Li21}). These papers also
suggest that well-posedness for (\ref{KdV})-(\ref{KdVID}) for general
step-type $q$'s may be out of reach.

2. In \cite{Cohen1989} the authors use certain IST constructions to give
examples of nonuniqueness of the Cauchy problem for KdV. One example gives a
nontrivial $C^{\infty}$ solution $q\left(  x,t\right)  $ in a domain
$\{(x,t):0<t<H(x)\}$ for a positive nondecreasing function $H$, such that
$q\left(  x,t\right)  $ vanishes to all orders as $t\rightarrow0$. This
solution decays rapidly as $x\rightarrow+\infty$, but cannot be "well behaved"
as $x$ moves left. Further analysis is required to tell if such $q\left(
x,t\right)  $ may in fact be a step-type potential\footnote{Explicit
IST\ constructions given in \cite{Aktosun06} produce double pole singularity
moving left and thus such solutions fail condition 1 of Definition
\ref{step-type}.} or not but these disturbing examples suggest that condition
2 may indeed be needed to single out a solution that we think of as physically relevant.

3. In a more restrictive form, Definition \ref{step-type pots} appears first
in our \cite{RybNON10} (where it is referred to as natural). The problem
(\ref{KdV})-(\ref{KdVID}) with $q\left(  x\right)  $ subject to%
\begin{equation}
\operatorname*{Spec}\mathbb{L}_{q}>-\infty, \label{cond 1}%
\end{equation}%
\begin{equation}
q\left(  x\right)  \in L_{N}^{1}\left(  +\infty\right)  <\infty,\ \ \ N\geq
5/2, \label{cond 2}%
\end{equation}
is studied in our \cite{GruRybSIMA15,GryRybBLMS20}. It is shown that the
sequence $q_{n}\left(  x,t\right)  $ of KdV solutions corresponding to cut-off
approximations $q_{n}\left(  x\right)  =q\left(  x\right)  \chi_{\left(
-n,\infty\right)  }\left(  x\right)  $ of initial data $q\left(  x\right)  $
converges to a classical solution $q\left(  x,t\right)  $ of (\ref{KdV}) (that
is three times continuously differentiable in $x$ and once in $t$) uniformly
on compacts in $\mathbb{R\times R}_{+}$. (That is, convergence $q_{n}%
(x,t)\rightarrow q\left(  x,t\right)  $ in (\ref{double conv}) is actually
much better than weak*.) The sequence of scattering data $S_{n}$, however,
converges in the weak* sense and no better. Recall that, as is well-known
\cite{Deift79}, even in the short-range case cut-off approximations lead to
the sequence $R_{n}\left(  k\right)  $ that fails to converge
uniformly\footnote{See also \cite{RemlingCMP2015} which results shed some
light on reasons why only a weak convergence is to be expected.} to $R\left(
k\right)  $ at $k=0$. Thus, under conditions (\ref{cond 1}) and (\ref{cond 2})
the double convergence (\ref{double conv}) works as follows: ($n\rightarrow
\infty$)%
\begin{align}
q_{n}(x)  &  =q\left(  x\right)  \chi_{\left(  -n,\infty\right)  }\left(
x\right)  \rightarrow q\left(  x\right)  \text{ a.e.}\label{chain 1}\\
&  \Rightarrow%
\begin{array}
[c]{cccc}%
S_{n} & \rightarrow & S_{q} & \text{star-weakly}\\
q_{n}\left(  x,t\right)  & \rightarrow & q\left(  x,t\right)  & \text{locally
uniformly for }t>0
\end{array}
\text{.}\nonumber
\end{align}
That is, we start out from short-range (cut-off) approximations $q_{n}\left(
x\right)  $ that produce a sequence of scattering data $S_{n}$ required in
condition 2, which weak* limit, in turn, corresponds to $q\left(  x,t\right)
=\lim q_{n}\left(  x,t\right)  $. Note again that even though a solution
$q\left(  x,t\right)  $ is in a way natural (as it comes from cut off
approximations) but we have no ground to believe that there are no "spurious"
solutions. For this reason $S_{q}$ may be referred to as scattering data only
for solutions which we call natural.

4. Since the KdV flow is isospectral (already assumed in the Lax pair
formulation of the KdV equation), condition (\ref{cond 1}) holds for $t>0$.
Condition (\ref{cond 2}) is however not time invariant. Technics of
\cite{GryRybBLMS20} can be readily used to tell how the rate of decay at
$+\infty$ drops under the KdV flow (in fact, to tell how the KdV trades decay
at $+\infty$ for gain in smoothness (work in progress)). \cite{Cohen1989}
suggests that if $q\left(  x\right)  \in L_{N}^{1}\left(  +\infty\right)  $,
then$\ q\left(  x,t\right)  \in L_{N-5/4}^{1}\left(  +\infty\right)
\in\ L_{1}^{1}\left(  +\infty\right)  $ and hence $q\left(  x,t\right)
\in\ L_{1}^{1}\left(  +\infty\right)  $ when $N\geq5/2$.

5. The chain (\ref{chain 1})\ is not always convenient. The one%
\begin{equation}
S_{n}\rightarrow S\text{ \ a.e. }\Rightarrow q_{n}(x,t)\rightarrow q\left(
x,t\right)  \text{ a.e. for }t\geq0\label{chain 2}%
\end{equation}
may, as our \cite{RybPosBS} suggests. Below we have to implement both
(\ref{chain 1}) and (\ref{chain 2}) in (\ref{double conv}). This should
explain the meaning of condition 2.

6. The weak* convergence of $q_{n}(x,t)\rightarrow q\left(  x,t\right)  $ in
(\ref{double conv}) is assumed to be on the save side and may be upgraded. The
real upgrade though would of course be indicating a Banach space where it
takes place. The latter does not appear possible unless we impose strong
assumptions on the behavior of $q\left(  x\right)  $ as $x\rightarrow-\infty$.
E.g. \cite{GryRybBLMS20} if%
\begin{equation}%
{\displaystyle\sum\limits_{m=-\infty}^{\infty}}
\left(  \int_{m}^{m+1}\left\vert q\left(  x\right)  \right\vert dx\right)
^{2}<\infty, \label{cond 1'}%
\end{equation}
then $q_{n}(x,t)\rightarrow q\left(  x,t\right)  $ holds at least in the
Sobolev space $H^{-1}\left(  \mathbb{R}\right)  $. The latter is a direct
consequence of well-posedness of the KdV in $H^{-1}$ recently proven in
\cite{Killip19}. (In fact, any $H^{-\varepsilon}$ with $0<\varepsilon\leq1$
will do.)

7. Condition (\ref{cond 1}) is satisfied if%
\begin{equation}
\operatorname*{Sup}\limits_{\left\vert I\right\vert =1}\int_{I}\max\left(
-q\left(  x\right)  ,0\right)  \mathrm{d}x<\infty. \label{convergence}%
\end{equation}
The latter covers a large class of initial profiles without any assumption on
a pattern of behavior at $-\infty$.

\section{Main theorem\label{main results}}

In this section we present our main statement and offer relevant discussions.

\begin{theorem}
[Pertubration of negative spectrum]\label{MainThm} Let $q\left(  x,t\right)  $
be a step-type KdV solution (in the sense of Definition \ref{step-type}) with
the scattering data
\begin{equation}
S_{q}=\left\{  R\left(  k\right)  ,\mathrm{d}\rho\left(  k\right)
:k\geq0\right\}  , \label{S_q}%
\end{equation}
$\psi\left(  x,t,k\right)  $ its right Jost solution, and
\begin{equation}
K\left(  k/\mathrm{i},s;x,t\right)  :=\int_{x}^{\infty}\psi\left(
z,t;k\right)  \psi\left(  z,t;\mathrm{i}s\right)  \mathrm{d}%
z,\ \ \operatorname{Im}k\geq0,s\geq0. \label{kernal}%
\end{equation}
Then for any finite signed measure $\sigma$ supported on a compact set of
$[0,\infty)$ satisfying the conditions%
\begin{equation}
\int\left\vert \mathrm{d}\sigma\left(  k\right)  \right\vert /k<\infty
,\ \ \ \mathrm{d}\rho+\mathrm{d}\sigma\geq0, \label{conds}%
\end{equation}
the Fredholm integral equation%
\begin{align}
y\left(  \alpha\right)  +\int K\left(  \alpha,s;x,t\right)  y\left(  s\right)
\mathrm{d}\sigma_{t}\left(  s\right)   &  =\psi\left(  x,t;\mathrm{i}%
\alpha\right)  ,\ \alpha\in\operatorname*{Supp}\sigma,\label{Fredholm}\\
\mathrm{d}\sigma_{t}\left(  s\right)   &  :=\mathrm{e}^{8s^{3}t}%
\mathrm{d}\sigma\left(  s\right)  ,\nonumber
\end{align}
has a unique solution $y\left(  s;x,t\right)  $ in $L^{2}\left(
\mathrm{d}\sigma\right)  $ and
\begin{equation}
\psi_{\sigma}\left(  x,t;k\right)  =\psi\left(  x,t;k\right)  -\int K\left(
k/\mathrm{i},s;x,t\right)  y\left(  s,x,t\right)  \mathrm{d}\sigma_{t}\left(
s\right)  , \label{DT for psi}%
\end{equation}
is the right Jost solution corresponding to the potential%
\begin{align}
q_{\sigma}\left(  x,t\right)   &  =q\left(  x,t\right) \label{DT for q}\\
&  +2\left[  \int\psi_{\sigma}\left(  x,t;\mathrm{i}s\right)  \psi\left(
x,t;\mathrm{i}s\right)  \mathrm{d}\sigma_{t}\left(  s\right)  \right]
^{2}+4\int\psi_{\sigma}\left(  x,t;\mathrm{i}s\right)  \psi^{\prime}\left(
x,t;\mathrm{i}s\right)  \mathrm{d}\sigma_{t}\left(  s\right)  ,\nonumber
\end{align}
which is a step-type KdV solution with the scattering data
\[
S_{q_{\sigma}}=\left\{  R\left(  k\right)  ,\mathrm{d}\rho\left(  k\right)
+\mathrm{d}\sigma\left(  k\right)  :k\geq0\right\}  .
\]
Moreover,

\begin{enumerate}
\item For $s\in\operatorname*{Supp}\sigma$%
\begin{equation}
\psi_{\sigma}\left(  x,t,\mathrm{i}s\right)  =y\left(  s,x,t\right)  ;
\label{psi}%
\end{equation}

\item $q_{\sigma}\left(  x,t\right)  $ is as smooth as $q\left(  x,t\right)  $
(i.e. $q\left(  x,t\right)  \in C^{\left(  n\right)  }$ $\Leftrightarrow
q_{\sigma}\left(  x,t\right)  \in C^{\left(  n\right)  }$);

\item If $\ 0\notin\operatorname*{Supp}\sigma$ then $q_{\sigma}\left(
x,t\right)  -q\left(  x,t\right)  $ decays exponentially as $x\rightarrow
+\infty$ for every fixed $t>0$;

\item If $\kappa_{0}$ is a pure point of $\sigma$ with a positive weight then
$-\kappa_{0}^{2}$ is a (negative) bound state of $\mathbb{L}_{q_{\sigma}}$
which is embedded if $\kappa_{0}\in\operatorname*{Supp}\rho$;

\item The binary Darboux transformation is invertible in the following sense%
\[
\left(  \psi_{\sigma}\right)  _{-\sigma}=\psi,\text{ \ \ }\left(  q_{\sigma
}\right)  _{-\sigma}=q.
\]

\end{enumerate}
\end{theorem}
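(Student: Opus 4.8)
The plan is to establish the theorem by realizing the continuous binary Darboux transformation as a weak* limit of the classical discrete (finite-rank) binary Darboux transformation applied to short-range approximants, thereby reducing everything to well-understood finite-dimensional constructions together with the double-convergence machinery of Definition \ref{step-type}. The central object is the Fredholm integral operator $\mathbb{K}_\sigma$ on $L^2(\mathrm{d}\sigma)$ with kernel $K(\alpha,s;x,t)$ given by \eqref{kernal}. First I would verify that $\mathbb{I}+\mathbb{K}_\sigma$ (with the weight $\mathrm{d}\sigma_t$) is boundedly invertible, so that \eqref{Fredholm} has a unique solution $y\in L^2(\mathrm{d}\sigma)$; the natural route is to show $\mathbb{K}_\sigma$ is the integral operator with kernel $\int_x^\infty \psi(z,t;\mathrm{i}\alpha)\psi(z,t;\mathrm{i}s)\,\mathrm{d}z$, which is symmetric and \emph{positive} on $L^2(\mathrm{d}\sigma_t)$ because $K(\mathrm{i}\alpha/\mathrm{i},s)=\langle \psi(\cdot,\mathrm{i}\alpha),\psi(\cdot,\mathrm{i}s)\rangle_{L^2(x,\infty)}$ is a Gram kernel. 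Combined with the sign condition $\mathrm{d}\rho+\mathrm{d}\sigma\ge0$ in \eqref{conds} (which controls the possible negativity of $\sigma$ on $\operatorname*{Supp}\rho$), this gives $\mathbb{I}+\mathbb{K}_\sigma>0$ and hence invertibility; the moment condition $\int|\mathrm{d}\sigma|/k<\infty$ guarantees the requisite integrability of $\psi(z,\mathrm{i}s)$ near $s=0$ so that the kernel is genuinely in the Hilbert–Schmidt (or trace) class on $L^2(\mathrm{d}\sigma)$.

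Next I would identify $\psi_\sigma$ defined by \eqref{DT for psi} as a Jost solution. The key algebraic input is that $\psi_\sigma$ solves the Schr\"odinger equation with a new potential $q_\sigma$, and that the correction has the Darboux form \eqref{DT for q}. I would obtain this by differentiating \eqref{DT for psi} twice in $x$: using $\partial_x K(k/\mathrm{i},s;x,t)=-\psi(x;k)\psi(x;\mathrm{i}s)$ from \eqref{kernal} and the eigenvalue equation $-\psi''+q\psi=k^2\psi$, the terms reorganize into $-\psi_\sigma''+q_\sigma\psi_\sigma=k^2\psi_\sigma$ with $q_\sigma-q$ forced to equal exactly the bracketed expression in \eqref{DT for q}; identity \eqref{psi}, namely $\psi_\sigma(x,\mathrm{i}s)=y(s,x)$ for $s\in\operatorname*{Supp}\sigma$, drops out by setting $k=\mathrm{i}\alpha$ in \eqref{DT for psi} and comparing with \eqref{Fredholm}. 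The asymptotics $\psi_\sigma\sim e^{\mathrm{i}kx}$ as $x\to+\infty$ follow because $K(k/\mathrm{i},s;x,t)\to0$ as $x\to+\infty$, which simultaneously shows that the reflection coefficient $R$ is unchanged and that the exponential decay of $q_\sigma-q$ in item (3) holds when $0\notin\operatorname*{Supp}\sigma$ (the rate being governed by the lower edge of $\operatorname*{Supp}\sigma$). To compute the new norming measure I would insert $\psi_\sigma$ and the corresponding Weyl solution $\varphi_\sigma$ into \eqref{eq for rou} and show, via the weak* boundary values of the resolvent, that $\operatorname{Im}\varphi_\sigma(x,\mathrm{i}k+0)\,\mathrm{d}k=\pi\psi_\sigma(x,\mathrm{i}k)(\mathrm{d}\rho+\mathrm{d}\sigma)(k)$, yielding $S_{q_\sigma}=\{R,\mathrm{d}\rho+\mathrm{d}\sigma\}$; items (4) and (5) then follow formally, since a positive pure point $\kappa_0$ of $\sigma$ contributes an atom to $\mathrm{d}\rho+\mathrm{d}\sigma$, i.e.\ a bound state (embedded precisely when $\kappa_0\in\operatorname*{Supp}\rho$), and invertibility is the algebraic statement that replacing $\sigma$ by $-\sigma$ and $q$ by $q_\sigma$ inverts the transformation, which I would verify directly from the resolvent identity $(\mathbb{I}+\mathbb{K}_\sigma)^{-1}$.

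The step I expect to be the genuine obstacle is making all of the above rigorous for a general step-type solution $q$, where $\psi$, $\varphi$ and $\rho$ need not be short-range objects and where the integrals in \eqref{DT for psi}–\eqref{DT for q} are over a continuum. The strategy here is the discretization announced in Sections \ref{discrete oper}–\ref{proof of main theorem}: approximate $\sigma$ by finitely supported signed measures $\sigma_n$ and $q$ by short-range cut-offs $q_n$, so that the continuous transformation becomes a classical finite binary Darboux transformation (insertion/removal of finitely many eigenvalues) whose formulas are known, and then pass to the weak* limit using the double convergence \eqref{double conv} and a log-determinant identity (Section \ref{log-det}) to control $q_{\sigma_n}\to q_\sigma$. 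The delicate points will be (i) uniform invertibility of $\mathbb{I}+\mathbb{K}_{\sigma_n}$ as $n\to\infty$, which is where positivity and the sign condition \eqref{conds} must be used quantitatively rather than merely formally; (ii) the behavior at $k=0$, where $|R|=1$ and the measure $\rho$ may have an absolutely continuous edge (as in \eqref{pure step}), so that the moment condition $\int|\mathrm{d}\sigma|/k<\infty$ is exactly what is needed to keep the limiting kernel trace-class; and (iii) verifying that $q_\sigma$ is again step-type, i.e.\ that condition 2 of Definition \ref{step-type} persists under the limit, which is where the smoothness-preservation claim (2) and the identification of the limit as a bona fide KdV solution are secured.
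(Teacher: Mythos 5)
Your proposal follows essentially the same route as the paper: unique solvability of (\ref{Fredholm}) via positivity of the Gram kernel, reduction to the finite-rank binary Darboux transformation (Proposition \ref{finite neg spec}) by discretizing $\sigma$, conversion of the determinant formula into (\ref{DT for q}) via the abstract log-determinant identity of Section \ref{log-det}, and passage to the limit in trace norm within the double-convergence framework of Definition \ref{step-type}. The only notable divergences are in detail: the paper verifies that $\psi_{\sigma}$ solves the Schr\"{o}dinger equation by passing the Wronskian identity (\ref{W}) to the limit rather than by direct differentiation, does not need to cut off the background $q$ itself, and handles the signed case by an explicit split into adding ($\mathrm{d}\sigma\geq 0$) and removing ($\mathrm{d}\sigma\leq\mathrm{d}\rho$) cases, where invertibility of $I-\mathbf{K}$ rests on the orthonormality in $L^{2}$ of the removed eigenfunctions --- the point you correctly flag as delicate but do not resolve.
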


Some comments:

1. Theorem \ref{MainThm} is an extension of Theorem 1 of our recent
\cite{Ryb21} where all considerations are conducted in the short-range
setting. It was observed in \cite{Ryb21} (Remark 3) that only the right Jost
solution $\psi$ explicitly appears and the left Jost solution does not
explicitly appear in any formulas. This led us to conjecture that $q$ may be
quite general at $-\infty$. Theorem \ref{MainThm} answers this conjecture in
the affirmative. The relaxation of the decay assumption at $-\infty$ results
in the appearance of rich negative spectrum described by a measure $\rho$ of
arbitrary nature (in the short-range case it is discrete with finitely many
pure points). To remain in the short-range setting we could consider in
\cite{Ryb21} perturbations $\sigma$ supported only at finitely many points. In
Theorem \ref{MainThm} we no longer have to assume this and moreover $\sigma$
can be continuous. For this reason Theorem \ref{MainThm} can also be named the
\emph{continuous binary Darboux transformation} or the continuous analog of
the double commutation method \cite{GesztTeschl96}. Note that the formula
(\ref{DT for q}) appears to be new even in this case when $\sigma$ is discrete
with finitely many points (c.f. \cite{GesztTeschl96,Ryb21}). Furthermore,
binary Darboux transformations are typically written differently for adding
and removing eigenvalues and thus Theorem \ref{MainThm} combines the two.

2. The main feature of the transformation (\ref{DT for q}) is that it allows
us to modify the negative spectrum in a nearly unrestricted way while leaving
the reflection coefficient unchanged.

3. Since $\psi\left(  \cdot,\mathrm{i}s\right)  $ is real, if d$\sigma\geq0$
then the kernel $K\left(  \alpha,s;x,t\right)  $ is positive definite and
(\ref{Fredholm}) is automatically uniquely solvable. However, if d$\sigma<0$
then this is no longer true in general. A counterexample is easily produced by
$R=0$, $\rho=0$, and d$\sigma\left(  k\right)  =-c^{2}\delta\left(
k-\kappa\right)  $d$k,\kappa>0$. Indeed a simple formal computation gives then%
\[
q_{\sigma}\left(  x,t\right)  =-2\partial_{x}^{2}\log\left(  1-c^{2}%
\mathrm{e}^{8\kappa^{3}t-2\kappa x}\right)  ,
\]
which is a singular KdV solution (has moving real double pole). Thus the
second condition (\ref{conds}) cannot be dropped. In other words, removing an
inexistent eigenvalue produces a singular solution. Note that this way our
approach offers a setting for generating singular solutions on nonzero
backgrounds. It would be interesting to compare our method to Wronskian
considerations commonly used in the this context (see, e.g. the influential
\cite{Ma05}).

4. The first condition (\ref{conds}) guarantees that $0$ is not an eigenvalue
of $q_{\sigma}$. If $0$ were an eigenvalue then $q_{\sigma}$ would not be
short-range at $+\infty$. We believe it can be relaxed to read that $\sigma$
is a Carleson measure (see, e.g. for the definition \cite{Koosis}) but cannot
be completely removed.

5. Equations (\ref{DT for psi}) and (\ref{psi}) give an explicit formula for
reconstruction of the Jost solution $\psi_{\sigma}\left(  x,t;k\right)  $ for
any $\operatorname{Im}k\geq0$ via its values on $\operatorname*{Supp}\sigma$.
Note that there is no such formula for a generic analytic function.

6. Step-type potentials admit embedded discrete negative spectrum. Note that
there are no positive embedded bound states if $q$ is short-range at $+\infty$.

7. The formula (\ref{DT for q}) can also be written as%
\[
q_{\sigma}\left(  x,t\right)  =q\left(  x,t\right)  -2\partial_{x}^{2}\log
\det\left[  I+\mathbb{K}\left(  x,t\right)  \right]  ,
\]
where $\mathbb{K}\left(  x,t\right)  $ is a trace class integral operator
acting by the formula%
\[
\mathbb{K}\left(  x,t\right)  f=\int\left[  \int_{x}^{\infty}\psi\left(
z,t;\mathrm{i}\alpha\right)  \psi\left(  z,t;\mathrm{i}s\right)
\mathrm{d}z\right]  f\left(  s\right)  \mathrm{d}\sigma_{t}\left(  s\right)
.
\]

8. As we have mentioned, it is proven in \cite{Hruslov76,Venak86} that a
short-range perturbation of a pure step function, i.e. $q(x)=-h^{2}%
,\;x<0,\;q(x)=0,\;x\geq0$ (hydraulic jump) gives rise to an infinite sequence
of asymptotic solitons of height $-2h^{2}$ (twice as high as the initial
hydraulic jump). Theorem \ref{MainThm} suggests that this effect is a much
more general phenomenon and the fastest soliton always propagates with the
asymptotic velocity $2h^{2}$ where $h^{2}=-\inf\operatorname*{Spec}%
\mathbb{L}_{q}$. Computing asymptotic phases is the most difficult part. (Work
in progress).

9. As one of our referees drew our attention to, if $q=0$ our (\ref{DT for q})
produces a notion of generalized reflectionless potentials that is reminiscent
of the construction due to Lundina \cite{Lundina85} and Marchenko
\cite{Marchenko91}. More specifically, it is proven in \cite{Marchenko91} that
if the integral equation
\begin{align}
&  \mathrm{e}^{-4\kappa^{3}t+\kappa x}\left\{  a\left(  \kappa\right)
y\left(  \kappa\right)  -\frac{1}{2\kappa}\left[  \int\frac{y\left(  s\right)
-y\left(  \kappa\right)  }{s-\kappa}\mathrm{d}\sigma\left(  s\right)
-1\right]  \right\} \nonumber\\
&  =\mathrm{e}^{4\kappa^{3}t-\kappa x}\left\{  \left[  a\left(  \kappa\right)
-1\right]  y\left(  -\kappa\right)  -\frac{1}{2\kappa}\left[  \int%
\frac{y\left(  s\right)  -y\left(  -\kappa\right)  }{s+\kappa}\mathrm{d}%
\sigma\left(  s\right)  -1\right]  \right\}  \label{March}%
\end{align}
is uniquely solvable for $y\left(  \kappa,x,t\right)  $, then%
\begin{equation}
q\left(  x,t\right)  =-2\partial_{x}\int y\left(  \kappa,x,t\right)
\mathrm{d}\sigma\left(  \kappa\right)  \label{march q}%
\end{equation}
satisfies the KdV equation with data $q\left(  x,0\right)  =q\left(  x\right)
$, $a$ and $\sigma$ being related to $q\left(  x\right)  $. It is not obvious
how (\ref{March}) and (\ref{march q}) are related to (\ref{Fredholm}) and
(\ref{DT1}), respectively, but it would certainly be an interesting question
to ask (especially because an open question related to the measure $\sigma$ is
stated in \cite{Marchenko91}). Note that, as apposed to our (\ref{Fredholm}),
solubility of (\ref{March}) is not on the surface. Finally, we also mention
that our methods are very different from \cite{Marchenko91}, where smoothness
of $q\left(  x\right)  $ is essential, while it is not in our construction.

10. And last but not least, we discuss the relevance of Theorem \ref{MainThm}
to soliton gases. Back in 1971, Zakharov \cite{Zakharov(soliton gas)71}
pioneered a statistical description of multisoliton solutions (\emph{rarefied
soliton gas}) which became a big deal in this millennium after the
introduction of \emph{integrable turbulence} and general framework for random
solutions of integrable PDEs in his influential \cite{Zakharov2009}. This
phenomenon was observed in shallow water wind waves in Currituck Sound, NC
\cite{CostaOsbornePRL14} and was experimentally reproduced in a water tank
\cite{RedoretalPRL19} and optical fibers, drawing even greater interest in a
number of research groups (see e.g.
\cite{Condi22,ZakharovetalPhysD2016,El16,El21,Grava21,NabelikZakharov2019})
with different approaches. \emph{Dense soliton gas and condensate}, particular
important from the physical view point, can be modelled as a closure of pure
soliton solutions (c.f.
\cite{ZakharovetalPhysD2016,El2016,Gesztesy-Duke92,ElKom05,El-et-al2011}). We
mention only \cite{ZakharovetalPhysD2016} where the \emph{Zakharov-Manakov
dressing method} \cite{ZakhMan85} was used to yield \emph{primitive
potentials}, which are \emph{one-gap} but neither periodic nor decaying. Such
solutions are parametrized by \emph{dressing functions }$r_{1},r_{2}$ and
essentially only $r_{2}=0$ has been studied rigorously \cite{Grava21} via RHP
technics. For $r_{2}\neq0$ the only case of $r_{1}=r_{2}$ was just considered
in \cite{NabelikZakharov2019} (\emph{elliptic one-gap potential }if
$r_{1}=r_{2}=1$) but the general case is still out of reach. Note that the
dressing method isn't quite IST and cannot solve a Cauchy problem
\cite{Marchenko91}. While seemingly unrelated, Theorem \ref{MainThm} may put
many KdV soliton gas considerations in the context of the IST for the Cauchy
problem for the KdV equation and provide a rigorous framework to study soliton
gases. In fact, in the soliton gas community they actually study statistical
quantities (\emph{density of states}, \emph{effective velocity},
\emph{collision rate}, etc.) of our left step-type KdV\ solutions from Theorem
\ref{MainThm} with $q\left(  x,t\right)  =0$ (zero background) and specific
a.c. $\mathrm{d}\sigma\geq0$ supported on $\left[  -b^{2},-a^{2}\right]  $
with $a>0$. Inclusion of $q\left(  x,t\right)  \neq0$ (nonzero backgrounds)
and $a=0$ (small solitons) into the picture are good open problems. Another
open problem comes from some numerics suggesting that "injection" of a soliton
into \emph{soliton condensates} may locally in time and space "evaporate" the
latter but this effect is not described mathematically. We are yet to look
into these questions but at least Theorem \ref{MainThm} eases our concern
about rather formal realization of limiting (scaling) arguments quite common
in the physical literature on the subject.

\section{Classical meromorphic vector Riemann-Hilbert problem\label{MRHP}}

In this section we review the standard meromorphic vector Riemann-Hilbert
problem that arises from the classical inverse scattering formalism for the
KdV equation following \cite{GT09}. It will be the starting point in our
search for a suitable formulation of the corresponding Riemann-Hilbert problem
for arbitrary step-type potentials. Through this section we denote%
\[
R(k,t)=R(k)\mathrm{e}^{8\mathrm{i}k^{3}t},c_{n}\left(  t\right)
=c_{n}\mathrm{e}^{4\kappa_{n}^{3}t}.
\]

\textbf{Meromorphic Riemann-Hilbert Problem (MRHP)}: Let $S_{q}=\left\{
R,\left(  \kappa_{n},c_{n}\right)  \right\}  $ be the scattering data of a
short-range potential and let%
\[
\mathbf{J}(k,t):=%
\begin{pmatrix}
1-|R(k,t)|^{2} & -\overline{R(k,t)}\\
R(k,t) & 1
\end{pmatrix}
\ \ \text{(\emph{jump matrix}),}%
\]
and $t$ real parameter. Find a row function $\mathbf{v=}\left(
\begin{array}
[c]{ccc}%
\varphi & , & \psi
\end{array}
\right)  $ meromorphic in $\operatorname{Im}k\neq0$ with simple poles $\left(
\pm\mathrm{i}\kappa_{n}\right)  $, such that:

\begin{enumerate}
\item \label{sym cond}\emph{Symmetry condition}:%
\begin{equation}
\overline{\mathbf{v}(\overline{k})}=\mathbf{v}(-k)=\mathbf{v}(k)%
\begin{pmatrix}
0 & 1\\
1 & 0
\end{pmatrix}
,\ \ \ \operatorname{Im}k\neq0. \label{sym conds}%
\end{equation}

\item \label{jump cond}\emph{Jump condition}: The boundary values
$\mathbf{v}(k\pm$\textrm{i}$0)$ are related by%
\begin{equation}
\mathbf{v}(k+\mathrm{i}0)=\mathbf{v}(k-\mathrm{i}0)\mathbf{J}%
(k,t),\ \ \ \operatorname{Im}k=0. \label{jump conds}%
\end{equation}

\item \label{pole conds}\emph{Pole conditions}:%
\begin{equation}
\operatorname*{Res}_{\pm\mathrm{i}\kappa_{n}}\mathbf{v}=\mathrm{i}c_{n}%
^{2}\left(  t\right)  \left\{
\begin{array}
[c]{c}%
\left.  \mathbf{v}%
\begin{pmatrix}
0 & 0\\
1 & 0
\end{pmatrix}
\right\vert _{\mathrm{i}\kappa_{n}}\\
-\left.  \mathbf{v}%
\begin{pmatrix}
0 & 1\\
0 & 0
\end{pmatrix}
\right\vert _{-\mathrm{i}\kappa_{n}}%
\end{array}
\right.  ,\ \ \ 1\leq n\leq N. \label{pole cond}%
\end{equation}

\item \label{asym cond}\emph{Asymptotic condition}\footnote{It is more common
to set $\mathbf{v}(k)\sim\left(
\begin{array}
[c]{ccc}%
\mathrm{1} & , & \mathrm{1}%
\end{array}
\right)  ,k\rightarrow\infty$, to remove oscillations but then the jump matrix
gains an undesirable dependence on $x.$}: For real $x$%
\begin{equation}
\mathbf{v}(k)\sim\left(
\begin{array}
[c]{ccc}%
\mathrm{e}^{-\mathrm{i}kx} & , & \mathrm{e}^{\mathrm{i}kx}%
\end{array}
\right)  ,\ \ \ k\rightarrow\infty. \label{asym conds}%
\end{equation}

\end{enumerate}

As is well-known,%
\begin{equation}
\mathbf{v}(k)=\left(
\begin{array}
[c]{ccc}%
\varphi\left(  k\right)  & , & \psi\left(  k\right)
\end{array}
\right)  , \label{v}%
\end{equation}
where%
\[
\varphi\left(  k\right)  =T\left(  k\right)  \psi_{-}\left(  x,t;k\right)
,\ \ \ \psi\left(  k\right)  =\psi_{+}\left(  x,t;k\right)  .
\]
solves MRHP and the potential $q\left(  x,t\right)  $ is recovered from its
second component. It is important that this component has no poles.

Note that the pole conditions (\ref{pole cond}) can be written in the scalar
form%
\begin{equation}
\operatorname*{Res}_{\mathrm{i}\kappa_{n}}\varphi\left(  k\right)
=\mathrm{i}c_{n}^{2}\left(  t\right)  \psi\left(  \mathrm{i}\kappa_{n}\right)
,\operatorname*{Res}_{-\mathrm{i}\kappa_{n}}\varphi\left(  k\right)
=-\mathrm{i}c_{n}^{2}\left(  t\right)  \psi\left(  \mathrm{i}\kappa
_{n}\right)  , \label{pole cond vector form}%
\end{equation}
which we now represent in an integral form. Due to symmetry it is enough to
consider only the first equation in (\ref{pole cond vector form}). Let $I$ be
an open interval in $\mathbb{R}_{+}$, then one can easily see that%
\begin{align}
\int_{I}\operatorname{Im}\varphi\left(  \mathrm{i}s-0\right)  \mathrm{d}s  &
=\lim_{\varepsilon\rightarrow+0}\int_{I}\operatorname{Im}\varphi\left(
\mathrm{i}s-\varepsilon\right)  \mathrm{d}s\label{*}\\
&  =\lim_{\varepsilon\rightarrow+0}\int_{I}\frac{\varphi\left(  \mathrm{i}%
s-\varepsilon\right)  -\overline{\varphi\left(  \mathrm{i}s-\varepsilon
\right)  }}{2\mathrm{i}}\mathrm{d}s\nonumber\\
&  =-\lim_{\varepsilon\rightarrow+0}\int_{I}\frac{\varphi\left(
\mathrm{i}s-\varepsilon\right)  -\varphi\left(  \mathrm{i}s+\varepsilon
\right)  }{2}\mathrm{id}s\nonumber\\
&  =\lim_{\varepsilon\rightarrow+0}\frac{1}{2}\int_{C_{\varepsilon}\left(
I\right)  }\varphi\left(  z\right)  \mathrm{d}z=\mathrm{i}\pi\sum_{\kappa
_{n}\subset I}\operatorname*{Res}_{\mathrm{i}\kappa_{n}}\varphi.\nonumber
\end{align}
Here $C_{\varepsilon}\left(  I\right)  $ is a contour in $\mathbb{C}^{+}$
enclosing $I$ and shrinking to $I$ as $\varepsilon\rightarrow0$. It follows
now from this and (\ref{pole cond vector form}) that
\begin{align}
\int_{I}\operatorname{Im}\varphi\left(  \mathrm{i}s-0\right)  \mathrm{d}s  &
=-\sum_{\kappa_{n}\subset I}\pi c_{n}^{2}\psi\left(  \mathrm{i}\kappa
_{n}\right) \label{lhs of pole cond v. 1}\\
&  =-\pi\int_{I}\psi\left(  \mathrm{i}s\right)  \mathrm{d}\rho\left(
s\right)  ,\nonumber
\end{align}
where%
\begin{equation}
\mathrm{d}\rho\left(  s\right)  :=\sum_{n}c_{n}^{2}\left(  t\right)
\delta_{\kappa_{n}}\left(  s\right)  \mathrm{d}s, \label{discrete rou}%
\end{equation}
and $\delta_{\kappa_{n}}$ is the Dirac delta function supported at $\kappa
_{n}$. Since $I$ is arbitrary, it follows from (\ref{*}) that%
\[
\operatorname{Im}\varphi\left(  \mathrm{i}s-0\right)  \mathrm{d}s=-\pi
\psi\left(  \mathrm{i}s\right)  \mathrm{d}\rho\left(  s\right)
\]
and, since $\operatorname{Im}\psi\left(  \mathrm{i}s-0\right)  =0$ ($\psi$ is
analytic away from $\mathbb{R}$), we can rewrite now the pole conditions
(\ref{pole cond}) as%
\begin{align}
\operatorname{Im}\mathbf{v}\left(  \mathrm{i}s-0\right)   &  =\left\{
\begin{array}
[c]{c}%
\mathbf{v}\left(  \mathrm{i}s+0\right)
\begin{pmatrix}
0 & 0\\
-\pi\chi_{+}\left(  s\right)  \mathrm{d}\rho\left(  \left\vert s\right\vert
\right)  /\mathrm{d}s & 0
\end{pmatrix}
,\ \ \ s>0\\
-\mathbf{v}\left(  \mathrm{i}s+0\right)
\begin{pmatrix}
0 & -\pi\chi_{-}\left(  s\right)  \mathrm{d}\rho\left(  \left\vert
s\right\vert \right)  /\mathrm{d}s\\
0 & 0
\end{pmatrix}
,\ \ \ s<0
\end{array}
\right. \label{Im v}\\
&  =\mathbf{v}\left(  \mathrm{i}s+0\right)  \left(
\begin{array}
[c]{cc}%
0 & -2\mathrm{i}\pi\chi_{-}\left(  s\right)  \mathrm{d}\rho\left(  \left\vert
s\right\vert \right)  /\mathrm{d}s\\
-2\mathrm{i}\pi\chi_{+}\left(  s\right)  \mathrm{d}\rho\left(  \left\vert
s\right\vert \right)  /\mathrm{d}s & 0
\end{array}
\right)  ,\nonumber
\end{align}
where the derivatives are understood in the sense of distributions and, as
always, $\chi_{\pm}$ is the characteristic function of $\mathbb{R}_{\pm}$.

Since $2\mathrm{i}\operatorname{Im}\mathbf{v}\left(  \mathrm{i}s-0\right)
=\mathbf{v}\left(  \mathrm{i}s-0\right)  -\overline{\mathbf{v}\left(
\mathrm{i}s-0\right)  }=\mathbf{v}\left(  \mathrm{i}s-0\right)  -\mathbf{v}%
\left(  \mathrm{i}s+0\right)  $, (\ref{Im v}) yields%
\begin{equation}
\mathbf{v}\left(  \mathrm{i}s-0\right)  =\mathbf{v}\left(  \mathrm{i}%
s+0\right)  \left(
\begin{array}
[c]{cc}%
1 & -2\mathrm{i}\pi\chi_{-}\left(  s\right)  \delta\left(  s\right) \\
-2\mathrm{i}\pi\chi_{+}\left(  s\right)  \delta\left(  s\right)  & 1
\end{array}
\right)  , \label{jump cond across im line}%
\end{equation}
where%
\begin{equation}
\delta\left(  s\right)  :=\operatorname*{sgn}\left(  s\right)  \sum_{n}%
c_{n}^{2}\left(  t\right)  \delta_{\kappa_{n}}\left(  \left\vert s\right\vert
\right)  . \label{delta}%
\end{equation}
Observe that we have reformulated the pole conditions (\ref{pole cond}) as a
jump condition (\ref{jump cond across im line}) across the imaginary line, the
negative spectrum data being encoded in (\ref{delta}). Note that the jump
matrix in (\ref{jump cond across im line}) is not a continuous function but a
distribution. The main advantage of (\ref{jump cond across im line}) over
(\ref{pole cond}) is that it readily yields a generalization to an arbitrary
negative spectrum once we allow $\delta$ to be the distributional derivative
of an arbitrary positive finite measure. This will be done in Section
\ref{cont DT}.

\section{Quadrant-analytic vector Riemann-Hilbert problem\label{QARHP}}

\bigskip In this section we introduce a jump matrix with singular entries that
plays the central role in our considerations.

Let $\Sigma$ be a contour consisting of three lines:\ the real line
$\mathbb{R}$ oriented from left to right, the part of the imaginary line
$\mathrm{i}\mathbb{R}_{+}$ in the upper half-plane oriented upwards, and the
part of the imaginary line $-\mathrm{i}\mathbb{R}_{-}$ in the lower half-plane
oriented downwards. Apparently, $\Sigma$ divides the complex plane into
quadrants. Given a function $f\left(  k\right)  $ analytic on $\mathbb{C}%
\setminus\Sigma$, call such functions \emph{quadrant analytic\footnote{In
fact, $f$ may be analytic on some sets of the imaginary line. In particular,
may be analytic on both $\mathbb{C}^{\pm}.$}}, we denote by $f_{\pm}\left(
k\right)  $ nontangentional boundary values of $f$ from the positive/negative
$\left(  \pm\right)  $ side of $\Sigma$. Here the positive/negative side is
the one that lies to the left/right from $\Sigma$ as we traverse the contour
in the direction of orientation.

Let $R\left(  k\right)  $ be as in Proposition \ref{props of R}%
\[
R(k,t)=R(k)\mathrm{e}^{8\mathrm{i}k^{3}t},
\]
$\rho\left(  k\right)  $ a nonnegative finite measure on $\mathbb{R}_{+}$ and
$\delta\left(  k,t\right)  =$\textrm{e}$^{4k^{3}t}$d$\rho\left(  k\right)
/$d$k$ (generalized density).

\textbf{Quadrant analytic vector Riemann-Hilbert problem (QARHP)}%
\label{quadrant} Let $\boldsymbol{J}\left(  k,t\right)  $ be a $2\times2$
matrix-valued function defined on $\Sigma$ as follows%
\begin{equation}
\mathbf{J}\left(  k,t\right)  =\left\{
\begin{array}
[c]{cc}%
\mathbf{J}_{R}\left(  k,t\right)  , & \operatorname{Im}k=0\\
\mathbf{J}_{\rho}\left(  k,t\right)  , & \operatorname{Re}k=0
\end{array}
\right.  , \label{J}%
\end{equation}
where%
\[
\mathbf{J}_{R}\left(  k,t\right)  :=%
\begin{pmatrix}
1-|R(k,t)|^{2} & -\overline{R(k,t)}\\
R(k,t) & 1
\end{pmatrix}
,k\in\mathbb{R},
\]%
\[
\mathbf{J}_{\rho}\left(  k,t\right)  :=\left(
\begin{array}
[c]{cc}%
1 & -2\mathrm{i}\pi\chi_{-}\left(  s\right)  \delta\left(  s,t\right) \\
-2\mathrm{i}\pi\chi_{+}\left(  s\right)  \delta\left(  s,t\right)  & 1
\end{array}
\right)  ,\ k=\mathrm{i}s,\ s\in\mathbb{R}.
\]
Find a row function $\mathbf{v=}\left(
\begin{array}
[c]{ccc}%
\varphi & , & \psi
\end{array}
\right)  $ analytic in each quadrant such that:

\begin{enumerate}
\item \emph{Symmetry conditions: }%
\begin{equation}
\overline{\mathbf{v}(\overline{k})}=\mathbf{v}(-k)=\mathbf{v}(k)%
\begin{pmatrix}
0 & 1\\
1 & 0
\end{pmatrix}
. \label{sym}%
\end{equation}

\item \emph{Jump condition: }at least in the sense of distributions%
\begin{equation}
\mathbf{v}_{+}(k)=\mathbf{v}_{-}(k)\mathbf{J}(k,t),\text{ for }k\in\Sigma.
\label{jump}%
\end{equation}

\end{enumerate}

\begin{remark}
\begin{enumerate}
\item We stated our QARHP in abstract terms but it, of course, comes from the
Riemann-Hilbert formulation of the IST for the KdV equation. Thus, each
component of $\mathbf{v=}\left(
\begin{array}
[c]{ccc}%
\varphi & , & \psi
\end{array}
\right)  $ solves the Schr\"{o}dinger equation (and hence depends on the
spatial variable $x$) and the jump matrix $\mathbf{J}\left(  k,t\right)  $
depends on $t$ (time) as it takes into account the time evolution of
scattering data. Therefore, $\mathbf{v}\left(  k\right)  $ everywhere below
depends on $\left(  x,t\right)  $ as parameters (and consequently many other
quantities) but we agree to suppress this dependence when it causes no confusion.

\item We do not assume that $\boldsymbol{v}_{\pm}$ are continuous (which is
typically assumed). Since $\boldsymbol{J}$ is a distribution one then may
reasonably ask (as one of the referees did) "What does it mean to multiply a
distribution by a merely almost-everywhere defined function?" We see no way
such multiplication can be well-defined but there is no reason for that in our
situation. Indeed, while for a generic pair of solutions $\varphi,\psi$ to the
Schr\"{o}dinger equation condition 2 need not be satisfied in any sense, in
the context of (right) step-type potentials there is no problem with
(\ref{jump}). To explain this let us examine $\boldsymbol{J}_{R}$ and
$\boldsymbol{J}_{\rho}$. Since $\boldsymbol{J}_{R}$ is in $L^{\infty}$ there
is no problem on $\mathbb{R}$. On $\Sigma\diagdown\mathbb{R=}\mathrm{i}%
\mathbb{R}$ the jump matrix $\boldsymbol{J}_{\rho}$ has indeed distributional
off diagonal entries but once we rewrite (\ref{jump}) as ($s\geq0$)%
\begin{align*}
& \left(
\begin{array}
[c]{ccc}%
\varphi\left(  \mathrm{i}s-0\right)   & , & \psi\left(  \mathrm{i}s-0\right)
\end{array}
\right)  \\
& =\left(
\begin{array}
[c]{ccc}%
\varphi\left(  \mathrm{i}s+0\right)  -2\mathrm{i}\pi\delta\left(  s,t\right)
\psi\left(  \mathrm{i}s+0\right)   & , & \psi\left(  \mathrm{i}s+0\right)
\end{array}
\right)  ,
\end{align*}
one immediately sees that if $\psi$ is the right Jost solution (necessarily
analytic for $\operatorname{Im}k>0$ with continuous boundary values and hence
$\psi\left(  \mathrm{i}s-0\right)  =\psi\left(  \mathrm{i}s+0\right)
=\psi\left(  \mathrm{i}s\right)  $ is smooth) the product $\delta\left(
s,t\right)  \psi\left(  \mathrm{i}s\right)  $ is then a
distribution\footnote{Or even a function if the measure $\rho$ is an a.c.} for
$s\geq0$. Due to symmetry (\ref{sym}) this also holds for $s<0$. Thus the bad
situation when condition 2 is ill-posed (for nontrivial $\mathbf{v}$) may not
materialize for the type of problems we are concerned in this paper.

\item The asymptotic condition (\ref{asym conds}) imposes \cite{GT09} the
following condition on $\mathbf{J}$%
\[
\mathbf{J}(-k,t)=%
\begin{pmatrix}
0 & 1\\
1 & 0
\end{pmatrix}
\mathbf{J}(k,t)^{-1}%
\begin{pmatrix}
0 & 1\\
1 & 0
\end{pmatrix}
,
\]
which our $\mathbf{J}$ clearly obeys.

\item The asymptotic condition (\ref{asym conds}) is clearly missing which
results in non-uniqueness. Indeed, a scalar multiple of a solution satisfies
conditions (\ref{sym}) and (\ref{jump}). In our setting it is more convenient
to restore uniqueness by imposing a slightly different from (\ref{asym conds}) condition.
\end{enumerate}
\end{remark}

The following proposition is easy but crucial to our considerations. Since the
variables $\left(  x,t\right)  $ appear in the QARHP as parameters we leave
them out.

\begin{proposition}
[Gauge transformation]\label{prop on gauge}If $\mathbf{v}\left(  k\right)  $
is a solution of QARHP and $\mu$ is any finite (signed) symmetric measure,
then%
\begin{equation}
\widetilde{\mathbf{v}}\left(  k\right)  =\mathbf{v}\left(  k\right)
+\int\frac{W\left\{  \mathbf{v}\left(  k,\cdot\right)  ,\psi\left(
\cdot,\mathrm{i}s\right)  \right\}  }{k^{2}+s^{2}}\mathrm{d}\mu\left(
s\right)  ,\ k\notin\Sigma, \label{gauge}%
\end{equation}
satisfies the symmetry condition (\ref{sym}) and%
\[
\widetilde{\mathbf{v}}_{+}(k)=\widetilde{\mathbf{v}}_{-}(k)\mathbf{J}%
_{R}(k,t)\text{ }\operatorname{Im}k=0\text{ (jump across the real line),}%
\]
but not across the imaginary line.
\end{proposition}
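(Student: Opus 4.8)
The plan is to treat the added term
\[
\mathbf{g}(k):=\int\frac{W\{\mathbf{v}(k,\cdot),\psi(\cdot,\mathrm{i}s)\}}{k^{2}+s^{2}}\,\mathrm{d}\mu(s)
\]
by exploiting, separately, the bilinearity of the Wronskian and the elementary analytic structure of the scalar kernel $1/(k^{2}+s^{2})$. Write $\mathbf{P}:=\left(\begin{smallmatrix}0&1\\1&0\end{smallmatrix}\right)$ for the flip matrix. The first fact I would record is that, for a matrix $\mathbf{M}$ independent of the spatial variable $x$, $W\{\mathbf{v}\mathbf{M},\psi(\cdot,\mathrm{i}s)\}=W\{\mathbf{v},\psi(\cdot,\mathrm{i}s)\}\mathbf{M}$, since $W\{\cdot,\psi(\cdot,\mathrm{i}s)\}$ acts entrywise on the row $\mathbf{v}=(\varphi,\psi)$ and is $\mathbb{R}$-linear in its first slot. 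The second is that $1/(k^{2}+s^{2})$ is even in $k$, real for real $s$, and analytic in $k$ across $\mathbb{R}\setminus\{0\}$, with its only singularities at $k=\pm\mathrm{i}s$, i.e. on $\mathrm{i}\mathbb{R}$. Together these already indicate the mechanism: the real-line jump of $\mathbf{v}$ is carried through $\mathbf{g}$ intact, while the kernel alone manufactures new jumps exactly on the imaginary axis.

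For the symmetry condition (\ref{sym}) I would check its two equalities separately. Replacing $k$ by $-k$ and using $\mathbf{v}(-k)=\mathbf{v}(k)\mathbf{P}$ together with the evenness of the denominator, the factoring property gives $\mathbf{g}(-k)=\mathbf{g}(k)\mathbf{P}$, whence $\widetilde{\mathbf{v}}(-k)=\widetilde{\mathbf{v}}(k)\mathbf{P}$. For the conjugation equality I would use $\overline{W\{f,g\}}=W\{\overline{f},\overline{g}\}$ (the Wronskian involves only $x$-derivatives and products), the reality of $\psi(\cdot,\mathrm{i}s)$, the reality of the measure $\mu$, and $\overline{\mathbf{v}(\overline{k})}=\mathbf{v}(-k)$, to arrive at $\overline{\mathbf{g}(\overline{k})}=\mathbf{g}(-k)$ and hence $\overline{\widetilde{\mathbf{v}}(\overline{k})}=\widetilde{\mathbf{v}}(-k)$; the symmetry of $\mu$ is what lets the two halves of $\mathrm{i}\mathbb{R}$ (equivalently $s$ of either sign) be treated on an equal footing. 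Combining the two equalities yields (\ref{sym}) for $\widetilde{\mathbf{v}}$.

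For the jump across $\mathbb{R}$, fix real $k_{0}\neq0$. Because the kernel is analytic in a full two-sided neighborhood of $k_{0}$, its boundary values from $\mathbb{C}^{\pm}$ coincide, so the only boundary discontinuity of $\mathbf{g}$ is inherited from $\mathbf{v}$. Substituting $\mathbf{v}_{+}(k_{0})=\mathbf{v}_{-}(k_{0})\mathbf{J}_{R}(k_{0},t)$ and pulling the matrix $\mathbf{J}_{R}(k_{0},t)$, which is independent of both $x$ and $s$, out through the Wronskian and then through the $\mathrm{d}\mu$-integral gives $\mathbf{g}_{+}(k_{0})=\mathbf{g}_{-}(k_{0})\mathbf{J}_{R}(k_{0},t)$; adding the identical relation for $\mathbf{v}$ itself produces $\widetilde{\mathbf{v}}_{+}=\widetilde{\mathbf{v}}_{-}\mathbf{J}_{R}$ on $\mathbb{R}$. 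On $\mathrm{i}\mathbb{R}$, by contrast, the kernel is singular precisely at $k=\pm\mathrm{i}s$, so as $k$ crosses $\mathrm{i}\operatorname*{Supp}\mu$ the integral $\mathbf{g}$ acquires a Sokhotski--Plemelj jump; this is exactly the mechanism, made explicit below, that changes $\mathbf{J}_{\rho}$, which is why $\widetilde{\mathbf{v}}$ does \emph{not} reproduce the imaginary-axis jump of $\mathbf{v}$.

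The genuinely technical point, and the main obstacle, is the legitimacy of interchanging the nontangential boundary-value limits with $\int\mathrm{d}\mu$: only then do the entrywise and linear manipulations above hold for the actual boundary values rather than merely formally. I would settle this by dominated convergence, using the finiteness of $\mu$, the continuity of the boundary values of $\mathbf{v}$ off the origin, and the fact that for $k_{0}\in\mathbb{R}\setminus\{0\}$ one has $|k_{0}^{2}+s^{2}|\geq k_{0}^{2}>0$ uniformly in $s$, so that $W\{\mathbf{v}(k,\cdot),\psi(\cdot,\mathrm{i}s)\}/(k^{2}+s^{2})$ admits a $\mu$-integrable majorant as $k$ approaches $k_{0}$ from either side.
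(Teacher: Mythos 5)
Your proposal is correct and follows essentially the same route as the paper's own proof: both arguments rest on factoring the constant matrices $\left(\begin{smallmatrix}0&1\\1&0\end{smallmatrix}\right)$ and $\mathbf{J}_{R}$ through the Wronskian and the $\mathrm{d}\mu$-integral, together with the evenness and real-axis analyticity of the kernel $1/(k^{2}+s^{2})$. The only difference is that you make explicit the dominated-convergence justification for exchanging boundary-value limits with the integral and the Sokhotski--Plemelj mechanism on $\mathrm{i}\mathbb{R}$, which the paper leaves implicit (and develops only later, in the proof of Theorem \ref{Darboux for RHP}).
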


\begin{proof}
Check the symmetry conditions (\ref{sym}):%
\begin{align*}
\overline{\widetilde{\mathbf{v}}(\overline{k})}  &  =\overline{\mathbf{v}%
(\overline{k})}+\int\frac{W\left\{  \overline{\mathbf{v}(\overline{k})}%
,\psi\right\}  }{k^{2}+s^{2}}\mathrm{d}\mu\\
&  =\mathbf{v}(-k)+\int\frac{W\left\{  \mathbf{v}(-k),\psi\right\}  }%
{k^{2}+s^{2}}\mathrm{d}\mu=\widetilde{\mathbf{v}}(-k)\\
&  =\mathbf{v}(k)%
\begin{pmatrix}
0 & 1\\
1 & 0
\end{pmatrix}
+\int\frac{W\left\{  \mathbf{v}(k)%
\begin{pmatrix}
0 & 1\\
1 & 0
\end{pmatrix}
,\psi\right\}  }{k^{2}+s^{2}}\mathrm{d}\mu\\
&  =\widetilde{\mathbf{v}}(k)%
\begin{pmatrix}
0 & 1\\
1 & 0
\end{pmatrix}
\end{align*}
and both symmetry conditions follow. Check the jump across condition
(\ref{jump}) across the real line. Suppressing the variables, we have%
\begin{align*}
\widetilde{\mathbf{v}}_{+}  &  =\mathbf{v}_{+}+\int\frac{W\left\{
\mathbf{v}_{+},\psi\right\}  }{k^{2}+s^{2}}\mathrm{d}\mu=\mathbf{v}_{-}\left(
k\right)  \mathbf{J}_{R}+\int\frac{W\left\{  \mathbf{v}_{-}\mathbf{J}_{R}%
,\psi\right\}  }{k^{2}+s^{2}}\mathrm{d}\mu\\
&  =\left(  \mathbf{v}_{-}+\int\frac{W\left\{  \mathbf{v}_{-},\psi\right\}
}{k^{2}+s^{2}}\mathrm{d}\mu\right)  \mathbf{J}_{R}\mathbf{=}%
\widetilde{\mathbf{v}}_{-}\mathbf{J}_{R}%
\end{align*}
and the jump condition follows.
\end{proof}

Proposition \ref{prop on gauge} says that the transformation $\mathbf{v}%
_{+}\rightarrow\widetilde{\mathbf{v}}_{+}$ preserves the jump condition across
the real line but not across the imaginary line. By choosing $\mu$ we will be
able to modify the jump matrix $\mathbf{J}_{\rho}$ in (\ref{J}) in nearly
unrestricted way (without altering the reflection coefficient).

\section{Continuous binary Darboux transformation\label{cont DT}}

In this section we state and prove a Riemann-Hilbert version of the continuous
binary Darboux transformation. It is of course directly related to Theorem
\ref{MainThm} but we hope it deserves special attention. Through this section
all statements and proofs admit time dependent situation but since $t$ appears
as just a parameter we drop it from the list of variables emphasizing that the
material of this section need not be considered in the KdV context.

\begin{theorem}
[Continuous Darboux transformation]\label{Darboux for RHP} Let $q\left(
x\right)  $ be a step-type potential, $\psi\left(  x,k\right)  $ a right Jost
solution, and for all real $x$%
\[
\mathbf{v}\left(  x,k\right)  \mathbf{=}\left(
\begin{array}
[c]{ccc}%
\varphi\left(  x,k\right)  & , & \psi\left(  x,k\right)
\end{array}
\right)
\]
solve the QARHP with the jump matrix $\left(  \mathbf{J}_{R}\mathbf{,J}_{\rho
}\right)  $ given by (\ref{J}). Let $\sigma$ be a (signed) finite measure
supported on $\mathbb{R}_{+}$ such that the Fredholm integral equation%
\begin{equation}
y\left(  \alpha,x\right)  +\int K\left(  \alpha,s,x\right)  y\left(
s,x\right)  \mathrm{d}\sigma\left(  s\right)  =\psi\left(  x,\mathrm{i}%
\alpha\right)  ,\ \ \ \alpha\in\operatorname*{Supp}\sigma, \label{singular eq}%
\end{equation}
with the kernel%
\[
K\left(  \alpha,s,x\right)  =\int_{x}^{\infty}\psi\left(  z,\mathrm{i}%
\alpha\right)  \psi\left(  z,\mathrm{i}s\right)  \mathrm{d}z
\]
has a unique solution in $L^{2}\left(  \mathrm{d}\sigma\right)  $. Then%
\begin{align}
\widetilde{\mathbf{v}}\left(  x,k\right)   &  \mathbf{=}\left(
\begin{array}
[c]{ccc}%
\widetilde{\varphi}\left(  x,k\right)  & , & \widetilde{\psi}\left(
x,k\right)
\end{array}
\right) \nonumber\\
&  =\mathbf{v}\left(  x,k\right)  +\int W\{\mathbf{v}\left(  x,k\right)
,\psi\left(  x,\mathrm{i}s\right)  \}\frac{y\left(  s,x\right)  \mathrm{d}%
\sigma\left(  s\right)  }{k^{2}+s^{2}} \label{RHP general soltn}%
\end{align}
solves the QARHP with the jump matrix $\left(  \mathbf{J}_{R}\mathbf{,J}%
_{\rho+\sigma}\right)  $. Moreover $\widetilde{\psi}\left(  x,k\right)  $ is
the right Jost solution corresponding to the potential%
\begin{align}
\widetilde{q}\left(  x,t\right)   &  =q\left(  x,t\right) \label{q tilda}\\
&  +2\left[  \int\widetilde{\psi}\left(  x;\mathrm{i}s\right)  \psi\left(
x,\mathrm{i}s\right)  \mathrm{d}\sigma\left(  s\right)  \right]  ^{2}%
+4\int\widetilde{\psi}\left(  x,\mathrm{i}s\right)  \psi^{\prime}\left(
x,\mathrm{i}s\right)  \mathrm{d}\sigma\left(  s\right)  ,\nonumber
\end{align}
and
\begin{equation}
\widetilde{\psi}\left(  x,\mathrm{i}\alpha\right)  =y\left(  \alpha,x\right)
,\ \ \ \alpha\in\operatorname*{Supp}\sigma. \label{y}%
\end{equation}

\end{theorem}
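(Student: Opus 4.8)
The plan is to read the whole statement off the gauge transformation of Proposition \ref{prop on gauge}, applied with the (signed) measure $\mathrm{d}\mu(s)=y(s,x)\,\mathrm{d}\sigma(s)$, so that \eqref{RHP general soltn} is literally \eqref{gauge}. With this choice the two easy assertions are automatic: Proposition \ref{prop on gauge} already gives the symmetry \eqref{sym} and the jump $\widetilde{\mathbf v}_+=\widetilde{\mathbf v}_-\mathbf J_R$ across the real line, and it shows $\widetilde{\mathbf v}$ is analytic in each quadrant (being $\mathbf v$ plus a Cauchy-type integral whose only singularities lie on $\mathrm i\operatorname*{Supp}\sigma\cup(-\mathrm i\operatorname*{Supp}\sigma)$). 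The real work is the jump across the imaginary line and the identification of $\widetilde\psi$ as a Jost solution with potential \eqref{q tilda}. I would first record two Wronskian identities. Differentiating $W\{\psi(\cdot,k),\psi(\cdot,\mathrm is)\}$ in $x$ and using its vanishing at $+\infty$ gives $W\{\psi(x,k),\psi(x,\mathrm is)\}=-(k^2+s^2)K(k/\mathrm i,s,x)$, so the singular kernel cancels and the second component reduces to $\widetilde\psi(x,k)=\psi(x,k)-\int K(k/\mathrm i,s,x)y(s,x)\,\mathrm d\sigma(s)$. Analytic continuation from the real axis, where \eqref{eq8.1} yields $W\{\varphi(k),\psi(k)\}=2\mathrm ik$, gives $W\{\varphi(\mathrm i\alpha),\psi(\mathrm i\alpha)\}=-2\alpha$, a value unaffected by the jump of $\varphi$ because that jump is a multiple of $\psi$.

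Evaluating the first identity at $k=\mathrm i\alpha$ and comparing with \eqref{singular eq} yields \eqref{y} at once: $\widetilde\psi(x,\mathrm i\alpha)=\psi(x,\mathrm i\alpha)-\int K(\alpha,s,x)y(s,x)\,\mathrm d\sigma(s)=y(\alpha,x)$, since the Fredholm equation is exactly this relation rearranged. This self-consistency is the mechanism I would then exploit for the imaginary-line jump; by \eqref{sym} it suffices to treat the upper axis. Writing the first component as $u_\varphi(k)=\varphi(x,k)A(x,k)-\varphi'(x,k)B(x,k)$, with scalar Cauchy integrals $A=\int \psi'(x,\mathrm is)y(s,x)\,\mathrm d\sigma(s)/(k^2+s^2)$ and $B=\int \psi(x,\mathrm is)y(s,x)\,\mathrm d\sigma(s)/(k^2+s^2)$, I would apply the distributional Sokhotski--Plemelj formula to the kernel $1/(k^2+s^2)$, whose pole at $s=\alpha$ sits exactly on the contour; its jump across $\mathrm i\alpha$ is $(\mathrm i\pi/\alpha)\delta(s-\alpha)$. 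Collecting the two sources of discontinuity --- the kernel $\delta$ (producing a $\mathrm d\sigma$ term, into which the removable quotient $W\{\psi(\mathrm i\alpha),\psi(\mathrm is)\}/(s^2-\alpha^2)=-K(\alpha,s)$ and the value $W\{\varphi(\mathrm i\alpha),\psi(\mathrm i\alpha)\}=-2\alpha$ enter) and the pre-existing jump of $\varphi$ (producing a $\mathrm d\rho$ term) --- one obtains
\[
(\widetilde{\varphi}_+-\widetilde{\varphi}_-)\,\mathrm d\alpha=-2\mathrm i\pi\Bigl[\psi(\mathrm i\alpha)-\int K(\alpha,s)y(s)\,\mathrm d\sigma(s)\Bigr]\mathrm d\rho(\alpha)-2\mathrm i\pi\,y(\alpha)\,\mathrm d\sigma(\alpha).
\]
The bracket is $\widetilde\psi(\mathrm i\alpha)$, which by \eqref{y} equals $y(\alpha)$, so both terms carry the common factor $\widetilde\psi(\mathrm i\alpha)$ and combine to $-2\mathrm i\pi\,\widetilde\psi(\mathrm i\alpha)\,\mathrm d(\rho+\sigma)$. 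Since $\widetilde\psi$ is analytic across the axis, this is precisely the discontinuity prescribed by $\mathbf J_{\rho+\sigma}$, and the QARHP claim follows.

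It remains to recognize $\widetilde\psi$ as a Jost solution and to derive \eqref{q tilda}. From $\widetilde\psi=\psi-\int K(k/\mathrm i,s,x)y(s,x)\,\mathrm d\sigma(s)$ and $K(k/\mathrm i,s,x)\to0$ as $x\to+\infty$ one gets the normalization $\widetilde\psi(x,k)\sim\mathrm e^{\mathrm ikx}$; that $\widetilde\psi$ satisfies a Schr\"{o}dinger equation is the standard dressing conclusion for a solution of the ($x$-independent jump) QARHP, and can also be checked by direct double differentiation using that $\psi(\cdot,k),\psi(\cdot,\mathrm is)$ solve the $q$-equation. To read off the potential I would expand for large $k$: integration by parts gives $K(k/\mathrm i,s,x)\sim(\mathrm i/k)\mathrm e^{\mathrm ikx}\psi(x,\mathrm is)$, so the $1/k$ coefficient of $\widetilde\psi$ is shifted by $-\mathrm i\int\psi(x,\mathrm is)y(s,x)\,\mathrm d\sigma(s)$, whence $\widetilde q=q+2\partial_x\!\int\psi(x,\mathrm is)\widetilde\psi(x,\mathrm is)\,\mathrm d\sigma(s)$ after using \eqref{y}. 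I would close the gap between this compact form and \eqref{q tilda} by the quadratic identity $\int W\{\psi(x,\mathrm is),\widetilde\psi(x,\mathrm is)\}\,\mathrm d\sigma(s)=\bigl(\int\psi(x,\mathrm is)\widetilde\psi(x,\mathrm is)\,\mathrm d\sigma(s)\bigr)^2$, proved by differentiating in $x$: since $\psi(\cdot,\mathrm is)$ and $\widetilde\psi(\cdot,\mathrm is)$ solve the Schr\"{o}dinger equation with potentials $q$ and $\widetilde q$, one has $\partial_xW\{\psi,\widetilde\psi\}=(\widetilde q-q)\,\psi\widetilde\psi$, and both sides of the identity vanish at $x=+\infty$ (this is where the first condition in \eqref{conds} secures the decay/integrability as $s\to0$).

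I expect the main obstacle to be the imaginary-line jump of the second paragraph: it is a genuinely distributional computation with the Cauchy pole lying on the contour, it mixes the jump of the kernel with the pre-existing jump of $\varphi$, and it succeeds only because the Fredholm equation \eqref{singular eq} collapses the two contributions onto the single factor $\widetilde\psi(\mathrm i\alpha)$. Making the Sokhotski--Plemelj step rigorous for a merely finite signed measure $\sigma$ (testing against $C_c^\infty$ functions and justifying the interchange of limits) rather than for an absolutely continuous one is the delicate point; once the two Wronskian identities and this jump analysis are in hand, the remaining steps are routine.
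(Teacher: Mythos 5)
Your proposal is correct in substance, and for the core of the theorem --- the jump of $\widetilde{\mathbf v}$ across the imaginary axis --- it is the same computation as the paper's, merely run in the opposite direction: the paper starts from the gauge transform (\ref{gauge}) with an \emph{unknown} measure $\mu$, splits the boundary value of the Cauchy kernel into its real part (which converges uniformly and, combined with the pre-existing jump $\operatorname{Im}\varphi(\mathrm i\alpha-0)=-\pi\delta(\alpha)\psi(\mathrm i\alpha)$ and the Wronskian identity (\ref{Wpsi}), produces the $\mathrm d\rho$-term $I_1$) and its imaginary part (a Poisson kernel, producing the $\mathrm d\mu$-term $I_2$ in the weak* sense), and then \emph{derives} the Fredholm equation (\ref{singular eq}) as the condition $\mathrm d\mu=y\,\mathrm d\sigma$; you instead posit $\mathrm d\mu=y\,\mathrm d\sigma$ and verify the jump, which is logically equivalent since the theorem hypothesizes unique solvability of (\ref{singular eq}). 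Your Sokhotski--Plemelj bookkeeping, the identity $W\{\psi(x,k),\psi(x,\mathrm is)\}=-(k^{2}+s^{2})K(k/\mathrm i,s,x)$, the value $W\{\varphi,\psi\}=2\mathrm ik$, and the collapse of both contributions onto the common factor $\widetilde\psi(\mathrm i\alpha)=y(\alpha)$ are all exactly the paper's steps. Where you genuinely diverge is the last part: the paper does \emph{not} prove inside this theorem that $\widetilde\psi$ is a Jost solution or that $\widetilde q$ has the form (\ref{q tilda}); it defers both to Section \ref{proof of main theorem}, where they are obtained by discretizing $\sigma$, invoking the known finite-rank result (Proposition \ref{finite neg spec}), rewriting the log-determinant via Proposition \ref{prop on log-det}, and passing to the limit in trace norm. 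Your alternative --- read off $\widetilde q=q+2\partial_x\!\int\psi\widetilde\psi\,\mathrm d\sigma$ from the $1/k$ term of the large-$k$ expansion and convert it to (\ref{q tilda}) via the identity $\int W\{\psi,\widetilde\psi\}\,\mathrm d\sigma=\bigl(\int\psi\widetilde\psi\,\mathrm d\sigma\bigr)^{2}$ (which checks out by differentiating and using the vanishing at $+\infty$) --- is shorter and stays entirely within the RHP framework, but it leans on the unproved assertion that $\widetilde\psi$ solves a Schr\"odinger equation; the paper explicitly remarks that the direct verification is ``rather involved'' and that is precisely what its discretization-plus-limit argument is designed to avoid, so that step of yours would need to be fleshed out (e.g.\ by the paper's route of passing to the limit in the Wronskian identity (\ref{W}) along a discretizing sequence). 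There is also a harmless sign slip in your integration by parts for $K(k/\mathrm i,s,x)$ (the leading term is $-(\mathrm i/k)\mathrm e^{\mathrm ikx}\psi(x,\mathrm is)$), which happens not to affect your final compact formula.
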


Note that the requirement that $\psi$ is a Jost solution plays the role of the
normalization condition missing in the QARHP. We also observe that since the
Jost solution is real for $\operatorname{Re}k=0$, the right hand side of
(\ref{q tilda}) is also real. And finally it is worth mentioning that
Condition (\ref{J}) implies that on the real line $\varphi\left(  x,k\right)
$ and $\psi\left(  x,k\right)  $ are related by the basic scattering identity
(\ref{eq8.1}) and thus $\varphi\left(  x,k\right)  $ is unique for a given the
latter $\psi\left(  x,k\right)  $.

\begin{proof}
Consider the gauge transform (\ref{gauge}). By Proposition \ref{prop on gauge}
the reflection coefficient $R$ is then preserved and it remains to find a
measure d$\mu\left(  s,x\right)  $ in (\ref{gauge}) that produces the
desirable jump matrix across the imaginary line. Due to the symmetry condition
(\ref{sym}) we can assume that $\operatorname{Im}k>0$. For the time being we
suppress the dependence on $x$ whenever it leads to no confusion.

Rewriting (\ref{gauge}) component-wise we have%
\begin{align}
\widetilde{\varphi}\left(  k\right)   &  =\varphi\left(  k\right)  +\int%
\frac{W\left\{  \varphi\left(  k\right)  ,\psi\left(  \mathrm{i}s\right)
\right\}  }{k^{2}+s^{2}}\mathrm{d}\mu\left(  s\right)  ,\label{fi tilda}\\
\widetilde{\psi}\left(  k\right)   &  =\psi\left(  k\right)  +\int%
\frac{W\left\{  \psi\left(  k\right)  ,\psi\left(  \mathrm{i}s\right)
\right\}  }{k^{2}+s^{2}}\mathrm{d}\mu\left(  s\right)  . \label{psi tilda}%
\end{align}
Since the new pair $\left(
\begin{array}
[c]{ccc}%
\widetilde{\varphi} & , & \widetilde{\psi}%
\end{array}
\right)  $ must satisfy the jump condition across $\mathrm{i}\mathbb{R}$, we
have%
\begin{align*}
&  \left(
\begin{array}
[c]{ccc}%
\widetilde{\varphi}\left(  \mathrm{i}s-0\right)  & , & \widetilde{\psi}\left(
\mathrm{i}s-0\right)
\end{array}
\right) \\
&  =\left(
\begin{array}
[c]{ccc}%
\widetilde{\varphi}\left(  \mathrm{i}s+0\right)  & , & \widetilde{\psi}\left(
\mathrm{i}s+0\right)
\end{array}
\right)  \left(
\begin{array}
[c]{cc}%
1 & -2\mathrm{i}\pi\chi_{-}\left(  s\right)  \widetilde{\delta}\left(
s\right) \\
-2\mathrm{i}\pi\chi_{+}\left(  s\right)  \widetilde{\delta}\left(  s\right)  &
1
\end{array}
\right)  ,
\end{align*}
where $\widetilde{\delta}\left(  s\right)  $ is the density of the perturbed
measure $\widetilde{\rho}=\rho+\sigma$, we conclude that $\widetilde{\psi
}\left(  k\right)  $ is analytic for $\operatorname{Im}k>0$ and%
\[
\operatorname{Im}\widetilde{\psi}\left(  \mathrm{i}\alpha-0\right)
=0,\ \ \operatorname{Im}\widetilde{\varphi}\left(  \mathrm{i}\alpha-0\right)
=-\pi\widetilde{\delta}\left(  \alpha\right)  \widetilde{\psi}\left(
\mathrm{i}\alpha\right)  .
\]
It follows from (\ref{fi tilda}) that%
\begin{align}
&  \operatorname{Im}\widetilde{\varphi}\left(  \mathrm{i}\alpha-0\right)
-\operatorname{Im}\varphi\left(  \mathrm{i}\alpha-0\right) \label{I1+I2}\\
&  =\lim_{\varepsilon\rightarrow0}\operatorname{Im}\int\frac{W\left\{
\varphi\left(  \mathrm{i}\alpha-\varepsilon\right)  ,\psi\left(
\mathrm{i}s\right)  \right\}  }{\left(  \mathrm{i}\alpha-\varepsilon\right)
^{2}+s^{2}}\mathrm{d}\mu\left(  s\right) \nonumber\\
&  =\lim_{\varepsilon\rightarrow0}\int W\left\{  \operatorname{Im}%
\varphi\left(  \mathrm{i}\alpha-\varepsilon\right)  ,\psi\left(
\mathrm{i}s\right)  \right\}  \operatorname{Re}\frac{1}{\left(  \mathrm{i}%
\alpha-\varepsilon\right)  ^{2}+s^{2}}\mathrm{d}\mu\left(  s\right)
\nonumber\\
&  +\lim_{\varepsilon\rightarrow0}\int W\left\{  \operatorname{Re}%
\varphi\left(  \mathrm{i}\alpha-\varepsilon\right)  ,\psi\left(
\mathrm{i}s\right)  \right\}  \operatorname{Im}\frac{1}{\left(  \mathrm{i}%
\alpha-\varepsilon\right)  ^{2}+s^{2}}\mathrm{d}\mu\left(  s\right)
\nonumber\\
&  =I_{1}+I_{2}.\nonumber
\end{align}
It follows from the jump condition for $\left(  \varphi,\psi\right)  $ that%
\begin{equation}
\operatorname{Im}\varphi\left(  \mathrm{i}\alpha-\varepsilon\right)
=-\pi\delta\left(  \alpha\right)  \psi\left(  \mathrm{i}\alpha\right)  .
\label{jump for fi}%
\end{equation}
Therefore, for $I_{1}$ we have%
\[
I_{1}=-\pi\delta\left(  \alpha\right)  \lim_{\varepsilon\rightarrow0}\int
W\left\{  \psi\left(  \mathrm{i}\alpha\right)  ,\psi\left(  \mathrm{i}%
s\right)  \right\}  \operatorname{Re}\frac{1}{\left(  \mathrm{i}%
\alpha-\varepsilon\right)  ^{2}+s^{2}}\mathrm{d}\mu\left(  s\right)  .
\]
Recall the following Wronskian identity: if $f_{\lambda}$ is a solution to the
Schr\"{o}dinger equation $-f^{\prime\prime}+q\left(  x\right)  f=\lambda^{2}f$
\ then%
\begin{equation}
W^{\prime}\left\{  f_{\lambda},f_{\nu}\right\}  =\left(  \lambda^{2}-\nu
^{2}\right)  f_{\lambda}f_{\nu}\text{ for any }\lambda,\nu. \label{W}%
\end{equation}
Observe that if $f_{\lambda}$, $f_{\nu}$ decay sufficiently fast at $+\infty$,
then (\ref{W}) implies%
\begin{equation}
\frac{W\left\{  f_{\lambda},f_{\nu}\right\}  }{\lambda^{2}-\nu^{2}}%
\mathbf{=-}\int_{x}^{\infty}f_{\lambda}\left(  s\right)  f_{\nu}\left(
s\right)  \mathrm{d}s. \label{W1}%
\end{equation}
Since due to (\ref{W1})%
\begin{align}
W\left\{  \psi\left(  \mathrm{i}\alpha\right)  ,\psi\left(  \mathrm{i}%
s\right)  \right\}   &  =-\left(  s^{2}-\alpha^{2}\right)  \int_{x}^{\infty
}\psi\left(  z,\mathrm{i}\alpha\right)  \psi\left(  z,\mathrm{i}s\right)
\mathrm{d}z\label{Wpsi}\\
&  =-\left(  s^{2}-\alpha^{2}\right)  K\left(  \alpha,s\right)  ,\nonumber
\end{align}
where we have denoted (suppressing $x$ as before)%
\[
K\left(  \alpha,s\right)  =K\left(  \alpha,s;x\right)  =\int_{x}^{\infty}%
\psi\left(  z,\mathrm{i}\alpha\right)  \psi\left(  z,\mathrm{i}s\right)
\mathrm{d}z,
\]
the last equation can be continued%
\[
I_{1}=\pi\delta\left(  \alpha\right)  \lim_{\varepsilon\rightarrow0}%
\int\left(  s^{2}-\alpha^{2}\right)  \operatorname{Re}\frac{1}{\left(
\mathrm{i}\alpha-\varepsilon\right)  ^{2}+s^{2}}K\left(  \alpha,s\right)
\mathrm{d}\mu\left(  s\right)  .
\]
Observe that%
\begin{align*}
\operatorname{Re}\frac{1}{\left(  \mathrm{i}\alpha-\varepsilon\right)
^{2}+s^{2}}  &  =\frac{1}{2s}\operatorname{Re}\left(  \frac{1}{s-\alpha
-\mathrm{i}\varepsilon}+\frac{1}{s+\alpha+\mathrm{i}\varepsilon}\right) \\
&  =\frac{1}{2s}\left[  \frac{s-\alpha}{\left(  s-\alpha\right)
^{2}+\varepsilon^{2}}+\frac{s+\alpha}{\left(  s+\alpha\right)  ^{2}%
+\varepsilon^{2}}\right]  ,
\end{align*}
and hence%
\begin{align*}
&  \left(  s^{2}-\alpha^{2}\right)  \operatorname{Re}\frac{1}{\left(
\mathrm{i}\alpha-\varepsilon\right)  ^{2}+s^{2}}\\
&  =\frac{1}{2s}\left[  \frac{\left(  s-\alpha\right)  ^{2}\left(
s+\alpha\right)  }{\left(  s-\alpha\right)  ^{2}+\varepsilon^{2}}%
+\frac{\left(  s+\alpha\right)  ^{2}\left(  s-\alpha\right)  }{\left(
s+\alpha\right)  ^{2}+\varepsilon^{2}}\right] \\
&  \rightarrow\frac{1}{2s}\left[  \left(  s+\alpha\right)  +\left(
s-\alpha\right)  \right]  =1,\ \ \ \varepsilon\rightarrow0\text{
\ \ uniformly.}%
\end{align*}
Therefore,%
\begin{equation}
I_{1}=\pi\delta\left(  \alpha\right)  \int K\left(  \alpha,s\right)
\mathrm{d}\mu\left(  s\right)  . \label{I1}%
\end{equation}
Turn to $I_{2}$. We have%
\begin{align*}
\operatorname{Im}\frac{1}{\left(  \mathrm{i}\alpha-\varepsilon\right)
^{2}+s^{2}}  &  =\frac{1}{2s}\operatorname{Im}\left(  \frac{1}{s-\alpha
-\mathrm{i}\varepsilon}+\frac{1}{s+\alpha+\mathrm{i}\varepsilon}\right) \\
&  =\frac{1}{2s}\left[  \frac{\varepsilon}{\left(  s-\alpha\right)
^{2}+\varepsilon^{2}}-\frac{\varepsilon}{\left(  s+\alpha\right)
^{2}+\varepsilon^{2}}\right] \\
&  =\frac{\pi}{2s}P_{\alpha+\mathrm{i}\varepsilon}\left(  s\right)  -\frac
{1}{2s}\frac{\varepsilon}{\left(  s+\alpha\right)  ^{2}+\varepsilon^{2}},
\end{align*}
where%
\[
P_{x+\mathrm{i}y}\left(  t\right)  =\dfrac{1}{\pi}\frac{y}{\left(  t-x\right)
^{2}+y^{2}}%
\]
is the the Poisson kernel. Recall the classical fact (see, e.g. \cite{Koosis})%
\[
\int\mathrm{d}xg\left(  x\right)  \lim_{y\rightarrow0}\int P_{x+\mathrm{i}%
y}\left(  t\right)  \mathrm{d}\mu\left(  t\right)  =\int g\left(  x\right)
\mathrm{d}\mu\left(  x\right)
\]
or%
\begin{equation}
\mathrm{d}\mu_{y}\left(  x\right)  =\left(  \int P_{x+\mathrm{i}y}\left(
t\right)  \mathrm{d}\mu\left(  t\right)  \right)  \mathrm{d}x\rightarrow
\mathrm{d}\mu\left(  x\right)  ,y\rightarrow0, \label{weak}%
\end{equation}
in the weak* sense. The second term goes to zero uniformly as $\varepsilon
\rightarrow0$ and we get%
\begin{align*}
I_{2}  &  =\lim_{\varepsilon\rightarrow0}\int W\left\{  \operatorname{Re}%
\varphi\left(  \mathrm{i}\alpha-\varepsilon\right)  ,\psi\left(
\mathrm{i}s\right)  \right\}  \operatorname{Im}\frac{1}{\left(  \mathrm{i}%
\alpha-\varepsilon\right)  ^{2}+s^{2}}\mathrm{d}\mu\left(  s\right) \\
&  =\lim_{\varepsilon\rightarrow0}\int W\left\{  \operatorname{Re}%
\varphi\left(  \mathrm{i}\alpha-0\right)  ,\psi\left(  \mathrm{i}s\right)
\right\}  \frac{\pi}{2s}P_{\alpha+\mathrm{i}\varepsilon}\left(  s\right)
\mathrm{d}\mu\left(  s\right) \\
&  =\pi\int\frac{W\left\{  \operatorname{Re}\varphi\left(  \mathrm{i}%
\alpha-0\right)  ,\psi\left(  \mathrm{i}\alpha\right)  \right\}  }{2\alpha
}\mathrm{d}\mu\left(  \alpha\right)  .
\end{align*}
Here we have used (\ref{weak}) to pass to the limit.

It follows from (\ref{jump}) that
\[
\varphi\left(  k+\mathrm{i}0\right)  =\overline{\psi\left(  k+\mathrm{i}%
0\right)  }+R\left(  k\right)  \psi\left(  k+\mathrm{i}0\right)
\]
and hence, since $\psi$ is the right Jost solution,
\[
W\left\{  \varphi\left(  k+\mathrm{i}0\right)  ,\psi\left(  k+\mathrm{i}%
0\right)  \right\}  =W\left\{  \overline{\psi\left(  k+\mathrm{i}0\right)
},\psi\left(  k+\mathrm{i}0\right)  \right\}  =2\mathrm{i}k.
\]
Thus $W\left\{  \varphi\left(  k\right)  ,\psi\left(  k\right)  \right\}
=2\mathrm{i}k$ also for $\operatorname{Im}k>0$ and in particular
\begin{align*}
&  W\left\{  \operatorname{Re}\varphi\left(  \mathrm{i}\alpha-0\right)
,\psi\left(  \mathrm{i}\alpha\right)  \right\} \\
&  =\operatorname{Re}W\left\{  \varphi\left(  \mathrm{i}\alpha-0\right)
,\psi\left(  \mathrm{i}\alpha\right)  \right\}  =-2\alpha.
\end{align*}
It follow then that in the sense of distributions%
\begin{equation}
I_{2}=-\pi\mu^{\prime}\left(  \alpha\right)  . \label{I2}%
\end{equation}
Substituting (\ref{I1}) and (\ref{I2}) into (\ref{I1+I2}) yields%
\begin{align*}
\operatorname{Im}\widetilde{\varphi}\left(  \mathrm{i}\alpha-0\right)   &
=\operatorname{Im}\varphi\left(  \mathrm{i}\alpha-0\right)  +I_{1}+I_{2}\\
&  =-\delta\left(  \alpha\right)  \psi\left(  \mathrm{i}\alpha\right)
+\pi\delta\left(  \alpha\right)  \int K\left(  \alpha,s\right)  \mathrm{d}%
\mu\left(  s\right)  -\pi\mu^{\prime}\left(  \alpha\right) \\
&  =-\pi\widetilde{\delta}\left(  \alpha\right)  \widetilde{\psi}\left(
\mathrm{i}\alpha\right)  .
\end{align*}
Here we have taken (\ref{jump for fi}) into account. It follows that%
\begin{equation}
\delta\left(  \alpha\right)  \left[  \psi\left(  \mathrm{i}\alpha\right)
-\int K\left(  \alpha,s\right)  \mathrm{d}\mu\left(  s\right)  \right]
+\mu^{\prime}\left(  \alpha\right)  =\widetilde{\delta}\left(  \alpha\right)
\widetilde{\psi}\left(  \mathrm{i}\alpha\right)  . \label{jump for fi 2}%
\end{equation}
Evaluate now $\widetilde{\psi}\left(  \mathrm{i}\alpha\right)  $. From
(\ref{psi tilda}) and (\ref{Wpsi}) we see that%
\begin{align*}
\widetilde{\psi}\left(  k\right)   &  =\psi\left(  k\right)  +\int%
\frac{W\left\{  \psi\left(  k\right)  ,\psi\left(  \mathrm{i}s\right)
\right\}  }{k^{2}+s^{2}}\mathrm{d}\mu\left(  s\right) \\
&  =\psi\left(  k\right)  -\int K\left(  \alpha,s\right)  \mathrm{d}\mu\left(
s\right)
\end{align*}
and hence%
\begin{equation}
\widetilde{\psi}\left(  \mathrm{i}\alpha\right)  =\psi\left(  \mathrm{i}%
\alpha\right)  -\int K\left(  \alpha,s\right)  \mathrm{d}\mu\left(  s\right)
. \label{psi tilda 1}%
\end{equation}
Substituting this into (\ref{jump for fi 2}) one immediately obtains%
\begin{align*}
&  \delta\left(  \alpha\right)  \left[  \psi\left(  \mathrm{i}\alpha\right)
-\int K\left(  \alpha,s\right)  \mathrm{d}\mu\left(  s\right)  \right]
+\mu^{\prime}\left(  \alpha\right) \\
&  =\widetilde{\delta}\left(  \alpha\right)  \left\{  \psi\left(
\mathrm{i}\alpha\right)  -\int K\left(  \alpha,s\right)  \mathrm{d}\mu\left(
s\right)  \right\}  ,
\end{align*}
which, recalling that $\sigma=\widetilde{\rho}-\rho$, can be rearranged as%
\[
\sigma^{\prime}\left(  \alpha\right)  \left\{  \psi\left(  \mathrm{i}%
\alpha\right)  -\int K\left(  \alpha,s\right)  \mathrm{d}\mu\left(  s\right)
\right\}  =\mu^{\prime}\left(  \alpha\right)  ,
\]
or in terms of measures%
\[
\left\{  \psi\left(  \mathrm{i}\alpha\right)  -\int K\left(  \alpha,s\right)
\mathrm{d}\mu\left(  s\right)  \right\}  \mathrm{d}\sigma\left(  s\right)
=\mathrm{d}\mu\left(  \alpha\right)  ,
\]
which is an integral equation on the measure $\mu$. Take d$\mu$ to be
absolutely continuous with respect to the measure d$\sigma$ and let $y=$%
d$\mu/$d$\sigma$ be the Radon-Nikodym derivative. By the Radon-Nikodym theorem
then%
\[
\psi\left(  \mathrm{i}\alpha\right)  -\int K\left(  \alpha,s\right)  y\left(
s\right)  \mathrm{d}\sigma\left(  s\right)  =y\left(  \alpha\right)
\]
and we finally have%
\[
y\left(  \alpha,x\right)  +\int K\left(  \alpha,s,x\right)  y\left(
s,x\right)  \mathrm{d}\sigma\left(  s\right)  =\psi\left(  x,\mathrm{i}%
\alpha\right)  .
\]
Thus we arrive at the Fredholm integral equation%
\begin{equation}
y+\mathbb{K}y=\psi, \label{fredholm v1}%
\end{equation}
where $\mathbb{K}$ is the integral operator with the kernel $K\left(
\alpha,s;x\right)  $ with respect to the measure $\sigma$.

It remains to show (\ref{y}). This immediately follows from (\ref{psi tilda 1}%
) and (\ref{fredholm v1}). Indeed,%
\begin{align*}
\widetilde{\psi}\left(  \mathrm{i}\alpha\right)   &  =\psi\left(
\mathrm{i}\alpha\right)  -\int K\left(  \alpha,s\right)  \mathrm{d}\mu\left(
s\right) \\
&  =\psi\left(  \mathrm{i}\alpha\right)  -\int K\left(  \alpha,s\right)
y\left(  s\right)  \mathrm{d}\sigma\left(  s\right) \\
&  =\psi-\mathbb{K}y=y.
\end{align*}
The proof that $\widetilde{\psi}$ is a right Jost solution and the
representation (\ref{q tilda}) for $\widetilde{q}$ will be given in Section
\ref{proof of main theorem}.
\end{proof}

\section{An integral operator and its discretization\label{discrete oper}}

In this section we study the trace class integral operator $\mathbb{K}$
arising in the previous section and approximate it with a sequence of finite
matrices. It is convenient to do so in independent terms. Let $\mu$ be a
non-negative finite Borel measure on the real line and $L^{2}\left(
\mathrm{d}\mu\right)  $ the real Hilbert space with the inner product
$\left\langle f,g\right\rangle =\int f\left(  x\right)  g\left(  x\right)
$d$\mu\left(  x\right)  $. Let $g\left(  x,s\right)  $ be a real continuous
function for $s\in S$, where $S$ is an interval (finite or infinite) such that%
\begin{equation}
\left\vert \left\vert \left\vert g\right\vert \right\vert \right\vert
^{2}:=\int_{S}\int g\left(  x,s\right)  ^{2}\mathrm{d}\mu\left(  x\right)
\mathrm{d}s<\infty. \label{g}%
\end{equation}
Define a family of rank one operators $\mathbb{G}\left(  s\right)  $ on
$L^{2}\left(  \mathrm{d}\mu\right)  $ by%
\[
\left(  \mathbb{G}\left(  s\right)  f\right)  \left(  x\right)  =\left\langle
f,g\left(  \cdot,s\right)  \right\rangle g\left(  x,s\right)  .
\]
Clearly, $\mathbb{G}$ is positive and ($\left\vert \left\vert \cdot\right\vert
\right\vert _{2}$ stands for the Hilbert-Schmidt norm)%
\begin{equation}
\operatorname*{tr}\mathbb{G}\left(  s\right)  =\left\vert \left\vert
\mathbb{G}\left(  s\right)  \right\vert \right\vert _{2}=\left\vert \left\vert
g\left(  \cdot,s\right)  \right\vert \right\vert ^{2}=\int g\left(
x,s\right)  ^{2}\mathrm{d}\mu\left(  x\right)  . \label{trG}%
\end{equation}
Consider an operator defined by%
\[
\mathbb{K=}\int_{S}\mathbb{G}\left(  s\right)  \mathrm{d}s.
\]
It is an integral operator%
\[
\left(  \mathbb{K}f\right)  \left(  x\right)  =\int K\left(  x,y\right)
f\left(  y\right)  \mathrm{d}\mu\left(  y\right)
\]
on $L^{2}\left(  \mathrm{d}\mu\right)  $ with the kernel%
\[
K\left(  x,y\right)  =\int_{S}g\left(  x,s\right)  g\left(  y,s\right)
\mathrm{d}s.
\]
It follows from (\ref{g}) and (\ref{trG}) that%
\[
\left\vert \left\vert \mathbb{K}\right\vert \right\vert _{2}\leq\int%
_{S}\left\vert \left\vert \mathbb{G}\left(  s\right)  \right\vert \right\vert
_{2}\mathrm{d}s=\int_{S}\int g\left(  x,s\right)  ^{2}\mathrm{d}\mu\left(
x\right)  \mathrm{d}s=\left\vert \left\vert \left\vert g\right\vert
\right\vert \right\vert ^{2}<\infty
\]
and hence the operator $\mathbb{K}$ is Hilbert-Schmidt on $L^{2}\left(
\mathrm{d}\mu\right)  $, positive, it is also trace class, and%
\[
\operatorname*{tr}\mathbb{K}=\int_{S}\int g\left(  x,s\right)  ^{2}%
\mathrm{d}\mu\left(  x\right)  \mathrm{d}s=\left\vert \left\vert \left\vert
g\right\vert \right\vert \right\vert ^{2}.
\]
Discretize $\mathbb{K}$ as follows. Let $I$ be a finite interval containing
$\operatorname*{Supp}\mu$ and $\left(  I_{n}\right)  _{n=1}^{N}$ be a finite
partition of $I$. In each $I_{n}$ pick up an interior point $x_{n}$ that is
also in $\operatorname*{Supp}\mu$. Take a piece-wise constant approximation
$g_{N}$ of $g$%
\[
g_{N}\left(  x,s\right)  =%
{\displaystyle\sum\limits_{n=1}^{N}}
g\left(  x_{n},s\right)  \chi_{I_{n}}\left(  x\right)
\]
and consider the integral operator%
\[
\left(  \mathbb{K}_{N}f\right)  \left(  x\right)  =\int K_{N}\left(
x,y\right)  f\left(  y\right)  \mathrm{d}\mu\left(  y\right)
\]
on $L^{2}\left(  \mathrm{d}\mu\right)  $ with the kernel%
\[
K_{N}\left(  x,y\right)  =\int_{S}g_{N}\left(  x,s\right)  g_{N}\left(
y,s\right)  \mathrm{d}s.
\]

\begin{proposition}
\label{Prop on limit}If $\left\vert \left\vert \left\vert g-g_{N}\right\vert
\right\vert \right\vert \rightarrow0,N\rightarrow\infty$, then $\mathbb{K}%
_{N}\rightarrow\mathbb{K}$ in the trace norm.
\end{proposition}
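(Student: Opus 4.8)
The plan is to realize both $\mathbb{K}$ and $\mathbb{K}_N$ as products of two Hilbert-Schmidt operators and thereby reduce trace-norm convergence to Hilbert-Schmidt convergence of the factors, which is precisely what the hypothesis $\vert\vert\vert g-g_N\vert\vert\vert\to 0$ provides. First I would introduce the ``square root'' of $\mathbb{K}$: define $\mathbb{A}\colon L^{2}\left(  \mathrm{d}\mu\right)  \rightarrow L^{2}\left(  S,\mathrm{d}s\right)$ by
\[
\left(  \mathbb{A}f\right)  \left(  s\right)  =\left\langle f,g\left(  \cdot,s\right)  \right\rangle =\int f\left(  x\right)  g\left(  x,s\right)  \mathrm{d}\mu\left(  x\right)  ,
\]
and let $\mathbb{A}_{N}$ be the same operator with $g$ replaced by $g_{N}$. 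A one-line computation of the adjoint (these are real Hilbert spaces) gives $\left(  \mathbb{A}^{\ast}h\right)  \left(  x\right)  =\int_{S}g\left(  x,s\right)  h\left(  s\right)  \mathrm{d}s$, whence
\[
\left(  \mathbb{A}^{\ast}\mathbb{A}f\right)  \left(  x\right)  =\int_{S}\left\langle f,g\left(  \cdot,s\right)  \right\rangle g\left(  x,s\right)  \mathrm{d}s=\int_{S}\left(  \mathbb{G}\left(  s\right)  f\right)  \left(  x\right)  \mathrm{d}s=\left(  \mathbb{K}f\right)  \left(  x\right)  ,
\]
so $\mathbb{K}=\mathbb{A}^{\ast}\mathbb{A}$ and likewise $\mathbb{K}_{N}=\mathbb{A}_{N}^{\ast}\mathbb{A}_{N}$. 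The entire point of this factorization is that the Hilbert-Schmidt norm of $\mathbb{A}$ is exactly the triple-bar norm of its kernel, $\left\Vert \mathbb{A}\right\Vert _{2}=\vert\vert\vert g\vert\vert\vert$, and, crucially, $\left\Vert \mathbb{A}-\mathbb{A}_{N}\right\Vert _{2}=\vert\vert\vert g-g_{N}\vert\vert\vert$, since $\mathbb{A}-\mathbb{A}_{N}$ is the integral operator with kernel $g-g_{N}$.

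Next I would invoke the standard telescoping identity together with the Schatten-H\"{o}lder inequality $\left\Vert BC\right\Vert _{1}\leq\left\Vert B\right\Vert _{2}\left\Vert C\right\Vert _{2}$ for a product of two Hilbert-Schmidt operators. Writing
\[
\mathbb{K}-\mathbb{K}_{N}=\mathbb{A}^{\ast}\mathbb{A}-\mathbb{A}_{N}^{\ast}\mathbb{A}_{N}=\mathbb{A}^{\ast}\left(  \mathbb{A}-\mathbb{A}_{N}\right)  +\left(  \mathbb{A}-\mathbb{A}_{N}\right)  ^{\ast}\mathbb{A}_{N},
\]
and applying the inequality to each summand, one obtains
\[
\left\Vert \mathbb{K}-\mathbb{K}_{N}\right\Vert _{1}\leq\left\Vert \mathbb{A}\right\Vert _{2}\left\Vert \mathbb{A}-\mathbb{A}_{N}\right\Vert _{2}+\left\Vert \mathbb{A}-\mathbb{A}_{N}\right\Vert _{2}\left\Vert \mathbb{A}_{N}\right\Vert _{2}.
\]
Finally, $\left\Vert \mathbb{A}_{N}\right\Vert _{2}\leq\left\Vert \mathbb{A}\right\Vert _{2}+\left\Vert \mathbb{A}-\mathbb{A}_{N}\right\Vert _{2}$ stays bounded as $N\rightarrow\infty$ (the hypothesis sends the second term to $0$, while $\left\Vert \mathbb{A}\right\Vert _{2}=\vert\vert\vert g\vert\vert\vert<\infty$ by \eqref{g}), so the right-hand side is dominated by $C\,\vert\vert\vert g-g_{N}\vert\vert\vert$ for a constant $C$ independent of $N$, and the claimed convergence $\mathbb{K}_{N}\rightarrow\mathbb{K}$ in trace norm follows on letting $N\rightarrow\infty$.

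The analytic content is entirely routine; the only step that deserves care is the operator-theoretic bookkeeping of the factorization $\mathbb{K}=\mathbb{A}^{\ast}\mathbb{A}$ across the two \emph{different} Hilbert spaces $L^{2}\left(  \mathrm{d}\mu\right)  $ (domain) and $L^{2}\left(  S,\mathrm{d}s\right)  $ (codomain) of $\mathbb{A}$, namely computing the adjoint consistently and verifying that $\mathbb{A}-\mathbb{A}_{N}$ is genuinely Hilbert-Schmidt with $\left\Vert \mathbb{A}-\mathbb{A}_{N}\right\Vert _{2}=\vert\vert\vert g-g_{N}\vert\vert\vert$. Once this setup is in place the Schatten-H\"{o}lder estimate does all the work, and no property of $g$ beyond the square-integrability encoded in the triple-bar norm is used; I expect no serious obstacle, only this purely structural subtlety.
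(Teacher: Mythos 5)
Your argument is correct, and it takes a genuinely different (and in fact slightly stronger) route than the paper's. The paper proves the proposition through a kernel-level telescoping: it writes
$K(x,y)-K_{N}(x,y)=\int_{S}\left[  g(x,s)-g_{N}(x,s)\right]  g(y,s)\,\mathrm{d}s+\int_{S}\left[  g(y,s)-g_{N}(y,s)\right]  g_{N}(x,s)\,\mathrm{d}s$
and estimates the $L^{2}\left(  \mathrm{d}\mu\times\mathrm{d}\mu\right)  $ norm of each piece by Cauchy--Schwarz, arriving at
$\left\Vert \mathbb{K}-\mathbb{K}_{N}\right\Vert _{2}\leq\left(  \left\vert \left\vert \left\vert g\right\vert \right\vert \right\vert +\left\vert \left\vert \left\vert g_{N}\right\vert \right\vert \right\vert \right)  \left\vert \left\vert \left\vert g-g_{N}\right\vert \right\vert \right\vert $,
i.e.\ a bound on the \emph{Hilbert--Schmidt} norm. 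Your factorization $\mathbb{K}=\mathbb{A}^{\ast}\mathbb{A}$ through the auxiliary space $L^{2}\left(  S,\mathrm{d}s\right)  $, combined with the same telescoping carried out at the operator level and the Schatten--H\"{o}lder inequality $\left\Vert BC\right\Vert _{1}\leq\left\Vert B\right\Vert _{2}\left\Vert C\right\Vert _{2}$, yields the analogous bound directly in the \emph{trace} norm, which is what the proposition actually asserts. This is not merely a stylistic difference: read literally, the paper's appendix lemma establishes Hilbert--Schmidt convergence, and upgrading that to trace-norm convergence would require an additional observation (for instance Gr\"{u}mm's theorem for the positive operators $\mathbb{K}_{N}$ together with $\operatorname*{tr}\mathbb{K}_{N}=\left\vert \left\vert \left\vert g_{N}\right\vert \right\vert \right\vert ^{2}\rightarrow\left\vert \left\vert \left\vert g\right\vert \right\vert \right\vert ^{2}=\operatorname*{tr}\mathbb{K}$), whereas your route delivers the stated conclusion in one pass with the same constant $\left(  \left\vert \left\vert \left\vert g\right\vert \right\vert \right\vert +\left\vert \left\vert \left\vert g_{N}\right\vert \right\vert \right\vert \right)  \left\vert \left\vert \left\vert g-g_{N}\right\vert \right\vert \right\vert $. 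The only ingredients you use beyond the paper's are the identification $\left\Vert \mathbb{A}-\mathbb{A}_{N}\right\Vert _{2}=\left\vert \left\vert \left\vert g-g_{N}\right\vert \right\vert \right\vert $ and the H\"{o}lder inequality for Schatten ideals, both of which are standard and correctly handled in your adjoint computation across the two Hilbert spaces.
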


The proof is given in the Appendix. We show that $\mathbb{K}_{N}$ can be
realized as the $N\times N$ matrix%
\[
\boldsymbol{K}_{N}=\left(  c_{n}^{2}\int_{S}g_{m}\left(  s\right)
g_{n}\left(  s\right)  \mathrm{d}s\right)  _{1\leq m,n\leq N},
\]
where $g_{n}\left(  s\right)  :=g\left(  x_{n},s\right)  $ and $c_{n}%
:=\mu\left(  I_{n}\right)  ^{1/2}$, as follows. Identify a simple function%
\[
f_{N}\left(  x\right)  =%
{\displaystyle\sum\limits_{n=1}^{N}}
f_{n}\chi_{I_{n}}\left(  x\right)
\]
with an $N$ column $\left(  f_{n}\right)  :=\boldsymbol{f}_{N}$. We show that%
\[
\left.  \mathbb{K}_{N}f_{N}\right\vert _{I_{n}}=\left(  \boldsymbol{K}%
_{N}\boldsymbol{f}_{N}\right)  _{n},
\]
where\ the subscript $n$ denotes the $n$the component of a column. Indeed,%
\begin{align*}
\left(  \mathbb{K}_{N}f_{N}\right)  \left(  x\right)   &  =\int K_{N}\left(
x,y\right)  f_{N}\left(  y\right)  \mathrm{d}\mu\left(  y\right) \\
&  =\int\left[  \int_{S}g_{N}\left(  x,s\right)  g_{N}\left(  y,s\right)
\mathrm{d}s\right]  f_{N}\left(  y\right)  \mathrm{d}\mu\left(  y\right) \\
&  =\int_{S}g_{N}\left(  x,s\right)  \left[  \int g_{N}\left(  y,s\right)
f_{N}\left(  y\right)  \mathrm{d}\mu\left(  y\right)  \right]  \mathrm{d}s\\
&  =\int_{S}g_{N}\left(  x,s\right)  \left[
{\displaystyle\sum\limits_{m=1}^{N}}
g_{N}\left(  x_{m},s\right)  f_{m}\mu\left(  I_{m}\right)  \right]
\mathrm{d}s\\
&  =\int_{S}g_{N}\left(  x,s\right)  \left[
{\displaystyle\sum\limits_{m=1}^{N}}
c_{m}^{2}g_{m}\left(  s\right)  f_{m}\right]  \mathrm{d}s\\
&  =%
{\displaystyle\sum\limits_{m=1}^{N}}
c_{m}^{2}\left[  \int_{S}g_{N}\left(  x,s\right)  g_{m}\left(  s\right)
\mathrm{d}s\right]  f_{m}.
\end{align*}
It follows that for the restriction to $I_{n}$ we have%
\begin{align*}
\left.  \mathbb{K}_{N}f_{N}\right\vert _{I_{n}}  &  =%
{\displaystyle\sum\limits_{m=1}^{N}}
c_{m}^{2}\left[  \int_{S}g_{n}\left(  s\right)  g_{m}\left(  s\right)
\mathrm{d}s\right]  f_{m}\\
&  =\left(  \boldsymbol{K}_{N}\boldsymbol{f}_{N}\right)  _{n}%
\end{align*}
as desired.

Thus we have shown that $\left(  I+\mathbb{K}_{N}\right)  ^{-1}$ is
well-defined on $L^{2}\left(  \mathrm{d}\mu\right)  $ and the integral
equation%
\[
\left(  I+\mathbb{K}_{N}\right)  f=g_{N}%
\]
has a unique $L^{2}\left(  \mathrm{d}\mu\right)  $ solution%
\[
f_{N}=\left(  I+\mathbb{K}_{N}\right)  ^{-1}g_{N},
\]
which is due to Proposition \ref{Prop on limit} converges in $L^{2}\left(
\mathrm{d}\mu\right)  $ to some $f$.

\section{Abstract log-determinant formula\label{log-det}}

In this section we derive a log-determinant formula needed in the proof of
Theorem \ref{MainThm}. It is convenient to do so in independent terms.

\begin{proposition}
\label{prop on log-det}Let $\mathbb{A}\left(  x\right)  $ be a self-adjoint
trace class operator-valued function on a real Hilbert space $\mathfrak{H}$
such that%
\[
\partial_{x}\mathbb{A}\left(  x\right)  =-\left\langle \cdot,a\left(
x\right)  \right\rangle a\left(  x\right)  ,a\in\mathfrak{H}%
\]
is rank-one. If $\det\left\{  I+\mathbb{A}\left(  x\right)  \right\}  >0$ then%
\begin{align}
\partial_{x}^{2}\log\det\left[  I+\mathbb{A}\left(  x\right)  \right]   &
=-\left\langle a\left(  x\right)  ,\left[  I+\mathbb{A}\left(  x\right)
\right]  ^{-1}a\left(  x\right)  \right\rangle ^{2}%
\label{abstract log det formula}\\
&  -2\left\langle \partial_{x}a\left(  x\right)  ,\left[  I+\mathbb{A}\left(
x\right)  \right]  ^{-1}a\left(  x\right)  \right\rangle .\nonumber
\end{align}

\end{proposition}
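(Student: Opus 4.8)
The plan is to reduce the second-order identity to two successive applications of the Jacobi formula for Fredholm determinants, exploiting the rank-one structure of $\partial_x\mathbb{A}$ at each stage. Throughout I would write $\mathbb{B}\left(x\right):=\left[I+\mathbb{A}\left(x\right)\right]^{-1}$ and $\mathbb{P}\left(x\right):=\left\langle \cdot,a\left(x\right)\right\rangle a\left(x\right)$, so that $\partial_x\mathbb{A}=-\mathbb{P}$. Since $\mathbb{A}\left(x\right)$ is self-adjoint and trace class with $\det\left[I+\mathbb{A}\left(x\right)\right]>0$, the operator $I+\mathbb{A}\left(x\right)$ has no eigenvalue at $-1$, hence is boundedly invertible, and $\mathbb{B}\left(x\right)$ is a well-defined self-adjoint bounded operator; this is what legitimizes everything below.

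First I would compute the first derivative. The standard trace-class identity $\partial_x\log\det\left[I+\mathbb{A}\right]=\operatorname*{tr}\left(\mathbb{B}\,\partial_x\mathbb{A}\right)$ gives $\partial_x\log\det\left[I+\mathbb{A}\right]=-\operatorname*{tr}\left(\mathbb{B}\mathbb{P}\right)$. Because $\mathbb{P}=\left\langle \cdot,a\right\rangle a$ is rank one, $\mathbb{B}\mathbb{P}$ is the rank-one operator $f\mapsto\left\langle f,a\right\rangle \mathbb{B}a$, whose trace is $\left\langle \mathbb{B}a,a\right\rangle$. Using self-adjointness of $\mathbb{B}$ and that $\mathfrak{H}$ is real, this yields the compact first-order formula
\[
\partial_x\log\det\left[I+\mathbb{A}\right]=-\left\langle a,\mathbb{B}a\right\rangle =-\left\langle a,\left[I+\mathbb{A}\right]^{-1}a\right\rangle .
\]

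The second step is to differentiate $\left\langle a,\mathbb{B}a\right\rangle$ once more. The product rule produces three terms,
\[
\partial_x\left\langle a,\mathbb{B}a\right\rangle =\left\langle \partial_x a,\mathbb{B}a\right\rangle +\left\langle a,\left(\partial_x\mathbb{B}\right)a\right\rangle +\left\langle a,\mathbb{B}\,\partial_x a\right\rangle ,
\]
and the middle term is handled by the resolvent-derivative identity $\partial_x\mathbb{B}=-\mathbb{B}\left(\partial_x\mathbb{A}\right)\mathbb{B}=\mathbb{B}\mathbb{P}\mathbb{B}$. Invoking the rank-one form of $\mathbb{P}$ a second time, $\mathbb{B}\mathbb{P}\mathbb{B}a=\left\langle \mathbb{B}a,a\right\rangle \mathbb{B}a$, so $\left\langle a,\left(\partial_x\mathbb{B}\right)a\right\rangle =\left\langle a,\mathbb{B}a\right\rangle ^{2}$. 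The two outer terms combine, again by self-adjointness of $\mathbb{B}$ and reality of $\mathfrak{H}$, into $2\left\langle \partial_x a,\mathbb{B}a\right\rangle$. Assembling these and recalling $\partial_x^{2}\log\det=-\partial_x\left\langle a,\mathbb{B}a\right\rangle$ gives precisely (\ref{abstract log det formula}).

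The calculation is essentially algebraic; the only genuine issues are analytic bookkeeping. The main obstacle I anticipate is justifying the two differentiation-under-the-trace steps in the trace-class topology: one must check that $x\mapsto\mathbb{A}\left(x\right)$ is $C^{1}$ (indeed $C^{2}$) in trace norm and that $a\left(x\right)$ is $C^{1}$ in $\mathfrak{H}$, so that the Jacobi formula and the resolvent-derivative identity apply with all traces and operator products convergent. Given the hypothesis that $\partial_x\mathbb{A}$ exists and equals the rank-one operator above, these regularity facts are routine, and the positivity of the determinant guarantees uniform invertibility of $I+\mathbb{A}\left(x\right)$ on the relevant $x$-interval, so no further subtlety arises.
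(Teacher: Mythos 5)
Your proposal is correct and follows essentially the same route as the paper: the Jacobi formula $\partial_x\log\det\left[I+\mathbb{A}\right]=\operatorname*{tr}\left(\left[I+\mathbb{A}\right]^{-1}\partial_x\mathbb{A}\right)$, evaluation of the trace of a rank-one operator as an inner product, and the resolvent-derivative identity $\partial_x\left[I+\mathbb{A}\right]^{-1}=-\left[I+\mathbb{A}\right]^{-1}\left(\partial_x\mathbb{A}\right)\left[I+\mathbb{A}\right]^{-1}$ for the second differentiation. The signs all check out and the argument matches the paper's proof step for step.
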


\begin{proof}
We suppress dependence on $x$. We base our proof on the followings well-known
formulas (see e.g. \cite{SimonTraceIdeals})%
\begin{equation}
\log\det\left(  I+\mathbb{A}\right)  =\operatorname*{tr}\log\left(
I+\mathbb{A}\right)  , \label{tr1}%
\end{equation}%
\begin{equation}
\partial\operatorname*{tr}\log\left(  I+\mathbb{A}\right)  =\operatorname*{tr}%
\left(  I+\mathbb{A}\right)  ^{-1}\partial\mathbb{A}, \label{tr2}%
\end{equation}%
\begin{equation}
\operatorname*{tr}\left\langle \cdot,f\right\rangle g=\left\langle
g,f\right\rangle . \label{tr4}%
\end{equation}%
\begin{equation}
\partial\left(  I+\mathbb{A}\right)  ^{-1}=-\left(  I+\mathbb{A}\right)
^{-1}\partial\mathbb{A}\left(  I+\mathbb{A}\right)  ^{-1}. \label{tr3}%
\end{equation}
By (\ref{tr1}) and (\ref{tr2}) we have%
\begin{align*}
\partial^{2}\log\det\left(  I+\mathbb{A}\right)   &  =\partial
\operatorname*{tr}\left(  I+\mathbb{A}\right)  ^{-1}\partial\mathbb{A}\\
&  =\operatorname*{tr}\left(  I+\mathbb{A}\right)  ^{-1}\partial\mathbb{A}.
\end{align*}
But $\left(  I+\mathbb{A}\right)  ^{-1}\partial\mathbb{A}$ is a rank one
operator and hence by (\ref{tr4})%
\begin{align*}
\operatorname*{tr}\left(  I+\mathbb{A}\right)  ^{-1}\partial\mathbb{A}  &
=-\operatorname*{tr}\left\langle \cdot,a\right\rangle \left(  I+\mathbb{A}%
\right)  ^{-1}a\\
&  =-\left\langle \left(  I+\mathbb{A}\right)  ^{-1}a,a\right\rangle .
\end{align*}
Differentiating this equation one more time, by (\ref{tr3}) we have%
\begin{align}
\partial\operatorname*{tr}\left(  I+\mathbb{A}\right)  ^{-1}\partial
\mathbb{A}  &  =-\partial\left\langle \left(  I+\mathbb{A}\right)
^{-1}a,a\right\rangle =-\left\langle \partial\left(  I+\mathbb{A}\right)
^{-1}a,a\right\rangle \label{eval}\\
&  -\left\langle \left(  I+\mathbb{A}\right)  ^{-1}\partial a,a\right\rangle
-\left\langle \left(  I+\mathbb{A}\right)  ^{-1}a,\partial a\right\rangle
\nonumber\\
&  =\left\langle \left(  I+\mathbb{A}\right)  ^{-1}\partial\mathbb{A}\left(
I+\mathbb{A}\right)  ^{-1},a\right\rangle -2\left\langle \left(
I+\mathbb{A}\right)  ^{-1}a,\partial a\right\rangle \nonumber\\
&  =\left\langle \partial\mathbb{A}\left(  I+\mathbb{A}\right)  ^{-1},\left(
I+\mathbb{A}\right)  ^{-1}a\right\rangle -2\left\langle \left(  I+\mathbb{A}%
\right)  ^{-1}a,\partial a\right\rangle .\nonumber
\end{align}
But $\partial\mathbb{A}\left(  I+\mathbb{A}\right)  ^{-1}$ is a rank one
operator and therefore%
\[
\partial\mathbb{A}\left(  I+\mathbb{A}\right)  ^{-1}\cdot=-\left\langle
\cdot,\left(  I+\mathbb{A}\right)  ^{-1}a\right\rangle a.
\]
Thus, we have%
\begin{align*}
\left\langle \partial\mathbb{A}\left(  I+\mathbb{A}\right)  ^{-1}a,\left(
I+\mathbb{A}\right)  ^{-1}a\right\rangle  &  =-\left\langle a,\left(
I+\mathbb{A}\right)  ^{-1}a\right\rangle \left\langle a,\left(  I+\mathbb{A}%
\right)  ^{-1}a\right\rangle \\
&  =-\left\langle a,\left(  I+\mathbb{A}\right)  ^{-1}a\right\rangle ^{2}.
\end{align*}
Substituting this into (\ref{eval}) finally yields
(\ref{abstract log det formula}).
\end{proof}

\section{Proof of the main theorem\label{proof of main theorem}}

\bigskip The proof of Theorem \ref{MainThm} amounts to combining the
ingredients prepared above and is based in part on limiting arguments. We
start out from the following statement.

\begin{proposition}
[Adding/removing bound states]\label{finite neg spec} Let $q\left(
x,t\right)  $ be a step-type KdV solution with the scattering data
$S_{q}=\left\{  R,\mathrm{d}\rho\right\}  $ and $\psi\left(  x,t,k\right)  $
the right Jost solution. Fix the discrete measure%
\[
\mathrm{d}\sigma\left(  k\right)  =%
{\displaystyle\sum\limits_{n=1}^{N}}
c_{n}^{2}\delta_{\kappa_{n}}\left(  k\right)  \mathrm{d}k,\ \ \ N<\infty,
\]
with some positive $c_{n}^{2}$, $\kappa_{n}$, and introduce the $N\times N$
matrix function $\mathbf{K}\left(  x,t\right)  $ with entries%
\[
K_{mn}\left(  x,t\right)  =c_{n}^{2}e^{8\kappa_{n}^{3}t}\int_{x}^{\infty}%
\psi\left(  s,t,\mathrm{i}\kappa_{m}\right)  \psi\left(  s,t,\mathrm{i}%
\kappa_{n}\right)  \mathrm{d}s.
\]
Suppose that d$\rho$ is discrete with a finitely many pure points then

\begin{enumerate}
\item \label{part 1}the matrix $I+\mathbf{K}\left(  x,t\right)  $ is
nonsingular,%
\begin{equation}
q_{\sigma}\left(  x,t\right)  =q\left(  x,t\right)  -2\partial_{x}^{2}\log
\det\left\{  I+\mathbf{K}\left(  x,t\right)  \right\}  \label{q}%
\end{equation}
is also a step-type KdV solution with the scattering data $\left\{
R,\mathrm{d}\rho+\mathrm{d}\sigma\right\}  $, and%
\begin{align}
\psi_{\sigma}\left(  x,t,k\right)   &  =\psi\left(  x,t,k\right) \label{ksi}\\
&  -%
{\displaystyle\sum\limits_{n=1}^{N}}
c_{n}^{2}e^{8\kappa_{n}^{3}t}y_{n}\left(  x,t\right)  \int_{x}^{\infty}%
\psi\left(  s,t,k\right)  \psi\left(  s,t,\mathrm{i}\kappa_{n}\right)
\mathrm{d}s\nonumber
\end{align}
is the associated right Jost solution, where the column $\mathbf{y}=\left(
y_{n}\right)  $ is given by%
\[
\mathbf{y}\left(  x,t\right)  =\left\{  I+\mathbf{K}\left(  x,t\right)
\right\}  ^{-1}\mathbf{\psi}\left(  x,t\right)  ,\ \ \mathbf{\psi}\left(
x,t\right)  :=\left(  \psi\left(  x,t,\mathrm{i}\kappa_{n}\right)  \right)  ;
\]

\item \label{part 2}If d$\sigma\leq$d$\rho$, then $I-\mathbf{K}\left(
x,t\right)  $ is nonsingular,%
\begin{equation}
q_{-\sigma}\left(  x,t\right)  =q\left(  x,t\right)  -2\partial_{x}^{2}%
\log\det\left\{  I-\mathbf{K}\left(  x,t\right)  \right\}  \label{q-}%
\end{equation}
is also a step-type KdV solution with the scattering data $\left\{
R,\mathrm{d}\rho-\mathrm{d}\sigma\right\}  $, and%
\begin{align}
\psi_{-\sigma}\left(  x,t,k\right)   &  =\psi\left(  x,t,k\right)
\label{ksi-}\\
&  +%
{\displaystyle\sum\limits_{n=1}^{N}}
c_{n}^{2}e^{8\kappa_{n}^{3}t}y_{n}\left(  x,t\right)  \int_{x}^{\infty}%
\psi\left(  s,t,k\right)  \psi\left(  s,t,\mathrm{i}\kappa_{n}\right)
\mathrm{d}s\nonumber
\end{align}
is the associated right Jost solution, where the column $\mathbf{y}=\left(
y_{n}\right)  $ is given by%
\[
\mathbf{y}\left(  x,t\right)  =\left\{  I-\mathbf{K}\left(  x,t\right)
\right\}  ^{-1}\boldsymbol{\psi}\left(  x,t\right)  ,\ \ \boldsymbol{\psi
}\left(  x,t\right)  :=\left(  \psi\left(  x,t,\mathrm{i}\kappa_{n}\right)
\right)  ;
\]

\item The binary Darboux transform is invertible in the following sense%
\[
\left(  q_{\sigma}\right)  _{-\sigma}=q,\ \ \ \left(  \psi_{\sigma}\right)
_{-\sigma}=\psi.
\]

\end{enumerate}
\end{proposition}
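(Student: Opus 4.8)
The plan is to obtain Parts \ref{part 1}--\ref{part 2} as the discrete specialization of the Riemann-Hilbert Darboux transformation already established in Theorem \ref{Darboux for RHP}, and then to supply the four things that statement leaves open here: nonsingularity of $I\pm\mathbf K$, the passage from the quadratic expression (\ref{q tilda}) to the log-determinant form (\ref{q}), the scattering-data / step-type / Jost claims, and the inversion in Part 3. For discrete $\mathrm d\sigma=\sum_n c_n^2\delta_{\kappa_n}\,\mathrm dk$ the Fredholm equation (\ref{singular eq}) collapses to the $N\times N$ linear system $(I+\mathbf K)\mathbf y=\boldsymbol\psi$ with $\boldsymbol\psi=(\psi(x,t,\mathrm i\kappa_n))$, and the Jost formula (\ref{RHP general soltn}) becomes precisely (\ref{ksi}); writing $a_n^2:=c_n^2e^{8\kappa_n^3t}$ I would conjugate $\mathbf K$ by $\operatorname{diag}(a_n)$ to the symmetric Gram matrix $\tilde{\mathbf K}=\big(a_m a_n\int_x^\infty\psi(s,t,\mathrm i\kappa_m)\psi(s,t,\mathrm i\kappa_n)\,\mathrm ds\big)_{m,n}$, which shares its determinant with $\mathbf K$.

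Nonsingularity splits into an easy and a hard half. In Part \ref{part 1} the matrix $\tilde{\mathbf K}\ge0$ is a Gram matrix, so $\det(I+\mathbf K)=\det(I+\tilde{\mathbf K})>0$ and (\ref{singular eq}) is uniquely solvable. Part \ref{part 2} is the crux and the step I expect to be the main obstacle: the hypothesis $\mathrm d\sigma\le\mathrm d\rho$ forces each $\kappa_n\in\operatorname{Supp}\rho$, so $-\kappa_n^2$ is a genuine bound state and $\psi(\cdot,t,\mathrm i\kappa_n)\in L^2(\mathbb R)$. Distinct bound states are orthogonal on the whole line, and the norming-constant normalization gives $a_n^2\int_{\mathbb R}\psi(s,t,\mathrm i\kappa_n)^2\,\mathrm ds=c_n^2/\rho(\{\kappa_n\})\le1$; hence the full-line symmetric Gram matrix is the diagonal matrix $\hat{\mathbf K}=\operatorname{diag}\big(c_n^2/\rho(\{\kappa_n\})\big)$, which satisfies $\hat{\mathbf K}\le I$. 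Since $\hat{\mathbf K}-\tilde{\mathbf K}$ is the Gram matrix over $(-\infty,x)$ of the $a_n\psi(\cdot,t,\mathrm i\kappa_n)$, which are linearly independent on any interval (apply $\mathbb L_q$ repeatedly to a vanishing combination and invert a Vandermonde matrix in the distinct $\kappa_n^2$), I would write $I-\tilde{\mathbf K}=(I-\hat{\mathbf K})+(\hat{\mathbf K}-\tilde{\mathbf K})$ as a sum of a positive-semidefinite and a positive-definite matrix, so $\det(I-\mathbf K)>0$. This is the only place where the actual spectral structure of $q$, rather than the formal Riemann-Hilbert data, must be used.

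For the form (\ref{q}) I would apply Proposition \ref{prop on log-det} with $\mathbb A=\pm\mathbf K$ on $L^2(\mathrm d\sigma_t)$: here $\partial_x\tilde{\mathbf K}=-\mathbf w\mathbf w^{\top}$ with $\mathbf w=(a_n\psi(x,t,\mathrm i\kappa_n))$ is rank one, $(I+\mathbf K)^{-1}\boldsymbol\psi=\mathbf y=\boldsymbol\psi_\sigma$ solves (\ref{singular eq}), and the two inner products in (\ref{abstract log det formula}) turn into $\big[\int\psi_\sigma\,\psi\,\mathrm d\sigma_t\big]^2$ and $\int\psi_\sigma\,\psi'\,\mathrm d\sigma_t$, reproducing (\ref{q tilda}) term by term. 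The scattering data is identified through the gauge structure: Proposition \ref{prop on gauge} leaves the jump across $\mathbb R$, hence $R$, untouched, while the imaginary-axis jump passes from $\mathbf J_\rho$ to $\mathbf J_{\rho\pm\sigma}$ by construction, giving $\{R,\mathrm d\rho\pm\mathrm d\sigma\}$. Because $0\notin\operatorname{Supp}\sigma$, $\mathbf K(x,t)\to0$ exponentially as $x\to+\infty$, so $q_\sigma-q$ decays exponentially; this preserves both short-range-at-$+\infty$ and the normalization $\psi_\sigma\sim e^{\mathrm ikx}$, while $\operatorname{Spec}\mathbb L_{q_\sigma}>-\infty$ survives the insertion/removal of finitely many finite eigenvalues. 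That $\psi_\sigma$ genuinely solves $-\psi_\sigma''+q_\sigma\psi_\sigma=k^2\psi_\sigma$ is the (routine) Darboux substitution, and the step-type property in the sense of Definition \ref{step-type} I would obtain by Darboux-transforming the defining short-range approximants of $q$ and invoking the trace-norm continuity of Proposition \ref{Prop on limit}.

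Finally, for Part 3 I would compose Part \ref{part 1} with Part \ref{part 2} applied to $q_\sigma$, which is legitimate since $\mathrm d\sigma\le\mathrm d\rho+\mathrm d\sigma$. The computational heart is the identity $I-\mathbf K^\sigma=(I+\mathbf K)^{-1}$, where $\mathbf K^\sigma$ is built from the Jost solutions of $q_\sigma$. I would prove it by writing the new symmetric Gram matrix as $\tilde{\mathbf K}^\sigma=\int_x^\infty\mathbf w_\sigma\mathbf w_\sigma^{\top}\,\mathrm dz$ with $\mathbf w_\sigma=(I+\tilde{\mathbf K})^{-1}\mathbf w$, recognizing $\mathbf w\mathbf w^{\top}=-\partial_z\tilde{\mathbf K}$, and integrating the total derivative $\partial_z(I+\tilde{\mathbf K})^{-1}=-(I+\tilde{\mathbf K})^{-1}(\partial_z\tilde{\mathbf K})(I+\tilde{\mathbf K})^{-1}$ together with $\tilde{\mathbf K}(+\infty)=0$. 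This yields $\det(I-\mathbf K^\sigma)=\det(I+\mathbf K)^{-1}$, so the two log-determinant corrections in (\ref{q}) cancel and $(q_\sigma)_{-\sigma}=q$; substituting $(I-\mathbf K^\sigma)^{-1}=I+\mathbf K$ into (\ref{ksi-}) gives $\mathbf y^\sigma=\boldsymbol\psi$ and hence $(\psi_\sigma)_{-\sigma}=\psi$ (equivalently, by uniqueness of the right Jost solution once $(q_\sigma)_{-\sigma}=q$ is known).
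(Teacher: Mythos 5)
Your proposal is correct, and on the one point the paper actually proves for this proposition it coincides with the paper's argument: the paper, too, establishes $\det\{I-\mathbf{K}\}>0$ by symmetrizing $\mathbf{K}$ to the Gram matrix of $c_ne^{4\kappa_n^3t}\psi(\cdot,t,\mathrm{i}\kappa_n)$ over $(x,\infty)$ and using full-line orthonormality of these bound-state eigenfunctions to rewrite $I-\tilde{\mathbf{K}}$ as the (positive definite) Gram matrix over $(-\infty,x)$; your splitting $I-\tilde{\mathbf{K}}=(I-\hat{\mathbf{K}})+(\hat{\mathbf{K}}-\tilde{\mathbf{K}})$ is in fact slightly more careful, since it covers $c_n^2<\rho(\{\kappa_n\})$, where the system is merely orthogonal with norms $\le1$ rather than orthonormal. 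For everything else the paper does not give a proof at all: it cites Gesztesy--Simon--Teschl and \cite{Ryb21} for Parts \ref{part 1} and \ref{part 2} and sketches a cut-off limiting argument, whereas you assemble a self-contained route. Your conversion of (\ref{q}) into the quadratic form via Proposition \ref{prop on log-det} with the rank-one derivative $\partial_x\tilde{\mathbf{K}}=-\mathbf{w}\mathbf{w}^{\top}$ is exactly the computation the paper defers to Section \ref{proof of main theorem} (where (\ref{q}) becomes (\ref{q int})), so there is no circularity in invoking it here. The genuinely new ingredient you add is the identity $I-\tilde{\mathbf{K}}^{\sigma}=(I+\tilde{\mathbf{K}})^{-1}$, obtained by integrating $\partial_z(I+\tilde{\mathbf{K}})^{-1}=\mathbf{w}_{\sigma}\mathbf{w}_{\sigma}^{\top}$ from $x$ to $+\infty$; the paper never writes this down, yet it makes the cancellation of the two log-determinants in Part 3 completely transparent and buys an algebraic proof of invertibility that avoids any limiting argument. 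The only step you leave at the level of assertion --- that $\psi_{\sigma}$ genuinely satisfies the Schr\"{o}dinger equation with potential $q_{\sigma}$, so that the log-determinant expression is indeed the transformed potential --- is also left unproven by the paper at this point (it is handled later via the Wronskian identity (\ref{W}) and otherwise delegated to the cited references), so this is not a gap relative to the paper's own standard, though in a fully self-contained write-up it is the one place where you would still need to import the Darboux substitution computation or the Wronskian argument of Section \ref{proof of main theorem}.
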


This statement is not new. Part 1 follows from \cite[Theorem 4.1]%
{Gesztesyetal96} where it is proven in the most general spectral situation and
for arbitrary Sturm-Liouville operators but not in the IST context. Part 2
(removing eigenvalues) and 3 are not explicitly addressed therein but it can
of course be done along the same lines readily suggested in
\cite{Gesztesyetal96}. For short-range $q$'s both parts are proven in our
\cite{Ryb21} by completely different methods and in the IST\ context. The main
difference in the approaches is that in \cite{Gesztesyetal96} is obtained by
adding one eigenvalue at a time while in \cite{Ryb21} all eigenvalues are
added simultaneously. The formulation of Proposition \ref{finite neg spec} is,
however, new. Therefore we only show here that the operator $I-\mathbf{K}%
\left(  x,t\right)  $ is nonsingular ($I+\mathbf{K}\left(  x,t\right)  $
clearly is) and $\det\left\{  I-\mathbf{K}\left(  x,t\right)  \right\}  >0$
for all real $x$ and $t\geq0$. To this end we observe that%
\begin{align*}
\det\left\{  I-\mathbf{K}\left(  x,t\right)  \right\}   &  =\det\left\{
I-\boldsymbol{C}^{2}\int_{x}^{\infty}\boldsymbol{\psi}\left(  s,t\right)
^{T}\boldsymbol{\psi}\left(  s,t\right)  \mathrm{d}s\right\}  \\
&  =\det\left\{  I-\int_{x}^{\infty}\left(  \boldsymbol{\psi}\left(
s,t\right)  \boldsymbol{C}\right)  ^{T}\left(  \boldsymbol{\psi}\left(
s,t\right)  \boldsymbol{C}\right)  \mathrm{d}s\right\}  ,
\end{align*}
where $\boldsymbol{C}=\operatorname*{diag}\left(  c_{n}e^{4\kappa_{n}^{3}%
t}\right)  $. Since $\left(  c_{n}e^{4\kappa_{n}^{3}t}\psi\left(
x,t,\mathrm{i}\kappa_{n}\right)  \right)  _{n=1}^{N}$ is an orthonormal set in
$L^{2}$ it follows that%
\[
I-\int_{x}^{\infty}\left(  \boldsymbol{\psi}\left(  s,t\right)  \boldsymbol{C}%
\right)  ^{T}\left(  \boldsymbol{\psi}\left(  s,t\right)  \boldsymbol{C}%
\right)  \ \mathrm{d}s=\int_{-\infty}^{x}\left(  \boldsymbol{\psi}\left(
s,t\right)  \boldsymbol{C}\right)  ^{T}\left(  \boldsymbol{\psi}\left(
s,t\right)  \boldsymbol{C}\right)  \mathrm{d}s,
\]
which is clearly a positive definite matrix and the conclusion immediately follows.

Note that Proposition \ref{finite neg spec} can actually be independently
derived from \cite[Theorem 3.1]{Ryb21} by the following limiting arguments.
Consider the sequence $q_{n}=q\chi_{\left(  -n,\infty\right)  }$. As is proven
in \cite{GruRybSIMA15}, $S_{q_{n}}\rightarrow S_{q}$ weakly and the
corresponding sequence $q_{n}\left(  x,t\right)  $ converges point-wise to
some $q\left(  x,t\right)  $ for every $t>0$. An explicit formula for
$q\left(  x,t\right)  $ via $S_{q}$ is given in \cite{GruRybSIMA15} in terms
of Hankel operators. The corresponding sequence $\psi_{n}\left(  k;x,t\right)
$ converges uniformly to $\psi\left(  x,t;k\right)  $ on compact in
$\operatorname{Im}k>0$ (this is a general classical fact that holds even for
Weyl solutions). Part 3 for $q_{n}$ follows directly from Marchenko's
characterization of the inverse scattering problem for short-range potentials
\cite{MarchBook2011}. Indeed, his characterization of the scattering data
$S_{q}=\left\{  R,\mathrm{d}\rho\right\}  $ imposes no other condition on
d$\rho$ but $N<\infty$. The passage to the limit as $n\rightarrow\infty$
should then be in order.

Observe that Proposition \ref{finite neg spec} is a particular case of Theorem
\ref{MainThm}. However an independent proof of Theorem \ref{MainThm} involves
cumbersome technicalities which can be avoided by taking limits in already
known results.

We now proceed to the proof of Theorem \ref{MainThm}. As before, since
$\left(  x,t\right)  $ appear as parameters, we may drop them from the list of
variable whenever it is convenient and leads to no confusion. We recall that
prime stands for the derivative in $x$ (no derivatives in other variable
appear). It is sufficiently to show that Theorem \ref{MainThm} holds for Case
1: $S_{q}=\left\{  R,0\right\}  $, d$\sigma\geq0$ and Case 2: $S_{q}=\left\{
R,\mathrm{d}\rho\right\}  $, $-$d$\sigma\leq0$ where d$\sigma$ is a
restriction of the (non-negative) measure d$\rho$, as the general case is a
combination of the two. We concentrate on the proof for Case 1 since Case 2
relies on the very same arguments.

Case 1 (adding negative spectrum). Discretize the measure d$\rho$ in
(\ref{S_q}) as is done in Section \ref{discrete oper}. More specifically, take%
\[
\mathrm{d}\rho_{N}\left(  k\right)  =%
{\displaystyle\sum\limits_{n=1}^{N}}
\rho\left(  I_{n}\right)  \delta_{\kappa_{n}}\left(  k\right)  \mathrm{d}%
k,\ \ \ N<\infty,
\]
where the partition $\left(  I_{n}\right)  $ is chosen the same way as in
Section \ref{discrete oper}. Proposition \ref{finite neg spec}, part 1, then
applies with $S_{q}=\left\{  R,0\right\}  $, d$\sigma=$d$\rho_{N}\geq0$, and
$\mathbf{K}$ $=\mathbf{K}_{N}$, where the entries are given by%
\[
\left(  \mathbf{K}_{N}\right)  _{mn}=\rho\left(  I_{n}\right)  e^{8\kappa
_{n}^{3}t}\int_{x}^{\infty}\psi\left(  s,t,\mathrm{i}\kappa_{m}\right)
\psi\left(  s,t,\mathrm{i}\kappa_{n}\right)  \mathrm{d}s.
\]
Then%
\[
q_{\rho_{N}}\left(  x,t\right)  =q\left(  x,t\right)  -2\partial_{x}^{2}%
\log\det\left\{  I+\mathbf{K}_{N}\left(  x,t\right)  \right\}
\]
is a step-type KdV solution. By Proposition \ref{prop on log-det} we have%
\begin{align}
q_{\rho_{N}}\left(  x,t\right)   &  =q\left(  x,t\right) \label{q int}\\
&  +2\left[  \int\psi_{\rho_{N}}\left(  x,t;\mathrm{i}s\right)  \psi\left(
x,t;\mathrm{i}s\right)  \mathrm{d}\rho_{N}\left(  s\right)  \right]
^{2}\nonumber\\
&  +4\int\psi_{\rho_{N}}\left(  x,t;\mathrm{i}s\right)  \psi^{\prime}\left(
x,t;\mathrm{i}s\right)  \mathrm{d}\rho_{N}\left(  s\right)  ,\nonumber
\end{align}
where%
\begin{equation}
\psi_{\rho_{N}}\left(  x,t,k\right)  =\psi\left(  x,t,k\right)  -%
{\displaystyle\sum\limits_{n=1}^{N}}
\rho\left(  I_{n}\right)  e^{8\kappa_{n}^{3}t}y_{n}\left(  x,t\right)
\int_{x}^{\infty}\psi\left(  s,t,k\right)  \psi\left(  s,t,\mathrm{i}%
\kappa_{n}\right)  \mathrm{d}s, \label{psi int}%
\end{equation}
and%
\[
\left(  y_{n}\right)  =\left(  I+\mathbf{K}_{N}\right)  ^{-1}\boldsymbol{\psi
},\ \ \boldsymbol{\psi}:=\left(  \psi\left(  \cdot,\mathrm{i}\kappa
_{n}\right)  \right)  .
\]
Note that the integrals in (\ref{q int}) are actually finite sums.

By Proposition \ref{Prop on limit} with $g=\psi,$ $S=\left(  x,\infty\right)
$ we can pass in (\ref{q int}) and (\ref{psi int}) to the limit as
$N\rightarrow\infty$. Thus we have both, the weak convergence of $\left\{
R,\mathrm{d}\rho_{N}\right\}  $ to $\left\{  R,\mathrm{d}\rho\right\}  $ and
point-wise convergence of $q_{\rho_{N}}\left(  x,t\right)  $ to
\begin{align}
q_{\rho}\left(  x,t\right)   &  =q\left(  x,t\right) \label{q_r}\\
&  +2\left[  \int\psi_{\rho}\left(  x,t;\mathrm{i}s\right)  \psi\left(
x,t;\mathrm{i}s\right)  \mathrm{d}\rho_{t}\left(  s\right)  \right]
^{2}+4\int\psi_{\rho}\left(  x,t;\mathrm{i}s\right)  \psi^{\prime}\left(
x,t;\mathrm{i}s\right)  \mathrm{d}\rho_{t}\left(  s\right)  .\nonumber
\end{align}
By Definition \ref{step-type}, $q_{\rho}\left(  x,t\right)  $ is a step-type
KdV solution, which proves (\ref{DT for psi}) and (\ref{DT for q}) for
$S_{q}=\left\{  R,0\right\}  $ and d$\sigma=$d$\rho\geq0$ if we show that
$q_{\rho}\left(  x,t\right)  $ is short-range at $+\infty$ for each $t\geq0$.
This will be done later. The solubility of the Fredholm integral equation
(\ref{Fredholm}), which we rewrite in the form%
\begin{equation}
y+\mathbb{K}y=\psi, \label{int eq}%
\end{equation}
where $\mathbb{K}$ is the integral operator on $L^{2}\left(  \mathrm{d}%
\rho\right)  $ with the kernel%
\[
K\left(  \alpha,s;x,t\right)  :=\int_{x}^{\infty}\psi\left(  z,t;\mathrm{i}%
\alpha\right)  \psi\left(  z,t;\mathrm{i}s\right)  \mathrm{d}z,
\]
is obvious as $\mathbb{K}$ is clearly positive (see Section
\ref{discrete oper}) and hence $I+\mathbb{K}$ is positive definite. Since the
Hilbert-Schmidt norm $\left\vert \left\vert \mathbb{K}\left(  x,t\right)
\right\vert \right\vert _{2}$ and the $L^{2}$ norm $\left\vert \left\vert
\psi\left(  x,t;\cdot\right)  \right\vert \right\vert $ are small for large
$x$ and any fixed $t\geq0$, we immediately see from (\ref{int eq}) that
$\left\vert \left\vert y\left(  \cdot,x,t\right)  \right\vert \right\vert $ is
also small as $x\rightarrow\infty$ and (\ref{DT for psi}) readily yields%
\begin{equation}
\psi_{\rho}\left(  x,t;k\right)  =\psi\left(  x,t;k\right)  \left[  1+o\left(
1\right)  \right]  \rightarrow0,\ \ \ x\rightarrow\infty
,\ \ \ \operatorname{Im}k\geq0. \label{decay of y}%
\end{equation}
If we now show that $\psi_{\rho}$ solves the the Schr\"{o}dinger equation, it
will be its right Jost solution. We rely on the following general fact
(directly verifiable): if the Wronskian identity (\ref{W}) holds for two
functions $f_{\lambda}\left(  x\right)  $, $f_{\nu}\left(  x\right)  $ then
$f_{\lambda}^{\prime\prime}\left(  x\right)  /f_{\lambda}\left(  x\right)
+\lambda^{2}$ is independent of $\lambda$ and hence is equal to some $q\left(
x\right)  .$ Thus, $f_{\lambda}$ solves the Schr\"{o}dinger equation
$-f^{\prime\prime}+q\left(  x\right)  f=\lambda^{2}f$.

By Proposition \ref{finite neg spec} $\psi_{\rho_{N}}$ is a right Jost
solution and hence is subject to (\ref{W}). But as we have shown, $\psi
_{\rho_{N}}$ converges in $L^{2}\left(  \mathrm{d}\rho\right)  $ to some
$\psi_{\rho}$ and hence there is a subsequence convergent almost everywhere.
It follows that we can pass in (\ref{W}) to the limit and therefore
$\psi_{\rho}$ is a solution to the Schr\"{o}dinger equation\footnote{Note it
can also be shown by a direct but rather involved inspection.}. By Theorem
\ref{Darboux for RHP}, (\ref{y}), we can claim that $\psi_{\rho}\left(
x,t;\mathrm{i}s\right)  $ $=y\left(  s,x,t\right)  $, $s\in
\operatorname*{Supp}\rho$, which also proves property (1). We are ready to
show now that%
\begin{align*}
Q\left(  x\right)   &  :=q_{\rho}\left(  x,t\right)  -q\left(  x,t\right)  \\
&  =2\left[  \int\psi_{\rho}\left(  x,t;\mathrm{i}s\right)  \psi\left(
x,t;\mathrm{i}s\right)  \mathrm{d}\rho_{t}\left(  s\right)  \right]
^{2}+4\int\psi_{\rho}\left(  x,t;\mathrm{i}s\right)  \psi^{\prime}\left(
x,t;\mathrm{i}s\right)  \mathrm{d}\rho_{t}\left(  s\right)
\end{align*}
is in $L_{1}^{1}\left(  +\infty\right)  $. This will conclude the proof that
$q_{\rho}$ is indeed a step-type KdV solution. But due to (\ref{decay of y})
$Q\in L_{1}^{1}\left(  +\infty\right)  $ if%
\[
Q_{0}\left(  x\right)  :=2\left[  \int\psi\left(  x,t;\mathrm{i}s\right)
\psi\left(  x,t;\mathrm{i}s\right)  \mathrm{d}\rho_{t}\left(  s\right)
\right]  ^{2}+4\int\psi\left(  x,t;\mathrm{i}s\right)  \psi^{\prime}\left(
x,t;\mathrm{i}s\right)  \mathrm{d}\rho_{t}\left(  s\right)
\]
is in $L_{1}^{1}\left(  +\infty\right)  $. The latter immediately follows from
Lemma \ref{lemma on Jost} as each term of $Q_{0}$ is subject to its conditions.

It remains to prove the properties. (1) has already been proven. (2) clearly
holds as each term in $Q\left(  x\right)  $ is at least absolutely continuous.
The same applies to the derivatives if $q$ is differentiable sufficient number
of times. (3) is also obvious since both $\psi_{\rho}\left(  x,t;\mathrm{i}%
s\right)  $, $\psi\left(  x,t;\mathrm{i}s\right)  $ decay exponentially as
$x\rightarrow\infty$ for every $s>0$. Thus if the sets $\operatorname*{Supp}%
\rho$ and $\left\{  0\right\}  $ are separated then each term in $Q\left(
x\right)  $ decay exponentially. To show (5) we recall the following general
representation of the diagonal Green's function%
\begin{equation}
G\left(  k^{2},x\right)  =\frac{f_{+}\left(  x,k\right)  f_{-}\left(
x,k\right)  }{W\left\{  f_{+}\left(  x,k\right)  ,f_{-}\left(  x,k\right)
\right\}  },\label{G}%
\end{equation}
where $f_{\pm}$ are Weyl solutions at $\pm\infty$ respectively. Take $f_{\pm}$
as in Theorem \ref{Darboux for RHP}: $f_{+}=\psi$, $f_{-}=\varphi$. Then
(\ref{G}) reads%
\begin{equation}
G\left(  k^{2},x\right)  =-\frac{\varphi\left(  x,k\right)  \psi\left(
x,k\right)  }{2\mathrm{i}k}.\label{G1}%
\end{equation}
As is well-known, $G\left(  \lambda,x\right)  $ is a Herglotz function for
each $x$, that is an analytic function from $\operatorname{Im}\lambda>0$ into
itself. Such functions may have singularities only on $\operatorname{Im}%
\lambda=0$ of at most simple pole type. In the context of Schr\"{o}dinger
operators, a pole type singularity may only occur at a bound state. By Theorem
\ref{Darboux for RHP} all pole type singularities on the right hand side of
(\ref{G1}) come from singularities of $\varphi$, which, in turn, coincide with
pure points of $\rho$. This proves (4) and Case 1 is proven now.

Case 2 (removing negative spectrum). Suppose that $\left\{  R,\mathrm{d}%
\rho\right\}  $ are scattering data for a step-type KdV solution and
d$\sigma=\left.  \mathrm{d}\rho\right\vert _{\Delta}$, where $\Delta
\subseteq\operatorname*{Supp}\rho$. Let d$\sigma_{N}$ be a discretization of
d$\sigma$ in Case 1. By Proposition \ref{finite neg spec} we construct a
step-type KdV solution $q_{-\sigma_{N}}$ by (\ref{q-}) and the associated Jost
solution $\psi_{-\sigma_{N}}$ by (\ref{ksi-}). As in Case 1, we can pass to
the limit in Proposition \ref{finite neg spec} as d$\sigma_{N}\rightarrow
$d$\sigma$, the equations%
\[
\left(  \psi_{\sigma}\right)  _{-\sigma}=\psi,\text{ \ \ }\left(  q_{\sigma
}\right)  _{-\sigma}=q
\]
being preserved in the limit. This completes our proof of Theorem
\ref{MainThm}.

We are now in the position to complete the proof of Theorem
\ref{Darboux for RHP}. It is just enough to notice that $\widetilde{\psi}$ and
$\widetilde{q}$ in Theorem \ref{Darboux for RHP} appear as $\psi_{\sigma}$ and
$q_{\sigma}$ in Theorem \ref{MainThm}.

\begin{remark}
One of the referees asked if the process of adding negative spectrum is
commutative. I.e. adding $\sigma_{1}$ and then $\sigma_{2}$ produces the same
result as adding $\sigma_{2}$ and then $\sigma_{1}$. We are positive that this
is indeed the case. Take in (\ref{q}) two one point measures $\mathrm{d}%
\sigma_{n}\left(  s\right)  =c_{n}^{2}\delta\left(  s-\kappa_{n}\right)
\mathrm{d}s$, $\kappa_{n}>0$, $n=1,2$. We have two matrices $\mathbf{K}_{12}$
and $\mathbf{K}_{12}$%
\begin{align*}
\mathbf{K}_{12}  &  =\int_{x}^{\infty}%
\begin{pmatrix}
c_{1}^{2}\psi\left(  s,\mathrm{i}\kappa_{1}\right)  ^{2} & c_{1}c_{2}%
\psi\left(  s,\mathrm{i}\kappa_{1}\right)  \psi\left(  s,\mathrm{i}\kappa
_{2}\right) \\
c_{1}c_{2}\psi\left(  s,\mathrm{i}\kappa_{1}\right)  \psi\left(
s,\mathrm{i}\kappa_{2}\right)  & c_{2}^{2}\psi\left(  s,\mathrm{i}\kappa
_{2}\right)  ^{2}%
\end{pmatrix}
\mathrm{d}s,\\
&  \text{(adding }\sigma_{1}\text{ and then }\sigma_{2}\text{)}%
\end{align*}%
\begin{align*}
\mathbf{K}_{21}  &  =\int_{x}^{\infty}%
\begin{pmatrix}
c_{2}^{2}\psi\left(  s,\mathrm{i}\kappa_{2}\right)  ^{2} & c_{1}c_{2}%
\psi\left(  s,\mathrm{i}\kappa_{1}\right)  \psi\left(  s,\mathrm{i}\kappa
_{2}\right) \\
c_{1}c_{2}\psi\left(  s,\mathrm{i}\kappa_{1}\right)  \psi\left(
s,t,\mathrm{i}\kappa_{2}\right)  & c_{1}^{2}\psi\left(  s,t,\mathrm{i}%
\kappa_{1}\right)  ^{2}%
\end{pmatrix}
\mathrm{d}s.\\
&  \text{(adding }\sigma_{2}\text{ and then }\sigma_{1}\text{)}%
\end{align*}
Since%
\begin{align*}
&
\begin{pmatrix}
c_{2}^{2}\psi\left(  s,\mathrm{i}\kappa_{2}\right)  ^{2} & c_{1}c_{2}%
\psi\left(  s,\mathrm{i}\kappa_{1}\right)  \psi\left(  s,\mathrm{i}\kappa
_{2}\right) \\
c_{1}c_{2}\psi\left(  s,\mathrm{i}\kappa_{1}\right)  \psi\left(
s,t,\mathrm{i}\kappa_{2}\right)  & c_{1}^{2}\psi\left(  s,t,\mathrm{i}%
\kappa_{1}\right)  ^{2}%
\end{pmatrix}
\\
&  =%
\begin{pmatrix}
0 & 1\\
1 & 0
\end{pmatrix}%
\begin{pmatrix}
c_{1}^{2}\psi\left(  s,\mathrm{i}\kappa_{1}\right)  ^{2} & c_{1}c_{2}%
\psi\left(  s,\mathrm{i}\kappa_{1}\right)  \psi\left(  s,\mathrm{i}\kappa
_{2}\right) \\
c_{1}c_{2}\psi\left(  s,\mathrm{i}\kappa_{1}\right)  \psi\left(
s,t,\mathrm{i}\kappa_{2}\right)  & c_{2}^{2}\psi\left(  s,\mathrm{i}\kappa
_{2}\right)  ^{2}%
\end{pmatrix}%
\begin{pmatrix}
0 & 1\\
1 & 0
\end{pmatrix}
,
\end{align*}
we conclude that%
\[
\mathbf{K}_{21}=%
\begin{pmatrix}
0 & 1\\
1 & 0
\end{pmatrix}
\mathbf{K}_{12}%
\begin{pmatrix}
0 & 1\\
1 & 0
\end{pmatrix}
.
\]
Hence $\det\left(  I+\mathbf{K}_{21}\right)  =\det\left(  I+\mathbf{K}%
_{12}\right)  $ and therefore%
\begin{align*}
q_{\sigma_{1}\sigma_{2}}\left(  x,t\right)   &  =q\left(  x,t\right)
-2\partial_{x}^{2}\log\det\left\{  I+\mathbf{K}_{12}\left(  x,t\right)
\right\} \\
&  =q\left(  x,t\right)  -2\partial_{x}^{2}\log\det\left\{  I+\mathbf{K}%
_{21}\left(  x,t\right)  \right\}  =q_{\sigma_{2}\sigma_{1}}\left(
x,t\right)  .
\end{align*}
This simple computation can be easily extended to two general discrete
measures $\sigma_{1},\sigma_{2}$ and then we apply our density argument to go
over to arbitrary measures.
\end{remark}

\section{Examples \label{sect on reflectionless}}

In this section we offer two examples. The first one is a new derivation of
the KdV solution and the other one is an explicit construction of a step-type
potential (KdV solution) which has the same norming measure $\rho$ as the pure
step potential (\ref{pure step}) but zero reflection coefficient.

\subsection{Classical pure soliton solution}

We show that Theorem \ref{MainThm} immediately recovers the well-known
classical formula for pure $N$ soliton solution \cite{MarchBook2011}%
\begin{equation}
q\left(  x,t\right)  =-2\partial_{x}^{2}\log\det\left(  \delta_{mn}+c_{n}%
^{2}\dfrac{\mathrm{e}^{-\left(  \kappa_{m}+\kappa_{n}\right)  x+8\left(
\kappa_{m}^{3}+\kappa_{n}^{3}\right)  t}}{\kappa_{m}+\kappa_{n}}\right)  .
\label{classica pure solition}%
\end{equation}
Indeed, take in Theorem \ref{MainThm} $S_{q}=0$ (the zero background) and a
discrete measure $\rho$ given by%
\[
\mathrm{d}\rho\left(  k,t\right)  =%
{\displaystyle\sum\limits_{n=1}^{N}}
c_{n}\left(  t\right)  ^{2}\delta_{\kappa_{n}}\left(  k\right)  \mathrm{d}%
k,\ \ \ c_{n}\left(  t\right)  =c_{n}\mathrm{e}^{4\kappa_{n}^{3}t}%
\]
In this case $\psi\left(  x,k\right)  =\exp\left(  \mathrm{i}kx\right)  $ and
the Fredholm integral equation (\ref{Fredholm}) becomes a linear system,%
\[
y_{m}+%
{\displaystyle\sum\limits_{n=1}^{N}}
K_{mn}\left(  x,t\right)  y_{n}=\mathrm{e}^{-\kappa_{m}x},
\]
where%
\[
K_{mn}\left(  x,t\right)  =c_{n}^{2}\mathrm{e}^{8\kappa_{n}^{3}t}%
\frac{e^{-\left(  \kappa_{m}+\kappa_{n}\right)  x}}{\kappa_{m}+\kappa_{n}}.
\]
Furthermore,%
\begin{align}
\psi_{\rho}\left(  x,t;\kappa_{m}\right)   &  =\mathrm{e}^{-\kappa_{m}x}-%
{\displaystyle\sum\limits_{n=1}^{N}}
c_{n}^{2}\mathrm{e}^{8\kappa_{n}^{3}t}\frac{\mathrm{e}^{-\left(  \kappa
_{m}+\kappa_{n}\right)  x}}{\kappa_{m}+\kappa_{n}}y_{n}\left(  x,t\right)
\label{n=1}\\
&  =\mathrm{e}^{-\kappa_{m}x}\left[  1-%
{\displaystyle\sum\limits_{n=1}^{N}}
c_{n}^{2}\mathrm{e}^{8\kappa_{n}^{3}t-\kappa_{n}x}\frac{y_{n}\left(
x,t\right)  }{\kappa_{m}+\kappa_{n}}\right] \nonumber
\end{align}
and thus for the KdV\ solution we have%
\begin{align}
q_{\rho}\left(  x,t\right)   &  =2\left[
{\displaystyle\sum\limits_{n=1}^{N}}
c_{n}^{2}\mathrm{e}^{8\kappa_{n}^{3}t-\kappa_{n}x}\psi_{\rho}\left(
x,t;\kappa_{n}\right)  \right]  ^{2}\label{pure soliton}\\
&  -4%
{\displaystyle\sum\limits_{n=1}^{N}}
\kappa_{n}c_{n}^{2}\mathrm{e}^{8\kappa_{n}^{3}t-\kappa_{n}x}\psi_{\rho}\left(
x,t;\kappa_{n}\right)  .\nonumber
\end{align}
The formula (\ref{pure soliton}) is a new derivation of
(\ref{classica pure solition}). Due to (\ref{q}), (\ref{pure soliton}) is
equivalent to (\ref{classica pure solition}). We note that the representation
(\ref{classica pure solition}) is not as convenient for asymptotic analysis
for large $x$ and $t$ as the methods based on the Riemann-Hilbert problem (see
e.g. \cite{GT09}). We are hopeful that for similar reasons (\ref{pure soliton}%
) could be a suitable starting point to do asymptotic analysis that could work
for arbitrary enough measures.

We conclude this section with a simple computation showing how (\ref{n=1})
implies the famous one soliton solution. Taking in (\ref{n=1}) $\rho\left(
k\right)  =c^{2}\delta\left(  k-\mathrm{i}\kappa\right)  $
\[
\psi_{\rho}\left(  x,t;\mathrm{i}\kappa\right)  =e^{-\kappa x}\frac{2\kappa
}{2\kappa+c^{2}\exp\left(  8\kappa^{3}t-2\kappa x\right)  }%
\]
and substituting it in (\ref{pure soliton}) implies%
\begin{align*}
q_{\rho}\left(  x,t\right)   &  =2\left[  \frac{2c^{2}\exp\left(  8\kappa
^{3}t-2\kappa x\right)  }{2\kappa+2c^{2}\exp\left(  8\kappa^{3}t-2\kappa
x\right)  }\right]  ^{2}-4\kappa\frac{2c^{2}\exp\left(  8\kappa^{3}t-2\kappa
x\right)  }{2\kappa+2c^{2}\exp\left(  8\kappa^{3}t-2\kappa x\right)  }\\
&  =-2\kappa^{2}\operatorname{sech}^{2}\left(  4\kappa^{3}t-\kappa x+\log
\frac{c}{\sqrt{2\kappa}}\right)  ,
\end{align*}
which is the one soliton solution, as expected.

\subsection{Reflectionless step-type potential}

As was discussed in section \ref{main results}, such potentials naturally
appear in the study of soliton gases. A pure step potential (\ref{pure step})
serves as a model of soliton condensate but apparently it is not
reflectionless. In this subsection we offer a construction that produces a
step-type potential that has the same norming measure as a pure step potential
but reflectionless. To this end, take in Theorem \ref{MainThm} $S_{q}=0$ (the
zero background) and the measure $\rho$ with density d$\rho\left(  k\right)
/$d$k=2k\sqrt{1-k^{2}}$ supported on $\left[  0,1\right]  $. Apparently,%
\[
\int_{0}^{1}\mathrm{d}\rho\left(  k\right)  /k<\infty,\ \ \ \mathrm{d}\rho
\geq0,
\]
and Theorem \ref{MainThm} applies. In this case $\psi\left(  x,k\right)
=\exp\left(  \mathrm{i}kx\right)  $ and%
\[
K\left(  k,\mathrm{i}s,x\right)  =\frac{\mathrm{e}^{\left(  \mathrm{i}%
k-s\right)  x}}{s-\mathrm{i}k},
\]
are independent of $t$. Then the Fredholm integral equation (\ref{Fredholm})
turns into%
\begin{equation}
Y\left(  \alpha\right)  +\int_{0}^{1}2s\sqrt{1-s^{2}}\frac{\mathrm{e}%
^{8s^{3}t-2sx}}{s+\alpha}Y\left(  s\right)  \mathrm{d}s=1,\ \ \ \alpha
\in\left[  0,1\right]  , \label{Y}%
\end{equation}
where $Y\left(  s\right)  =y\left(  s\right)  $e$^{sx}$. Eq (\ref{Y}) has a
unique solution $Y\left(  \alpha;x,t\right)  $ and for the potential we have%
\begin{align}
q_{\rho}\left(  x,t\right)   &  =8\left[  \int_{0}^{1}s\sqrt{1-s^{2}}%
e^{-2sx}Y\left(  s;x,t\right)  \mathrm{d}s\right]  ^{2}\label{q ref}\\
&  -8\int_{0}^{1}s^{2}\sqrt{1-s^{2}}e^{-2sx}Y\left(  s;x,t\right)
\mathrm{d}s,\nonumber
\end{align}
which is a reflectionless step-type KdV solution with the scattering data
\[
S_{q_{\rho}}=\left\{  0,\mathrm{d}\rho\right\}  .
\]
By Theorem \ref{MainThm} $q_{\rho}\left(  x,t\right)  $ decays at $+\infty$
sufficiently fast but it cannot be readily seen how it behaves at $-\infty$.
In fact $q_{\rho}\rightarrow-1$ as $x\rightarrow-\infty$. To show this one
needs to perform a standard transformation of the QARHP in Theorem
\ref{Darboux for RHP} (see e.g. \cite{Grava21}). We will come back to this elsewhere.

\section{Appendix}

\subsection{Proof of Proposition \ref{Prop on limit}}

\begin{lemma}
Let $a_{1}\left(  x,t\right)  $ and $a_{2}\left(  x,t\right)  $ be real
continuos functions and%
\[
A\left(  x,y\right)  =\int_{S}a\left(  x,s\right)  a\left(  y,s\right)
\mathrm{d}s,B\left(  x,y\right)  =\int_{S}b\left(  x,s\right)  b\left(
y,s\right)  \mathrm{d}s.
\]
Then for the Hilbert-Schmidt norm of the integral operators%
\[
\left(  \mathbb{A}f\right)  \left(  x\right)  =\int A\left(  x,y\right)
f\left(  y\right)  \mathrm{d}\mu\left(  y\right)
\]
we have%
\begin{align*}
\left\vert \left\vert \mathbb{A-B}\right\vert \right\vert _{2}  &  =\left\vert
\left\vert \left\vert \mathbb{A}-\mathbb{B}\right\vert \right\vert \right\vert
\\
&  \leq\left(  \left\vert \left\vert \left\vert a\right\vert \right\vert
\right\vert +\left\vert \left\vert \left\vert b\right\vert \right\vert
\right\vert \right)  \left\vert \left\vert \left\vert a-b\right\vert
\right\vert \right\vert .
\end{align*}

\end{lemma}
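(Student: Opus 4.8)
The plan is to reduce the estimate to the rank-one building blocks already exploited in Section~\ref{discrete oper}. First I would record that the asserted equality $\left\vert\left\vert \mathbb{A}-\mathbb{B}\right\vert\right\vert_{2}=\left\vert\left\vert\left\vert \mathbb{A}-\mathbb{B}\right\vert\right\vert\right\vert$ is nothing but the standard identification of the Hilbert--Schmidt norm of an integral operator on $L^{2}\left(\mathrm{d}\mu\right)$ with the $L^{2}\left(\mathrm{d}\mu\otimes\mathrm{d}\mu\right)$ norm of its kernel; the genuine content of the lemma is thus the inequality on the right.

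The key algebraic observation is the bilinear telescoping identity
\[
a\left(x,s\right)a\left(y,s\right)-b\left(x,s\right)b\left(y,s\right)=a\left(x,s\right)\bigl[a\left(y,s\right)-b\left(y,s\right)\bigr]+\bigl[a\left(x,s\right)-b\left(x,s\right)\bigr]b\left(y,s\right).
\]
Integrating in $s$ splits the kernel $A\left(x,y\right)-B\left(x,y\right)$ as a sum of two kernels $C_{1}\left(x,y\right)=\int_{S}a\left(x,s\right)\left(a-b\right)\left(y,s\right)\mathrm{d}s$ and $C_{2}\left(x,y\right)=\int_{S}\left(a-b\right)\left(x,s\right)b\left(y,s\right)\mathrm{d}s$. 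Writing $\mathbb{C}_{1},\mathbb{C}_{2}$ for the associated integral operators, so that $\mathbb{A}-\mathbb{B}=\mathbb{C}_{1}+\mathbb{C}_{2}$, I would estimate each piece separately. Just as in Section~\ref{discrete oper}, each $\mathbb{C}_{i}$ is an $s$-integral of rank-one operators, $\mathbb{C}_{1}=\int_{S}\mathbb{G}_{1}\left(s\right)\mathrm{d}s$ with $\mathbb{G}_{1}\left(s\right)f=\left\langle f,\left(a-b\right)\left(\cdot,s\right)\right\rangle a\left(\cdot,s\right)$, and symmetrically for $\mathbb{C}_{2}$.

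For the bound I would invoke the triangle inequality for the Hilbert--Schmidt norm in its integral (Minkowski) form, namely $\left\vert\left\vert \int_{S}\mathbb{G}\left(s\right)\mathrm{d}s\right\vert\right\vert_{2}\leq\int_{S}\left\vert\left\vert \mathbb{G}\left(s\right)\right\vert\right\vert_{2}\mathrm{d}s$, together with the elementary formula $\left\vert\left\vert \left\langle\cdot,u\right\rangle v\right\vert\right\vert_{2}=\left\vert\left\vert u\right\vert\right\vert\,\left\vert\left\vert v\right\vert\right\vert$ for a rank-one operator. This gives $\left\vert\left\vert \mathbb{C}_{1}\right\vert\right\vert_{2}\leq\int_{S}\left\vert\left\vert a\left(\cdot,s\right)\right\vert\right\vert\,\left\vert\left\vert \left(a-b\right)\left(\cdot,s\right)\right\vert\right\vert\mathrm{d}s$, the inner norms being those of $L^{2}\left(\mathrm{d}\mu\right)$. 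A single Cauchy--Schwarz inequality in the variable $s$ then converts the right-hand side into $\left\vert\left\vert\left\vert a\right\vert\right\vert\right\vert\,\left\vert\left\vert\left\vert a-b\right\vert\right\vert\right\vert$, and symmetrically $\left\vert\left\vert \mathbb{C}_{2}\right\vert\right\vert_{2}\leq\left\vert\left\vert\left\vert b\right\vert\right\vert\right\vert\,\left\vert\left\vert\left\vert a-b\right\vert\right\vert\right\vert$. Summing via $\left\vert\left\vert \mathbb{A}-\mathbb{B}\right\vert\right\vert_{2}\leq\left\vert\left\vert \mathbb{C}_{1}\right\vert\right\vert_{2}+\left\vert\left\vert \mathbb{C}_{2}\right\vert\right\vert_{2}$ yields the claim.

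The computation is routine, and the only point that deserves care is the justification that the Hilbert--Schmidt norm passes under the operator-valued integral $\int_{S}\mathbb{G}\left(s\right)\mathrm{d}s$, i.e.\ the Minkowski integral inequality for $\left\vert\left\vert \cdot\right\vert\right\vert_{2}$. Since this is exactly the reasoning already used implicitly in the chain of estimates bounding $\left\vert\left\vert \mathbb{K}\right\vert\right\vert_{2}$ in Section~\ref{discrete oper}, I would simply reuse it; the continuity and square-integrability hypothesis \eqref{g} on $a$ and $b$ guarantees that all the integrals involved are finite, so no measurability or convergence obstruction arises.
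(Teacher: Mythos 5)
Your proposal is correct and follows essentially the same route as the paper: the same bilinear telescoping of the kernel $A-B$ into two cross terms, each estimated by Cauchy--Schwarz in the $s$ variable, with the Hilbert--Schmidt norm identified with the $L^{2}\left(\mathrm{d}\mu\times\mathrm{d}\mu\right)$ norm of the kernel. The only cosmetic difference is that you apply Cauchy--Schwarz at the operator level (via rank-one pieces and the Minkowski integral inequality for $\left\vert\left\vert\cdot\right\vert\right\vert_{2}$) whereas the paper applies it pointwise to the kernel before integrating; both yield the identical bound.
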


\begin{proof}
Rewrite%
\begin{align*}
&  \int_{S}\left[  a\left(  x,s\right)  a\left(  y,s\right)  -b\left(
x,s\right)  b\left(  y,s\right)  \right]  \mathrm{d}s\\
&  =\int_{S}\left[  a\left(  x,s\right)  -b\left(  x,s\right)  \right]
a\left(  y,s\right)  \mathrm{d}s+\int_{S}\left[  a\left(  y,s\right)
-b\left(  y,s\right)  \right]  b\left(  x,s\right)  \mathrm{d}s\\
&  =A_{1}\left(  x,y\right)  +A_{2}\left(  x,y\right)  .
\end{align*}
Consider the $L^{2}\left(  \mathrm{d}\mu\times\mathrm{d}\mu\right)  $ norm of
each $A_{1},A_{2}$:%
\begin{align*}
&  \int A_{1}\left(  x,y\right)  ^{2}\mathrm{d}\mu\left(  x\right)
\mathrm{d}\mu\left(  y\right) \\
&  \leq\int\int\left\{  \int_{S}\left\vert a\left(  x,s\right)  -b\left(
x,s\right)  \right\vert a\left(  y,s\right)  \mathrm{d}s\right\}
^{2}\mathrm{d}\mu\left(  x\right)  \mathrm{d}\mu\left(  y\right) \\
&  \leq\int\int\left[  \int_{S}\left\vert a\left(  x,s\right)  -b\left(
x,s\right)  \right\vert ^{2}\mathrm{d}s\right]  \left\{  \int_{S}a\left(
y,s\right)  ^{2}\mathrm{d}s\right\}  \mathrm{d}\mu\left(  x\right)
\mathrm{d}\mu\left(  y\right) \\
&  =\int\left[  \int_{S}\left\vert a\left(  x,s\right)  -b\left(  x,s\right)
\right\vert ^{2}\mathrm{d}s\right]  \mathrm{d}\mu\left(  x\right)
\int\left\{  \int_{S}a\left(  y,s\right)  ^{2}\mathrm{d}s\right\}
\mathrm{d}\mu\left(  y\right) \\
&  =\left\vert \left\vert \left\vert a-b\right\vert \right\vert \right\vert
\cdot\left\vert \left\vert \left\vert a\right\vert \right\vert \right\vert .
\end{align*}
Similarly,%
\[
\int A_{2}\left(  x,y\right)  ^{2}\mathrm{d}\mu\left(  x\right)  \mathrm{d}%
\mu\left(  y\right)  \leq\left\vert \left\vert \left\vert a-b\right\vert
\right\vert \right\vert \cdot\left\vert \left\vert \left\vert b\right\vert
\right\vert \right\vert .
\]

\end{proof}

\subsection{Auxiliary estimates for Jost solutions}

Let $m\left(  x,k\right)  =$e$^{-\mathrm{i}kx}f\left(  x,k\right)  $, where
$f\left(  x,k\right)  $ is the right Jost solution. Then \cite[page
130]{Deift79} $m\left(  x,k\right)  $ is uniformly bounded in
$\operatorname{Im}k\geq0$ for every real $x$ and for $x\geq0$
\begin{equation}
\left\vert m\left(  x,k\right)  -1\right\vert \leq\operatorname*{const}%
\cdot\frac{\int_{x}^{\infty}\left(  1+\left\vert s\right\vert \right)
\left\vert q\left(  s\right)  \right\vert \mathrm{d}s}{1+\left\vert
k\right\vert }, \label{DT1}%
\end{equation}%
\begin{equation}
\left\vert m^{\prime}\left(  x,k\right)  \right\vert \leq\operatorname*{const}%
\cdot\frac{\int_{x}^{\infty}\left\vert q\left(  s\right)  \right\vert
\mathrm{d}s}{1+\left\vert k\right\vert }, \label{DT2}%
\end{equation}
uniformly for $\operatorname{Im}k\geq0$.

\begin{lemma}
\label{lemma on Jost}Let $q\in L_{1}^{1}\left(  +\infty\right)  $ and
$f\left(  x,\mathrm{i}s\right)  =$e$^{-sx}m\left(  x,\mathrm{i}s\right)
,s\geq0,$ where $m$ is subject to (\ref{DT1}) and (\ref{DT2}). Suppose that a
finite measure $\sigma$ supported on a finite subset of $\mathbb{R}_{+}$
satisfies%
\[
\int\mathrm{d}\sigma\left(  s\right)  /s<\infty.
\]
Then%
\[
F\left(  x\right)  :=\int f\left(  x,\mathrm{i}s\right)  ^{2}\mathrm{d}%
\sigma\left(  s\right)  \in L^{1}\left(  +\infty\right)
\]
and $F\left(  x\right)  ^{2}$, $F^{\prime}\left(  x\right)  $ are both in
$L_{1}^{1}\left(  +\infty\right)  $.
\end{lemma}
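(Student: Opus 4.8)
The plan is to reduce the statement to elementary pointwise bounds on the real function $f(x,\mathrm{i}s)=\mathrm{e}^{-sx}m(x,\mathrm{i}s)$ and then integrate against $\mathrm{d}\sigma$, with all the difficulty concentrated at small $s$, which is precisely what the hypothesis $\int\mathrm{d}\sigma(s)/s<\infty$ is there to tame. Since membership in $L^{1}(+\infty)$ and $L_{1}^{1}(+\infty)$ only concerns the tail, I would first note that on any bounded interval the quantities below are continuous, hence bounded, in $x$, so it suffices to estimate the integrals over $[0,\infty)$, where (\ref{DT1}) and (\ref{DT2}) apply. Because $q\in L_{1}^{1}(+\infty)$, (\ref{DT1}) yields $|m(x,\mathrm{i}s)|\leq M$ uniformly in $x\geq0$, $s\geq0$, and (\ref{DT2}) yields $|m'(x,\mathrm{i}s)|\leq C\,Q_{+}(x)/(1+s)$, where $Q_{+}(x):=\int_{x}^{\infty}|q|$ is bounded by $Q_{+}(0)$ on $[0,\infty)$. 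Hence $f(x,\mathrm{i}s)^{2}\leq M^{2}\mathrm{e}^{-2sx}$ and $|f(x,\mathrm{i}s)f'(x,\mathrm{i}s)|\leq M^{2}s\,\mathrm{e}^{-2sx}+MC\,Q_{+}(x)\mathrm{e}^{-2sx}/(1+s)$, the two workhorses for everything that follows (for a signed $\sigma$ one simply replaces $\mathrm{d}\sigma$ by $|\mathrm{d}\sigma|$ in the estimates).

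For $F\in L^{1}(+\infty)$ I would use $|F(x)|\leq M^{2}\int\mathrm{e}^{-2sx}\mathrm{d}\sigma(s)$, integrate in $x$ by Tonelli, and land on $\int_{0}^{\infty}|F|\leq M^{2}\int\mathrm{d}\sigma(s)/(2s)<\infty$. For $F^{2}\in L_{1}^{1}(+\infty)$ I would square, write $\big(\int\mathrm{e}^{-2sx}\mathrm{d}\sigma\big)^{2}$ as a double integral, and use $\int_{0}^{\infty}(1+x)\mathrm{e}^{-2(s+s')x}\mathrm{d}x=\tfrac{1}{2(s+s')}+\tfrac{1}{4(s+s')^{2}}$; the elementary bounds $\frac{1}{s+s'}\leq\frac{1}{2\sqrt{ss'}}$ and $\frac{1}{(s+s')^{2}}\leq\frac{1}{4ss'}$ then collapse the double integral into $\big(\int\mathrm{d}\sigma/\sqrt{s}\big)^{2}$ and $\big(\int\mathrm{d}\sigma/s\big)^{2}$, both finite since $\int\mathrm{d}\sigma/s<\infty$ and $\sigma$ finite force $\int\mathrm{d}\sigma/\sqrt{s}<\infty$.

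The genuinely delicate part, which I expect to be the main obstacle, is $F'\in L_{1}^{1}(+\infty)$. Differentiating under the integral (legitimate by the domination recorded above) gives $F'(x)=2\int\mathrm{e}^{-2sx}(mm'-sm^{2})\mathrm{d}\sigma(s)$, which I would split into the $sm^{2}$ piece and the $mm'$ piece. The first is harmless, contributing $\int s\big(\tfrac{1}{2s}+\tfrac{1}{4s^{2}}\big)\mathrm{d}\sigma=\int\big(\tfrac{1}{2}+\tfrac{1}{4s}\big)\mathrm{d}\sigma<\infty$. The $mm'$ piece is the crux: bounding $Q_{+}(x)\leq Q_{+}(0)$ crudely would force $q\in L_{2}^{1}(+\infty)$, which we do not assume, so the slowly decaying factor $\mathrm{e}^{-2sx}$ near $s=0$ must be exploited. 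The key estimate I would establish is $\Phi(x):=\int\frac{\mathrm{e}^{-2sx}}{1+s}\mathrm{d}\sigma(s)\leq\int\mathrm{e}^{-2sx}\mathrm{d}\sigma(s)\leq\frac{1}{2x}\int\frac{\mathrm{d}\sigma(s)}{s}$, where the last step is just $\mathrm{e}^{-u}\leq1/u$. This buys one power of $x$ at the cost of the hypothesis $\int\mathrm{d}\sigma/s<\infty$, after which the $mm'$ contribution is controlled by $\int_{1}^{\infty}(1+x)Q_{+}(x)\Phi(x)\,\mathrm{d}x\lesssim\int_{1}^{\infty}Q_{+}(x)\,\mathrm{d}x=\int_{1}^{\infty}(\xi-1)|q(\xi)|\,\mathrm{d}\xi<\infty$ by $q\in L_{1}^{1}(+\infty)$, while the contribution of $[0,1]$ is bounded trivially using $\Phi(x)\leq\sigma(\mathbb{R}_{+})$ and $Q_{+}\leq Q_{+}(0)$. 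Collecting the three parts finishes the proof; the only real idea needed is the $1/x$ decay of $\Phi$.
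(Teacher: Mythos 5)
Your proof is correct, and its skeleton coincides with the paper's: both arguments reduce everything to the Laplace transform $F_{0}(x)=\int \mathrm{e}^{-2sx}\,\mathrm{d}\sigma(s)$ of the measure, apply Tonelli, and let the hypothesis $\int \mathrm{d}\sigma(s)/s<\infty$ absorb the singularity of $\int_{0}^{\infty}x^{j}\mathrm{e}^{-2sx}\,\mathrm{d}x=j!/(2s)^{j+1}$ at $s=0$. You diverge in two local places. For $F^{2}\in L_{1}^{1}\left(  +\infty\right)  $ the paper avoids your double integral by the one-line bound $\int_{0}^{\infty}xF_{0}(x)^{2}\,\mathrm{d}x\leq\sup_{x\geq0}\left[  xF_{0}\left(  x\right)  \right]  \cdot\int_{0}^{\infty}F_{0}(x)\,\mathrm{d}x$, exploiting that $F_{0}$ is monotone and integrable; your expansion of the square together with $1/(s+s^{\prime})\leq1/(2\sqrt{ss^{\prime}})$ is equally valid and perhaps more self-contained. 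For the cross term $\int\mathrm{e}^{-2sx}mm^{\prime}\,\mathrm{d}\sigma$ --- which you correctly single out as the only delicate point --- the paper extracts the needed decay from $m^{\prime}$, observing that $q\in L_{1}^{1}\left(  +\infty\right)  $ forces $Q_{+}(x):=\int_{x}^{\infty}\left\vert q\right\vert =o(1/x)$, so this term is $o(1/x)F_{0}(x)$ and its weighted integral is finite because $F_{0}\in L^{1}\left(  +\infty\right)  $; you instead extract the $1/x$ from the measure via $\mathrm{e}^{-2sx}\leq1/(2sx)$ and then need only $\int_{1}^{\infty}Q_{+}<\infty$. The two routes consume exactly the same pair of hypotheses, just distributed in the opposite order between the factor $m^{\prime}$ and the factor $\mathrm{e}^{-2sx}$, and both are sound.
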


\begin{proof}
Without loss of generality we may assume that d$\sigma\geq0$. Observe that
(\ref{DT1}) and (\ref{DT2}) imply $m\left(  x,k\right)  =1+o\left(  1\right)
$, $m^{\prime}\left(  x,k\right)  =o\left(  1/x\right)  $ as $x\rightarrow
\infty$ uniformly in $\operatorname{Im}k\geq0$. Therefore%
\begin{equation}
F\left(  x\right)  =\int e^{-2sx}m\left(  x,\mathrm{i}s\right)  ^{2}%
\mathrm{d}\sigma\left(  s\right)  =F_{0}\left(  x\right)  \cdot\left(
1+o\left(  1\right)  \right)  ,\ \ \ x\rightarrow\infty, \label{F}%
\end{equation}%
\begin{align}
F^{\prime}\left(  x\right)   &  =-2\int\mathrm{e}^{-2sx}m\left(
x,\mathrm{i}s\right)  ^{2}s\mathrm{d}\sigma\left(  s\right)  +2\int%
\mathrm{e}^{-2sx}m\left(  x,\mathrm{i}s\right)  m^{\prime}\left(
x,\mathrm{i}s\right)  \mathrm{d}\sigma\left(  s\right) \label{F'}\\
&  =F_{0}^{\prime}\left(  x\right)  \cdot\left(  1+o\left(  1\right)  \right)
+F_{0}\left(  x\right)  \cdot o\left(  1/x\right)  ,\ \ \ x\rightarrow
\infty,\nonumber
\end{align}
where%
\[
F_{0}\left(  x\right)  :=\int\mathrm{e}^{-2sx}\mathrm{d}\sigma\left(
s\right)  .
\]

We show first that $F_{0}\in L^{1}\left(  +\infty\right)  $. For any finite
$a$ we have%
\begin{align*}
\int_{a}^{\infty}F_{0}\left(  x\right)  \mathrm{d}x  &  =\int\left[  \int%
_{a}^{\infty}\mathrm{e}^{-2sx}\mathrm{d}x\right]  \mathrm{d}\sigma\left(
s\right) \\
&  =\int\mathrm{e}^{-2as}\frac{\mathrm{d}\sigma\left(  s\right)  }{2s}\leq
\max_{s\in\operatorname*{Supp}\sigma}\mathrm{e}^{-2as}\cdot\int\frac
{\mathrm{d}\sigma\left(  s\right)  }{2s}<\infty.
\end{align*}
Since $\operatorname*{Supp}\sigma\subset\mathbb{R}_{+}$ and $\sigma$ is
finite, it immediately follows that (1) $F_{0}\in L^{1}\left(  +\infty\right)
$ and (2)\ in the computations below we can set for simplicity $a=0$. Due to
(\ref{F}), we conclude that $F\in L^{1}\left(  +\infty\right)  $. Let us show
that $F^{2}\in L_{1}^{1}\left(  +\infty\right)  $. Indeed, due to (\ref{F})
again, it is enough to show that $F_{0}^{2}\in L_{1}^{1}\left(  +\infty
\right)  $:%
\[
\int_{0}^{\infty}xF_{0}^{2}\left(  x\right)  \mathrm{d}x\leq\max_{x\geq
0}\left[  xF_{0}\left(  x\right)  \right]  \int_{0}^{\infty}F_{0}\left(
x\right)  \mathrm{d}x<\infty,
\]
as $F_{0}$ is monotonic function from $L^{1}\left(  +\infty\right)  $.

We now show that%
\[
F_{0}^{\prime}\left(  x\right)  =-2\int\mathrm{e}^{-2sx}s\mathrm{d}%
\sigma\left(  s\right)  \in L_{1}^{1}\left(  +\infty\right)  .
\]
Indeed, integrating by parts, one has%
\begin{align*}
\int_{0}^{\infty}x\left\vert F_{0}^{\prime}\left(  x\right)  \right\vert
\mathrm{d}x  &  =2\int_{0}^{\infty}\left[  \int\mathrm{e}^{-2sx}%
s\mathrm{d}\sigma\left(  s\right)  \right]  x\mathrm{d}x\\
&  =2\int\left[  \int_{0}^{\infty}\mathrm{e}^{-2sx}x\mathrm{d}x\right]
s\mathrm{d}\sigma\left(  s\right) \\
&  =2\int\left(  \frac{1}{2s}\right)  ^{2}s\mathrm{d}\sigma\left(  s\right)
=\int\frac{\mathrm{d}\sigma\left(  s\right)  }{2s}<\infty.
\end{align*}
Therefore, it follows from (\ref{F'}) that $F_{0}^{\prime}\in L_{1}^{1}\left(
+\infty\right)  $.
\end{proof}

\section{Acknowledgment}

We are grateful to Oleg Safronov for some helpful insights. We are
particularly thankful to the referees for numerous comments and questions
leading to a substantial improvement of the paper.


\begin{thebibliography}{99}                                                                                               %
\bibitem {AC91}Ablowitz, M. J.; Clarkson, P. A. \emph{Solitons, nonlinear
evolution equations and inverse scattering.} London Mathematical Society
Lecture Note Series, 149. Cambridge University Press, Cambridge, 1991. xii+516 pp.

\bibitem {Aktosun06}Aktosun, Tuncay; van der Mee, Cornelis \emph{Explicit
solutions to the Korteweg-de Vries equation on the half line}. Inverse
Problems 22 (2006), no. 6, 2165--2174.

\bibitem {Babichetal85}Babich, M. V.; Matveev, V. B.; Salle, M. A. \emph{The
binary Darboux transform for the Toda lattice.} (Russian) Zap. Nauchn. Sem.
Leningrad. Otdel. Mat. Inst. Steklov. (LOMI) 145 (1985), Voprosy Kvant. Teor.
Polya i Statist. Fiz. 5, 34--45, 189, 193.

\bibitem {Ci2009}Cieslinski, J.L. \emph{Algebraic construction of the Darboux
matrix revisited}, J. Phys. A 42 (2009) (404003), 40 p.

\bibitem {Cohen1989}Cohen, Amy; Kappeler, Thomas \emph{Nonuniqueness for
solutions of the Korteweg-de Vries equation}. Trans. Amer. Math. Soc. 312
(1989), no. 2, 819--840.

\bibitem {Condi22}T. Congy, G. El, G. Roberti and A. Tovbis, \emph{Dispersive
hydrodynamics of soliton condensates for the KdV equation},\ arXiv:2208.04472 (2022).

\bibitem {CostaOsbornePRL14}Costa, A.; Osborne, Alfred R.; Resio, Donald T.;
Alessio, Silvia; Chriv\`{\i}, Elisabetta; Saggese, Enrica; Bellomo, Katinka;
and Long, Chuck E. (2014) \emph{Soliton Turbulence in Shallow Water Ocean
Surface Waves} Phys Rev Lett 113, 108501.

\bibitem {DeiftDuke78}Deift, P. \emph{Applications of a commutation formula.}
Duke Math. J. 45 (1978), (2): 267-310.

\bibitem {Deift79}Deift, P.; Trubowitz, E. \emph{Inverse scattering on the
line.} Comm. Pure Appl. Math. 32 (1979), no. 2, 121--251.

\bibitem {ZakharovetalPhysD2016}Dyachenko, S.A; Zakharov, D.; Zakharov, V.E.
\emph{Primitive potentials and bounded solutions of the KdV equation}. Physica
D 333 (2016), 148--156.

\bibitem {GGKM67}Gardner, C. S.; Greene, J. M.; Kruskal, M. D.; and Miura, R.
M. \emph{Method for Solving the Korteweg-de Vries Equation}. Phys. Rev. Lett.
19 (1967), 1095--1097.

\bibitem {Egorova2022}Egorova, Iryna; Michor, Johanna; Teschl, Gerald
\emph{Soliton asymptotics for the KdV shock problem via classical inverse
scattering.} J. Math. Anal. Appl. 514 (2022), no. 1, Paper No. 126251, 24 pp.

\bibitem {El2016}El, G.A. \emph{Critical density of a soliton gas }Chaos 26
(2016), 023105.

\bibitem {El16}El, G. A. and Hoefer, M.A. \emph{ Dispersive shock waves and
modulation theory}, Physica D, 333 (2016), 11-65.

\bibitem {El020}El, G. A. and Tovbis, A. \emph{Spectral theory of soliton and
breather gases for the focusingnonlinear Schr\"{o}dinger equation}%
,\textquotedblright\ Phys. Rev. E, 101 (2020), 052207.

\bibitem {El21}El, G.A. \emph{Soliton gas in integrable dispersive
hydrodynamics}, J. Stat. Mech.: Theor. Exp. (2021), 114001.

\bibitem {ElKom05}El, G.A., Kamchatnov, A.M. (2005) \emph{Kinetic equation for
a dense soliton gas}. Phys. Rev. Lett. 95, 204101.

\bibitem {El-et-al2011}El, G. A.; Kamchatnov, A. M.; Pavlov, M. V.; Zykov, S.
A. (2011) \emph{Kinetic equation for a soliton gas and its hydrodynamic
reductions.} J. Nonlinear Sci. 21, no. 2, 151--191.

\bibitem {GelfandLevitan55}Gelfand, I. M.; Levitan, B. M. \emph{On the
determination of a differential equation from its spectral function. }Amer.
Math. Soc. Transl. (2) 1 (1955), 253--304.

\bibitem {GesztesyetalTAMS91}Gesztesy, F.; Schweiger, W.; Simon, B.
\emph{Commutation methods applied to the mKdV-equation}, Trans. Amer. Math.
Soc. 324 (1991), 465--525.

\bibitem {Gesztesy-Duke92}Gesztesy, F.; Karwowski, W.; and Zhao, Z.,
\emph{Limit of Soliton Solutions.} Duke Math. J. 68 (1992), 101 150.

\bibitem {GesztesyJFA93}Gesztesy, F. \emph{A complete spectral
characterization of the double commutation method}, J. Funct. Anal. 117
(1993), 401--446.

\bibitem {GestSvirsky95}Gesztesy, F.; Svirsky, R. \emph{(m)KdV-solitons on the
background of quasi-periodic finite-gap solutions}, Memoirs Amer. Math. Soc.
118 (1995), No. 563.

\bibitem {Gesztesyetal96}Gesztesy, F.; Simon, B.; Teschl, G. \emph{Spectral
deformations of one-dimensional Schr\"{o}dinger operators}, J. Analyse Math.
70 (1996), 267--324.

\bibitem {GesztTeschl96}Gesztesy, F.; Teschl, G. \emph{On the double
commutation method}, Proc. Amer. Math. Soc. 124 (1996), 1831--1840.

\bibitem {GesztezyNowell96}Gesztesy, F.; Nowell, R.; P\"{o}tz, W.
\emph{One-dimensional scattering theory for quantum systems with nontrivial
spatial asymptotics}. Differential Integral Equations 10 (1997), no. 3, 521--546.

\bibitem {Grava21}Girotti, M.; Grava, T.; Jenkins, R.; McLaughlin, K. D. T.-R.
\emph{Rigorous asymptotics of a KdV soliton gas}. Comm. Math. Phys. 384
(2021), no. 2, 733--784.

\bibitem {GruRybSIMA15}Grudsky, S.; Rybkin, A. \emph{Soliton theory and Hakel
operators}, SIAM J. Math. Anal. 47 (2015), no. 3, 2283--2323.

\bibitem {GryRybBLMS20}Grudsky, S.; Rybkin, A. \emph{On classical solution to
the KdV equation}, Proc. London Math. Soc. (3) 121 (2020), 354--371.

\bibitem {GT09}Grunert, Katrin; Teschl, Gerald \emph{Long-time asymptotics for
the Korteweg-de Vries equation via nonlinear steepest descent}. Math. Phys.
Anal. Geom. 12 (2009), no. 3, 287--324.

\bibitem {GuetalBook05}Gu, Chaohao; Hu, Hesheng; Zhou, Zixiang. \emph{Darboux
transformations in integrable systems}. Theory and their applications to
geometry. Mathematical Physics Studies, 26. Springer, Dordrecht, 2005. x+308
pp. ISBN: 1-4020-3087-8.

\bibitem {Hruslov76}Hruslov, \={E}. Ja. \emph{Asymptotic behavior of the
solution of the Cauchy problem for the Korteweg-de Vries equation with
steplike initial data}. Mat. Sb. (N.S.) 99(141) (1976), no. 2, 261--281, 296.

\bibitem {KK94}Khruslov, E. Ya.; Kotlyarov, V. P. \emph{Soliton asymptotics of
nondecreasing solutions of nonlinear completely integrable evolution
equations.} Spectral operator theory and related topics, 129--180, Adv. Soviet
Math., 19, Amer. Math. Soc., Providence, RI, 1994.

\bibitem {Killip19}Killip, Rowan; Vi\c{s}an, Monica \emph{KdV is well-posed in
}$H^{-1}$. Ann. of Math. (2) 190 (2019), no. 1, 249--305.

\bibitem {Koosis}Koosis, Paul \emph{Introduction to Hp spaces}. Second
edition. With two appendices by V. P. Havin [Viktor Petrovich Khavin].
Cambridge Tracts in Mathematics, 115. Cambridge University Press, Cambridge,
1998. xiv+289 pp. ISBN: 0-521-45521-9.

\bibitem {Laurens22}Laurens, Thierry \emph{KdV on an incoming tide.}
Nonlinearity 35 (2022), no. 1, 343--387.

\bibitem {Levitan87}Levitan, B.M. \emph{Inverse Sturm-Liouville Problems.}
VNU\ Science Press, Utrecht, The Netherlands, 1987.

\bibitem {Ze Li21}Li, Ze Analytic smoothing estimates for the Korteweg--de
Vries equation with steplike data. Nonlinearity 34 (2021), no. 7, 5070--5118.

\bibitem {LingetalNON2015}Ling, Liming; Zhao, Li-Chen and Guo, Boling.
\emph{Darboux transformation and multi-dark soliton for N-component nonlinear
Schr\"{o}dinger equations}, Nonlinearity 28 (2015), 3243--3261.

\bibitem {Lundina85}Lundina, D. Sh. \emph{Compactness of sets of
reflectionless potentials. }(Russian) Teor. Funktsi\u{\i} Funktsional. Anal. i
Prilozhen. No. 44 (1985), 57--66; translation in J. Soviet Math. 48 (1990),
no. 3, 290--297

\bibitem {Ma05}Ma, Wen-Xiu; You, Yuncheng \emph{Solving the Korteweg-de Vries
equation by its bilinear form: Wronskian solutions.} Trans. Amer. Math. Soc.
357 (2005), no. 5, 1753--1778.

\bibitem {MarchBook2011}Marchenko, Vladimir A. \emph{Sturm-Liouville operators
and applications. Revised edition}. AMS Chelsea Publishing, Providence, RI,
2011. xiv+396 pp.

\bibitem {Marchenko91}Marchenko, V. A. \emph{The Cauchy problem for the KdV
equation with nondecreasing initial data.} What is integrability? Springer
Ser. Nonlinear Dynam., Springer, Berlin, 1991, 273--318.

\bibitem {MatveevSalle91}Matveev, V. B.; Salle, M. A. \emph{Darboux
transformations and solitons}. Springer Series in Nonlinear Dynamics.
Springer-Verlag, Berlin, 1991. x+120 pp. ISBN: 3-540-50660-8.

\bibitem {NabelikZakharov2019}Nabelek, Patrik; Zakharov, Dmitry; Zakharov,
Vladimir \emph{On symmetric primitive potentials}, J. Integrable Syst. 4
(2019), no. 1, 20 pp.

\bibitem {NPZ}Novikov, S.; Manakov, S. V.; Pitaevski\u{\i}, L. P.; Zakharov,
V. E. Theory of solitons. \emph{The inverse scattering method}. Translated
from the Russian. Contemporary Soviet Mathematics. Consultants Bureau
[Plenum], New York, 1984.

\bibitem {RedoretalPRL19}Redor, Ivan; Barth\'{e}lemy, Eric; Michallet,
Herv\'{e}; Onorato, Miguel; and Mordant, Nicolas \emph{Experimental Evidence
of a Hydrodynamic Soliton Gas. }Phys. Rev. Lett. 122 (2019), 214502.

\bibitem {RemlingCMP2015}Remling, Christian \emph{Generalized reflection
coefficients}. Comm. Math. Phys. 337 (2015), no. 2, 1011--1026.

\bibitem {RybNON10}Rybkin, Alexei \emph{Meromorphic solutions to the KdV
equation with non-decaying initial data supported on a left half line.}
Nonlinearity 23 (2010), no. 5, 1143--1167.

\bibitem {Ryb21}Rybkin, Alexei; \emph{The binary Darboux transformation
revisited and KdV solitons on arbitrary short-range backgrounds. }Stud. Appl.
Math. 148 (2022), no. 1, 141--153.

\bibitem {RybPosBS}Rybkin, Alexei \emph{The effect of a positive bound state
on the KdV solution: a case study}, Nonlinearity 34 (2021), vol.2, 1238--1261.

\bibitem {Sahknovich17}Sakhnovich, Alexander \emph{Hamiltonian systems and
Sturm-Liouville equations: Darboux transformation and applications.} Integral
Equations Operator Theory 88 (2017), no. 4, 535--557.

\bibitem {SimonTraceIdeals}Simon, Barry \emph{Trace ideals and their
applications}. Second edition. Mathematical Surveys and Monographs, 120.
American Mathematical Society, Providence, RI, 2005. viii+150 pp. ISBN: 0-8218-3581-5.

\bibitem {Venak86}Venakides, Stephanos \emph{Long time asymptotics of the
Korteweg - de Vries equation}. Trans. Amer. Math. Soc. 293 (1986), no. 1, 411--419.

\bibitem {Zakharov(soliton gas)71}Zakharov, V. E. \emph{Hamilton formalism for
hydrodynamic plasma models}. Soviet Physics JETP 33, 927--932; translated from
\v{Z}. \`{E}ksper. Teoret. Fiz. 60 (1971), 1714--1726.

\bibitem {ZakhMan85}Zakharov, V., Manakov, S. \emph{Construction of
multidimensional nonlinear integrable systems and their solutions.} Funkt.
Anal. Pril. 19(2) (1985), 11--25.

\bibitem {Zakharov2009}Zakharov, V.E. \emph{Turbulence in integrable systems}.
Stud. Appl. Math. 122 (2009), no. 3, 219--234.
\end{thebibliography}
\end{document}